\newtheorem{lemma}{Lemma}
\newtheorem{proposition}{Proposition}
\newtheorem{theorem}{Theorem}
\newtheorem{corollary}{Corollary}
\theoremstyle{definition}
\newtheorem{remark}{Remark}
\newcommand{\kb}{{}^\kappa \mathcal{Q}}
\newcommand{\kqm}{{}^\kappa\mathcal{Q}}
\newcommand{\kqa}{{}\mathcal{Q}}
\newcommand{\kq}{{}^\kappa Q}
\newcommand{\FFD}[1]{{\color{red} #1}} 
\newcommand{\commentout}[1]{}
\newcommand{\diam}{\mathop{\rm diam} }
\newcommand{\rad}{\mathop{\rm rad} }
\newcommand{\ecc}{\mathop{\rm ecc} }
\begin{document}


\medskip

\vspace{1.2cm}
\centerline{\Large\bf  Core congestion is inherent in hyperbolic networks}

\vspace{10mm}
\centerline{Victor Chepoi$^{\small 1}$, Feodor F. Dragan$^{\small 2}$, and Yann Vax\`es$^{\small 1}$}

\medskip
\begin{small}
\medskip
\centerline{$^{1}$Laboratoire d'Informatique Fondamentale, Aix-Marseille Univ. and CNRS,}
\centerline{Facult\'e des Sciences de Luminy, F-13288 Marseille Cedex 9, France}

\centerline{\texttt{\{victor.chepoi, yann.vaxes\}@lif.univ-mrs.fr}}

\medskip
\centerline{$^{2}$Computer Science Department, Kent State University,}
\centerline{Kent, OH 44242, USA}
\centerline{\texttt{dragan@cs.kent.edu}}
\end{small}

\bigskip\noindent
{\footnotesize {\bf Abstract.}  We investigate the impact the negative curvature has on the traffic congestion in large-scale networks. We prove that every Gromov hyperbolic network $G$ admits a core, thus answering in the positive
a conjecture by Jonckheere, Lou, Bonahon, and Baryshnikov, Internet Mathematics, 7 (2011) which is based on the experimental observation by
Narayan and Saniee, Physical Review E, 84 (2011)  that real-world networks with small hyperbolicity have a core congestion. Namely, we prove that for every subset $X$  of 
vertices of
a graph with $\delta$-thin geodesic triangles (in particular, of a $\delta$-hyperbolic graph) $G$ there exists a vertex $m$ of $G$ such that the ball $B(m,4 \delta)$ of radius $4 \delta$ centered at $m$ intercepts at least one half of the
total flow between all pairs of vertices of $X$, where the flow between two vertices $x,y\in X$ is carried by geodesic (or quasi-geodesic) $(x,y)$-paths.  Moreover,
we prove a primal-dual result showing that, for any commodity graph $R$ on $X$ and any $r\ge 8\delta,$ the
size $\sigma_r(R)$  of the least  $r$-multi-core (i.e., the number of balls of radius $r$) intercepting all pairs of $R$ is upper bounded by the maximum  number of pairwise $(2r-5\delta)$-apart pairs of $R$ and that an $r$-multi-core of size $\sigma_{r-5\delta}(R)$ can be computed in polynomial time.

Our result about total $r$-multi-cores is based on a Helly-type theorem for quasiconvex sets in $\delta$-hyperbolic graphs (this is our second main result).
Namely,  we show that  for any finite collection $\mathcal Q$ of pairwise intersecting $\epsilon$-quasiconvex sets of a $\delta$-hyperbolic graph $G$ there exists a single ball $B(c,2\epsilon+5\delta)$ intersecting all sets of $\mathcal Q$. More generally, we prove that if $\mathcal Q$ is a collection of $2r$-close (i.e., any two sets of $\mathcal Q$ are at distance $\le 2r$) $\epsilon$-quasiconvex sets of a $\delta$-hyperbolic graph $G$, then there exists a ball $B(c,r^*)$ of radius $r^*:=\max\{ 2\epsilon+5\delta, r+\epsilon+3\delta\}$ intersecting all sets of $\mathcal Q$. These kind of Helly-type results are also useful in geometric group theory.

Using the Helly theorem for quasiconvex sets and  a primal-dual approach, we show algorithmically that the minimum number of balls of radius $2\epsilon+5\delta$ intersecting all sets of a family $\mathcal Q$ of $\epsilon$-quasiconvex sets does not exceed the packing number of $\mathcal Q$ (maximum number of pairwise disjoint sets of $\mathcal Q$).  We extend the covering and packing result to set-families $\kb$ in which each set  is a union of at most $\kappa$ $\epsilon$-quasiconvex sets of a $\delta$-hyperbolic graph $G$. Namely, we show that if $r\ge \epsilon+2\delta$ and $\pi_r(\kb)$ is the maximum number of mutually $2r$-apart members of $\kb$, then the minimum number of balls of radius $r+2\epsilon+6\delta$ intersecting all members of $\kb$ is at most $2\kappa^2\pi_r(\kb)$ and such a hitting set  and a packing can be constructed in polynomial time (this is our third main result). For set-families consisting of unions of $\kappa$ balls in $\delta$-hyperbolic graphs a similar result was obtained by Chepoi and Estellon (2007). In case of $\delta=0$ (trees) and $\epsilon=r=0,$ (subtrees of a tree) we recover the result of Alon (2002) about the transversal and packing numbers of a set-family in which each set is a union of at most $\kappa$ subtrees of a tree.
}


\section{Introduction}
Understanding key structural properties of large-scale data networks is crucial for analyzing and
optimizing their performance, as well as improving their reliability and security. 
In prior empirical  and theoretical studies researchers have mainly focused on features such as  small world phenomenon, power law degree distribution, navigability, and high clustering coefficients (see
\cite{barabasi99emergence,barabasi2000scalefree,Boguna2009,DBLP:journals/im/ChungL03,DBLP:conf/sigcomm/FaloutsosFF99,DBLP:conf/stoc/Kleinberg00,DBLP:conf/nips/Kleinberg01,DBLP:journals/im/LeskovecLDM09,Watts-Colective-1998}). Those nice features
were observed 
in many real-world complex networks and their underlying
graphs arising in Internet applications, in biological and social sciences, and in chemistry and physics. Although those features are interesting and important, as noted in \cite{NaSa}, the impact of intrinsic geometric and topological features of large-scale data networks on performance, reliability and security is of much greater importance.

Recently, there has been a surge of empirical works measuring and analyzing geometric characteristics of real-world networks, namely the
{\em hyperbo\-licity} (sometimes called also the {\em negative curvature}) of the network (see, e.g., \cite{{MAA-Dr-16},DBLP:conf/icdm/AdcockSM13,conf/isaac/ChenFHM12,Kennedy2013Arch,conf/nca/MontgolfierSV11,NaSa,DBLP:journals/ton/ShavittT08}). It has been shown that a number of data networks, including Internet application networks, web networks, collaboration networks, social networks, and others, have small hyperbolicity.
It has been suggested (see \cite{JoLoBoBa,NaSa}) that the property, observed in real-world networks,  in which traffic between vertices (nodes) tends to go through a
relatively small core of the network, as if the shortest path between them is curved inwards, may be due to global curvature
of the network.

In this paper, we prove that any finite subset $X$ of vertices in a locally finite $\delta$-hyperbolic graph $G$ admits a core, namely there exists a vertex $m$ of $G$ such that the ball $B(m,4 \delta)$ centered at $m$ of radius $4 \delta$ intersects all geodesics (shortest paths) between at least one half of all pairs of vertices of $X$. This solves in the positive and in the stronger  form the first part of Conjecture 1 of \cite{JoLoBoBa}, asserting: {\it ``Consider a large but finite negatively curved graph $G$, subject  to the uniformly distributed demand. Then there are very few nodes $v$ that have very high traffic rate...''}. This phenomenon was observed experimentally in \cite{NaSa} in some real-world networks with small hyperbolicity. On the other hand, we show that the vertex $m$  is not a center of mass as conjectured in \cite{JoLoBoBa} ({\it ``...furthermore, the vertices of highest traffic rate are in a small neighborhood of the vertices of minimum inertia''}) but is a vertex of $G$ close to a median point of $X$ in the injective hull of $G$. This confirms the experimental observation of \cite{NaSa} that {\it ``... the core is close to the geometric center, defined as the node whose average (geodesic) distance to all other nodes in the graph is the smallest.''} Notice also that the authors of \cite{JoLoBoBa} established their conjecture for a particular case of graphs that are quasi-isometric to the balls  of the $n$-dimensional hyperbolic space ${\mathbb H}^n$.

We also consider the case of non-uniform traffic between vertices of $X$. In this case, a unit demand of flow exists only between certain pairs of vertices of $X$ defined by a
commodity graph $R$; as in the previous case, the traffic between any pair of vertices defining an edge of $R$ is evenly distributed
over all geodesics connecting them. We prove a primal-dual result showing that for any $r\ge 8\delta$ the size of an $r$-multi-core (i.e., the number of balls of radius $r$) intercepting all pairs of $R$  is
upper bounded by the maximum  number of pairwise $(2r-5\delta)$-apart pairs of $R$. Finally, if $R$ consists of all mutually distant vertex pairs of a finite
$\delta$-hyperbolic graph $G$, then a single ball $B(m,2\delta)$ of radius $2\delta$ intercepts all pairs of $R$.

The proofs of all our results about cores implicitly or explicitly use various Helly type properties for balls, geodesics,  and intervals in $\delta$-hyperbolic graphs.  For example, the proof of our main result about existence of cores is based on the fact that, for any metric space $(X,d),$ there exists the smallest hyperconvex space $E(X)$ (i.e., geodesic metric space in which balls satisfy the Helly property) into which $(X,d)$ isometrically embeds; $E(X)$ is called the injective hull of $X$ \cite{Dr,Is}. We use a result of Lang \cite{La} asserting that if $(X,d)$ is $\delta$-hyperbolic, then $E(X)$ is also $\delta$-hyperbolic and if, in addition, $X$ is geodesic or a graph, then any point of $E(X)$ is within distance $\delta$ from some point of $X$. This last result is also a consequence of the Helly property for balls establishes in \cite{ChEs}.

The second main result of our paper is a general Helly-type theorem for quasiconvex sets in $\delta$-hyperbolic graphs, extending similar results for balls, geodesics, and intervals.
Namely,  we show that  for any finite collection $\mathcal Q$ of pairwise intersecting $\epsilon$-quasiconvex sets of a $\delta$-hyperbolic graph $G$ there exists a single ball $B^*$ of radius $2\epsilon+5\delta$
intersecting all sets of $\mathcal Q$. More generally, we prove that if $\mathcal Q$ is a collection of $2r$-close (i.e., any two sets of $\mathcal Q$ are at distance $\le 2r$) $\epsilon$-quasiconvex sets of a $\delta$-hyperbolic graph $G$, then there exists a ball $B^*$ of radius $r^*:=\max\{ 2\epsilon+5\delta, r+\epsilon+3\delta\}$ intersecting all sets of $\mathcal Q$. Niblo and Reeves \cite[Lemma 7]{NiRe} and implicitly Sageev \cite{Sa_co}   established this kind of Helly-type property for $\epsilon$-quasiconvex sets in $\delta$-hyperbolic graphs (see also \cite[Proposition 7.7]{HrWi} for a generalization to relatively hyperbolic groups), but in their result the radius of the ball $B^*$ hitting the sets of $\mathcal Q$ depends also on the number of sets in $\mathcal Q$. This statement plays a fundamental role in the cubulation process in proving the cocompactness of the cube complex associated with a finite set of quasiconvex codimension-1 subgroups \cite{HrWi,NiRe,Sa_co}. The Helly property for balls proved in \cite{ChEs} is also important in the dismantlability and cop-and-robber game characterizations of hyperbolic graphs established in \cite{ChChPaPe}.

 Using the Helly theorem for quasiconvex sets and  a primal-dual approach, we show algorithmically that the minimum number of balls of radius $2\epsilon+5\delta$ intersecting all sets of a family $\mathcal Q$ of $\epsilon$-quasiconvex sets does not exceed the packing number of $\mathcal Q$ (maximum number of pairwise disjoint sets of $\mathcal Q$). The Helly property for geodesics and intervals is used to establish the existence of total beam cores and the covering and packing result is used in the computation of total multi-cores. Then we extend the covering and packing result from set-families  $\mathcal Q$ consisting of quasiconvex sets  to set-families $\kb$ in which each set  is a union of at most $\kappa$ $\epsilon$-quasiconvex sets of a $\delta$-hyperbolic graph $G$.  Namely, we show that if $r\ge \epsilon+2\delta$ and $\pi_r(\kb)$ is the maximum number of mutually $2r$-apart members of $\kb$, then the minimum number of balls of radius $r+2\epsilon+6\delta$ intersecting all members of $\kb$ is at most $2\kappa^2\pi_r(\kb)$ and such a hitting set  and a packing can be constructed in polynomial time (this is our third main result). For set-families consisting of unions of $\kappa$ balls in $\delta$-hyperbolic graphs a similar result was obtained in \cite{ChEs} (and we closely follow the local-ratio proof-techniques of \cite{ChEs} and \cite{BYHaNaShSh}). In case of $\delta=0$ (trees) and $\epsilon=0$ (subtrees of a tree) we recover the result of Alon \cite{Al} about the transversal and packing numbers of a set-family in which each set is a union of at most $\kappa$ subtrees of a tree (for intervals of a line a similar inequality was proved in \cite{Al1,BYHaNaShSh}). Thus our result can be viewed as a far-reaching generalization of the result of \cite{Al} in which trees are replaced by hyperbolic graphs and subtrees by quasiconvex subgraphs.

%

\section{Preliminaries}

\subsection{Graphs}

All graphs $G=(V,E)$ occurring in this paper are undirected,
connected, without loops or multiple edges, but not necessarily finite. 
For a subset $A\subseteq V,$ the subgraph of $G=(V,E)$  {\it induced by} $A$
is the graph $G(A)=(A,E')$ such that $uv\in E'$ if and only if $u,v\in A$ and $uv\in E$.
The {\it distance} $d(u,v):=d_G(u,v)$
between two vertices $u$ and $v$ of  $G$ is the length (number of
edges) of a $(u,v)$-{\it geodesic}, i.e., a shortest $(u,v)$-path. For a vertex $v$ of $G$ and an
integer $r\ge 0$, we will denote  by $B(v,r)$ the \emph{ball} in $G$
of radius $r$ centered at  $v$, i.e.,
$B(v,r)=\{ x\in V: d(v,x)\le r\}.$  The {\it interval}
$I(u,v)$ between $u$ and $v$ consists of all vertices on
$(u,v)$-geodesics, that is, of all vertices (metrically) {\it between} $u$
and $v$:
$$I(u,v)=\{ x\in V: d(u,x)+d(x,v)=d(u,v)\}.$$
Let $d(X,Y)=\min \{ d(x,y): x\in X,y\in Y\}$ denote the distance between two subsets $X,Y$ of vertices of $G$. We will say that two sets $X$ and $Y$
are $r$-{\it close} if $d(X,Y)\le r$ and that  $X$ and $Y$ are $r$-{\it apart} if $d(X,Y)>r$. In particular, two
intersecting sets are $0$-close.

We will call any finite subset $X$ of vertices of a graph $G$ a {\it profile}. Given a profile $X$,
any vertex $v$ of $G$ minimizing the distance sum $\Psi_X(v):=\sum_{x\in X} d(v,x)$ is called a {\it median vertex} of $X$.
Analogously, any vertex $v$ of $G$ minimizing the sum $\Phi_X(v):=\sum_{x\in X} d^2(v,x)$ is called a {\it center of mass}
or a {\it centroid} of $X$.

Given a finite set $X$ of vertices of a graph $G$, the
{\em diameter} $\diam(X)$ of $X$ is the maximum distance between any two vertices of $X$.
A {\em diametral pair} of $X$ is any pair of vertices $x, y\in X$ such that $d(x, y) = \diam(X)$.
For a vertex $x$ of a graph $G$ of finite diameter, the set $P(x)$ of {\em furthest neighbors} of
$x$ (or of {\it peripheral  with respect} to $x$ vertices)
consists of all vertices of $G$ located at the maximum distance
from $x$.  The {\em eccentricity} $\ecc(x)$ of a vertex  $x$ is the distance from $x$ to any vertex of $P(x)$.
The {\em center} $C(G)$ of $G$ is the set of all vertices of $G$ having minimum
eccentricity; the vertices of $C(G)$ are called central vertices. The radius $\rad(G)$ of $G$ is the eccentricity of
its central vertices. A geodesic $[x,y]$ between two vertices $x,y$ such that $y\in P(x)$ is called a {\em beam} and $\{ x,y\}$ is called a
{\it beam pair} of $G$). Two vertices $x,y$ of a graph $G$ are called
{\em mutually distant} if $x\in P(y)$ and $y\in P(x)$. 

Given  a graph $G=(V,E)$, a subset $X$ of vertices of $G$, and a set of pairs $F\subseteq X\times X$, analogously
to the multicommodity flow  terminology (see also the next subsection), the pair $R=(X,F)$ will be called a
{\it commodity graph}. 

\subsection{Cores}

We say that a ball $B(v,r)$ {\it intercepts} a geodesic $[x,y]$ of $G$ if $B(v,r)\cap [x,y]\neq \varnothing$. More generally, we will say that a
ball $B(v,r)$ {\it intercepts} a pair of vertices $x,y$ if $B(v,r)$ intercepts all geodesics $[x,y]$ between $x$ and $y$. 

Given $0<\alpha\le 1$ and $r\ge 0$, we will say that a graph $G$ has an $(\alpha, r)$-{\it core} if for any profile $X$ in $G$
there exists a vertex $m:=m(X)$ such that the ball $B(m,r)$ intercepts strictly more than the fraction of $\alpha$ of all
pairs of $X$,  i.e., there exist more than $\frac{\alpha|X|(|X|-1)}{2}$ pairs $\{ x,y\}$ of $X$ intercepted by $B(m,r)$.

Given an integer $r\ge 0$, we will say that a graph $G$ admits a {\it total beam $r$-core} if
there exists a ball $B(v,r)$ of radius $r$ intercepting all beam pairs of $G$. More generally, given a graph $G=(V,E)$, an integer $r\ge 0$,
and a commodity graph $R=(X,F)$ with a profile $X\subseteq V$,
we will say  that $R=(X,F)$ has a {\it total $r$-multi-core of size $k$} if all pairs of $F$ can be intercepted with $k$ balls of radius $r$.
This last definition of multi-core corresponds to the model in which the traffic is not uniform but is performed only among the pairs  of vertices
of $X$ defined by the commodity graph $R$. We will denote by $\sigma_r(R)$ the least integer $k$ such that the commodity graph $R=(X,F)$ has a total
$r$-multi-core of size $k$.

\subsection{Traffic metrics and cores}

Following \cite{JoLoBoBa}, let us consider a network in which the traffic is driven by a demand measure $\Lambda_d:V\times V \leftarrow \mathbb{R}^+,$ where the demand $\Lambda_d(s,t)$ is the traffic rate (e.g. the number of packets per second) to be transmitted from the source $s$ to the destination target $t.$ Assume that the routing protocol sends packets from source $s$ to target $t$ along the geodesic $[s,t]$ with probability $Pr([s,t]).$ It is 
customary as a load balancing strategy to randomize the Dijkstra algorithm so as to distribute the traffic more evenly. Under this scheme, the geodesic $[s,t]$ inherits a traffic  rate measure $\mu([s,t]) := \Lambda_d(s,t) Pr([s,t]).$ A subset of $S$ of vertices crossed by a path $[s,t]$ inherits from that path a traffic $\mu([s,t]).$ Aggregating this traffic over all source-target pairs and all geodesics traversing $S$, yields the traffic rate sustained by the subset $S$:

$$\mu(S) := \sum_{(s,t)\in V\times V}\sum_{[s,t]\cap S \neq \varnothing} \mu([s,t]).$$

In this paper, we will consider both uniform and non uniform traffic. In case of uniform traffic,
we show that in any $\delta$-hyperbolic network $G$, there exists a ball of radius $O(\delta)$ that has an  extremely high traffic load in the sense that the majority of the traffic passes through this ball. In case of non uniform traffic, we consider a family of geodesics on which the traffic is sent and show that the minimum number of balls of radius $O(\delta)$ needed to collectively intercepts all these geodesics is bounded by the maximum number of pairwise $O(\delta)$-apart geodesics in this family. 

\subsection{Hitting and packing problems}

The hitting and packing problems are classical problems in computer
science and combinatorics. Let $\mathcal S$ be  a finite  collection
of subsets of a domain $V$. A subset $T$ of $V$ is called a
{\it hitting set}  of $\mathcal S$ if $T\cap S\ne\varnothing$
for any $S\in {\mathcal S}.$ The {\it
minimum hitting set problem} asks to find a hitting set of $\mathcal
S$ of smallest cardinality $\tau ({\mathcal S})$. The {\it set packing problem}
(dual to the hitting set problem) asks to find a maximum
number $\pi({\mathcal S})$ of pairwise disjoint subsets of $\mathcal
S$. We will call $\tau ({\mathcal S})$ and $\pi({\mathcal S})$
the {\it transversal}
  (or  {\it hitting}) and the {\it packing numbers}
of $\mathcal S$. Obviously, the inequality $\tau ({\mathcal S})\ge \pi({\mathcal S})$ holds for any
set-family $\mathcal S$.

In this paper, the domain $V$ is the set of vertices of a
connected graph $G=(V,E)$ or the set of points of a metric space $(V,d)$. 
It this case, we can formulate the following relaxed hitting set problem. 
For $r\ge 0$, the $r$-{\it neighborhood} of $S$ is the set
$N_{r}(S):=\bigcup_{v\in S} B_r(v)$. For a collection of sets $\mathcal S$, let
${\mathcal S}_{r}:=\{ N_{r}(S): S\in {\mathcal S}\}$; we will sometime refer to
${\mathcal S}_{r}$ as to the $r$-{\it inflation} of the collection $\mathcal S$.
For $r\ge 0$, a subset $T$ of $V$ is called an
{\it $r$-hitting set}  of $\mathcal S$ if for any $S\in {\mathcal S}$ there exists $t\in T$
such that $B(t,r)\cap S\ne \varnothing$.
The {\it minimum $r$-hitting set problem} asks to find an $r$-hitting set of $\mathcal
S$ of smallest cardinality $\tau_{r}({\mathcal S}).$ Notice that
$\tau_{r}({\mathcal S})=\tau({\mathcal S}_{r}).$  Analogously, a subfamily ${\mathcal P}$ of $\mathcal S$ is called an $r$-{\it packing}
if $N_r(S)\cap N_r(S') = \varnothing$ for any $S,S'\in {\mathcal P}$, i.e.,  if any two sets of $\mathcal P$ are $2r$-apart.  We will be interested in set-families
$\mathcal S$ such that for any $r\ge 0$ and for some constant $\alpha$ not depending on the family $\mathcal S$,
$\tau({\mathcal S}_{r+\alpha})$ is upper bounded  by $\pi({\mathcal S}_{r})$. In our case, if $\mathcal S$ is
a collection of $\epsilon$-quasiconvex sets of a $\delta$-hyperbolic graph $G$, then $\alpha$ will be a constant
depending only on $\epsilon$ and $\delta$.

\section{Gromov hyperbolicity}

\subsection{Definition, characterizations, and properties}
Let $(X,d)$ be a metric space and $w\in X$. The {\it Gromov product} of $y,z\in X$ with respect to $w$ is defined to be
$$(y|z)_w=\frac{1}{2}(d(y,w)+d(z,w)-d(y,z)).$$
Let $\delta\ge 0$. A metric space $(X,d)$ is said to be $\delta$-{\it hyperbolic} \cite{Gr} if
$$(x|y)_w\ge \min \{ (x|z)_w, (y|z)_w\}-\delta$$
for all $w,x,y,z\in X$. Equivalently, $(X,d)$ is $\delta$-hyperbolic
if  for any four points $u,v,x,y$ of $X$, the two larger of the three distance sums
$d(u,v)+d(x,y)$, $d(u,x)+d(v,y)$, $d(u,y)+d(v,x)$ differ by at most
$2\delta \geq 0$. In case of geodesic metric spaces and graphs, there exist several equivalent
definitions of $\delta$-hyperbolicity involving different but
comparable values of $\delta$ \cite{AlBrCoFeLuMiShSh,BrHa,GhHa,Gr}.

\commentout{\FFD{(THIS SECTION IS TAKEN FROM EARLIER PAPERS: \\
WE MAY NEED TO ADAPT EVERYTHING TO GRAPHS ONLY; \\
IF WE START WITH $\delta$-thin triangles in graphs \\
then the corresponding geodesic space will be $(\delta+1)$-thin;\\
IF WE START WITH $\delta$-thin triangles in geodesic spaces \\
then the graph must be $(\delta-1)$-thin to obtain from it a geodesic space with $\delta$-thin triangles.)}\\
}

Let $(X,d)$ be a metric space.  A {\it geodesic segment} joining two
points $x$ and $y$ from $X$ is a (continuous) map $\rho$ from the segment $[a,b]$
of ${\mathbb R}^1$ of length $|a-b|=d(x,y)$ to $X$ such that
$\rho(a)=x, \rho(b)=y,$ and $d(\rho(s),\rho(t))=|s-t|$ for all $s,t\in
[a,b].$ A metric space $(X,d)$ is {\it geodesic} if every pair of
points in $X$ can be joined by a geodesic segment. Every (combinatorial)
graph $G=(V,E)$ equipped with its standard
distance $d:=d_G$ can be transformed into a geodesic (network-like)
space $(X_G,d)$ by replacing every edge $e=(u,v)$ by a segment
$\gamma_{uv}=[u,v]$ of length 1; the segments may intersect only at
common ends.  Then $(V,d_G)$ is isometrically embedded in a natural
way in $(X_G,d)$. $X_G$ is often called a {\it metric graph}.  The
restrictions of geodesics of $X_G$ to the set of vertices $V$ of $G$
are the shortest paths of $G$.
For simplicity of notation and brevity (and if not said otherwise), in all
subsequent results, by a geodesic $[x,y]$ in a graph $G$ we will mean an arbitrary shortest
path between two vertices $x,y$ of $G$.

Let $(X,d)$ be a geodesic metric space.  A \textit{geodesic triangle}
$\Delta(x,y,z)$ with $x, y, z \in X$ is the union $[x,y] \cup [x,z]
\cup [y,z]$ of three geodesic segments connecting these vertices.  A
geodesic triangle $\Delta(x,y,z)$ is called $\delta$-{\it slim} if for
any point $u$ on the side $[x,y]$ the distance from $u$ to $[x,z]\cup
[z,y]$ is at most $\delta$. Let $m_x$ be the point of the
geodesic segment $[y,z]$ located at distance $\alpha_y :=(x|z)_y=
(d(y,x)+d(y,z)-d(x,z))/2$ from $y.$ Then $m_x$ is located at
distance $\alpha_z :=(y|x)_z= (d(z,y)+d(z,x)-d(y,x))/2$ from $z$ because
$\alpha_y + \alpha_z = d(y,z)$. Analogously, define the points
$m_y\in [x,z]$ and $m_z\in [x,y]$ both located at distance $\alpha_x
:=(y|z)_x= (d(x,y)+d(x,z)-d(y,z))/2$ from $x;$ see Fig.~\ref{fig1} for an
illustration. There exists a unique isometry $\varphi$ which maps
$\Delta(x,y,z)$ to a star $\Upsilon(x',y',z')$ consisting of three
solid segments $[x',m'],[y',m'],$ and $[z',m']$ of lengths
$\alpha_x,\alpha_y,$ and $\alpha_z,$ respectively. This isometry
maps the vertices $x,y,z$ of $\Delta(x,y,z)$ to the respective
leaves $x',y',z'$ of $\Upsilon(x',y',z')$ and the points $m_x,m_y,$
and $m_z$ to the center $m$ of this tripod. Any other point  of
$\Upsilon(x',y',z')$ is the image of exactly two points of $\Delta
(x,y,z).$ A geodesic triangle $\Delta(x,y,z)$ is called
$\delta$-{\it thin} if for all points $u,v\in \Delta(x,y,z),$
$\varphi(u)=\varphi(v)$ implies $d(u,v)\le \delta.$
The notions of geodesic triangles,
$\delta$-slim and $\delta$-thin triangles can be also defined in
case of graphs. The single difference is that for graphs, the center
of the tripod is not necessarily the image of any vertex on the
geodesic of $\Delta(x,y,z).$ Nevertheless, if a point of the tripod
is the image of a vertex of one side of $\Delta(x,y,z),$ then it is
also the image of another vertex located on another side of
$\Delta(x,y,z).$
A {\em graph with $\delta$-thin triangles} is a graph  $G$ where each geodesic triangle is $\delta$-thin.

The following results show that hyperbolicity of a
geodesic space is equivalent to having thin or slim geodesic
triangles (the same result holds for graphs).

\begin{proposition} \cite{AlBrCoFeLuMiShSh,BrHa,Gr,GhHa}\label{hyp_charact}
Geodesic triangles of geodesic $\delta$-hyperbolic spaces are $4\delta$-thin
 and $3\delta$-slim.
\end{proposition}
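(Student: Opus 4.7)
The plan is to derive both statements from the Gromov-product reformulation of $\delta$-hyperbolicity introduced in this section, namely $(p|q)_w \ge \min\{(p|r)_w, (q|r)_w\} - \delta$ for all $w,p,q,r$, combined with the elementary identity that $(u|y)_x = d(x,u)$ whenever $u$ lies on a geodesic $[x,y]$. Fix a geodesic triangle $\Delta(x,y,z)$ with internal parameters $\alpha_x, \alpha_y, \alpha_z$, and let $\varphi$ denote the tripod map.

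To prove the $4\delta$-thin assertion, I would take two points $u \in [x,y]$ and $v \in [x,z]$ with $\varphi(u) = \varphi(v)$, so that $d(x,u) = d(x,v) = t \le \alpha_x$. Applying the hyperbolicity inequality at basepoint $x$ first to the triple $\{v,y,z\}$ gives
\[
(v|y)_x \ge \min\{(v|z)_x,(y|z)_x\} - \delta = \min\{t,\alpha_x\}-\delta = t-\delta,
\]
and applying it again to $\{u,v,y\}$ gives $(u|v)_x \ge \min\{(u|y)_x,(v|y)_x\}-\delta \ge t-2\delta$. Since $(u|v)_x = t - d(u,v)/2$, this converts at once to $d(u,v) \le 4\delta$. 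The two remaining placements of $(u,v)$ on the $y$-leg or the $z$-leg of the tripod follow by permuting the roles of $x,y,z$.

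To prove the $3\delta$-slim assertion, let $u \in [x,y]$ with $d(x,u) = t$; by symmetry I may assume $t \le \alpha_x$. If $t < \delta$ the claim is immediate since $x \in [x,z]$. Otherwise, I would choose $v \in [x,z]$ at distance $s := t-\delta$ from $x$ (legitimate because $s \le \alpha_x \le d(x,z)$) and apply hyperbolicity at $x$ twice exactly as before: first to $\{u,y,z\}$ to obtain $(u|z)_x \ge \min\{(u|y)_x,(y|z)_x\}-\delta = t-\delta$, and then to $\{u,v,z\}$ to obtain $(u|v)_x \ge \min\{(u|z)_x,(v|z)_x\}-\delta \ge (t-\delta)-\delta = t-2\delta$. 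Expanding $d(u,v) = t+s-2(u|v)_x$ with $s = t-\delta$ collapses to $d(u,v) \le 3\delta$, so $d(u, [x,z]\cup[y,z]) \le 3\delta$.

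The main subtlety, which is also the only nontrivial choice in the proof, is the inward shift by $\delta$ in the slim step: picking $v$ at distance $t$ from $x$ (the tripod partner of $u$) would only reproduce the $4\delta$ bound coming from thinness, whereas picking $v$ at distance $t-\delta$ makes both invocations of the hyperbolicity inequality tight at the same threshold, trading one $\delta$ between the two steps and yielding the sharper constant $3\delta$. Apart from this trick, the entire argument is a mechanical application of the Gromov-product axiom together with the identity $(u|y)_x = d(x,u)$ for $u \in [x,y]$, so no deeper obstacle is expected.
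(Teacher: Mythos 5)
The paper states Proposition~\ref{hyp_charact} as a known result with citations to \cite{AlBrCoFeLuMiShSh,BrHa,Gr,GhHa} and gives no proof, so there is nothing in the paper itself to compare against. Your argument is correct: the identity $(u|y)_x = d(x,u)$ for $u\in[x,y]$, two applications of the four-point inequality at the common corner of the two legs involved, and the $\delta$-shift in the placement of $v$ in the slim step reproduce the standard derivation of the $4\delta$-thin and $3\delta$-slim constants from the Gromov-product definition (cf.\ \cite[Ch.\ III.H, Prop.\ 1.17, 1.22]{BrHa}); the reduction to $t\le\alpha_x$ and the remaining cases by permuting $x,y,z$ are all handled correctly.
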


We will use the following converse given in \cite[p. 411, Proposition 1.22]{BrHa} (since we often use the fact that $\delta$-thin triangles imply the $\delta$-hyperbolicity,
we will present a proof for the completeness):

\begin{lemma}\label{hyp_thin} A geodesic space $(X,d)$ or a graph with
$\delta$-thin triangles is $\delta$-hyperbolic.
\end{lemma}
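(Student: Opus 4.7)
The plan is to verify the Gromov four-point inequality $(x|y)_w \geq \min\{(x|z)_w,(y|z)_w\} - \delta$ directly from thinness. In the graph case one passes implicitly to the associated metric graph $X_G$: since the Gromov product depends only on pairwise distances and these agree on $V$, $\delta$-hyperbolicity of $X_G$ gives $\delta$-hyperbolicity of $G$. So $(X,d)$ may be treated uniformly as a geodesic space throughout.

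Fix four points $w,x,y,z$ and set $a := (y|z)_w$; by symmetry we may assume $a \leq (x|z)_w$. The key move is to choose one geodesic $[w,z]$ and then complete it in two ways, forming the geodesic triangles $\Delta(w,y,z)$ and $\Delta(w,x,z)$ sharing this side (the other sides being chosen geodesics $[w,y]$ and $[w,x]$). In the tripod for $\Delta(w,y,z)$ the leg emanating from $w$ has length exactly $a$, and in the tripod for $\Delta(w,x,z)$ the $w$-leg has length $(x|z)_w \geq a$. Let $p \in [w,y]$, $q \in [w,z]$, $r \in [w,x]$ denote the three points at distance $a$ from $w$ on their respective sides. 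Then in the first tripod, $p$ and $q$ both map to the tripod center; in the second tripod, $r$ and $q$ both map to the same interior point of the $w$-leg.

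Applying $\delta$-thinness to each triangle yields $d(p,q) \leq \delta$ and $d(r,q) \leq \delta$, hence $d(p,r) \leq 2\delta$ by the triangle inequality. Since $p$ and $r$ lie on geodesics out of $w$, we have $d(p,y) = d(w,y)-a$ and $d(r,x) = d(w,x)-a$. Combining,
$$d(x,y) \leq d(x,r)+d(r,p)+d(p,y) \leq d(w,x)+d(w,y) - 2a + 2\delta,$$
which rearranges to $(x|y)_w = \tfrac12\bigl(d(w,x)+d(w,y)-d(x,y)\bigr) \geq a - \delta$, as required. The only point that requires care is arranging the two triangles to share the fixed geodesic $[w,z]$; otherwise the two thinness bounds would refer to different geodesic representatives of $[w,z]$ (hence to different points ``$q$''), and the triangle inequality $d(p,r) \leq d(p,q)+d(q,r)$ would fail to close the argument.
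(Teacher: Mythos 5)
Your proof is correct and is essentially the paper's own argument, differing only by a harmless relabeling: the paper shares the geodesic $[y,w]$ between the triangles $\Delta(x,y,w)$ and $\Delta(y,z,w)$, while you share $[w,z]$ between $\Delta(w,y,z)$ and $\Delta(w,x,z)$, and the rest (three points at distance $a=(y|z)_w$ from $w$ on the geodesics emanating from $w$, two $\delta$-thinness bounds, triangle inequality) proceeds identically. Your closing remark about the importance of fixing a single geodesic representative of $[w,z]$ is a valid point that the paper exploits implicitly via its shared side $[y,w]$.
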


\begin{proof} We will prove that for any four points $w,x,y,z$, we have
$(x|y)_w\ge \min \{ (x|z)_w, (y|z)_w\}-\delta$. Consider two geodesic triangles $\Delta(x,y,w)$ and $\Delta(y,z,w)$ sharing the common geodesic
$[y,w]$. Suppose without loss of generality that $\alpha:=(y|z)_w\le (x|y)_w$. Let $x'$ and $y'$ be two points on the geodesics $[w,x]$ and $[w,y]$, respectively,
located at distance $(x|y)_w$ from $w$. Analogously, let $z'$ and $y''$ be  two points on the geodesics $[w,z]$ and $[w,y]$, respectively,
located at distance $\alpha$ from $w$. Since $\alpha \le (x|y)_w$, the point $y''$ is located on the geodesic $[w,y]$ between $w$ and $y'$. Let $x''$ be a point
of $[w,x]$ located at distance $\alpha$ from $w$. Again, $x''$ is located on $[w,x]$ between $w$ and $x'$. From the definition of the points $x'',y'',z'$ we conclude that
$d(x'',y'')\le \delta$ and $d(y'',z')\le \delta$. Hence, by the triangle inequality, we obtain
$$d(x,z)\le d(x,x'')+d(x'',y'')+d(y'',z')+d(z',z)=(d(x,w)-\alpha)+2\delta+(d(z,w)-\alpha).$$
By definition, $(x|z)_w=\frac{1}{2}(d(x,w)+d(z,w)-d(x,z))$. Replacing in the right-hand side the previous inequality for $d(x,z)$, we obtain that
$$(x|z)_w\ge \frac{1}{2}(d(x,w)+d(z,w)-d(x,w)-d(z,w)+2\alpha-2\delta),$$
whence
$$(x|z)_w\ge \frac{2\alpha-2\delta}{2}=\min \{ (x|z)_w, (y|z)_w\}-\delta.$$
\end{proof}



\begin{figure}
\begin{center}
\input{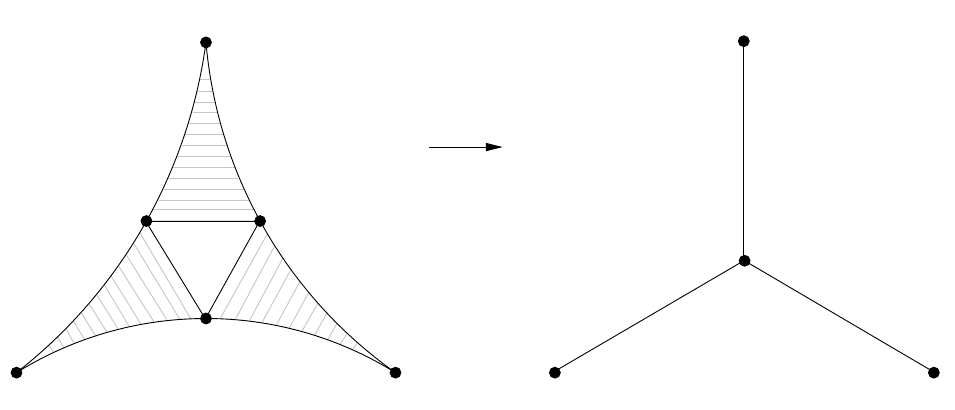_t}
\end{center}
\caption{A geodesic triangle $\Delta(x,y,z),$ the points $m_x, m_y,
m_z,$ and the tripod $\Upsilon(x',y',z')$} \label{fig1}
\end{figure}


An interval $I(u,v)$ of a graph (or a geodesic metric space) is called $\nu$-{\it thin}, if $d(x,y)\le \nu$ for any two points $x,y\in I(u,v)$ such that $d(u,x)=d(u,y)$ and $d(v,x)=d(v,y).$
From the definition of $\delta$-hyperbolicity easily follows that intervals of $\delta$-hyperbolic geodesic metric spaces or graphs are $2\delta$-thin. In case of graphs (or geodesic spaces)
with $\delta$-thin triangles,  a better bound holds:

\begin{lemma}\label{interval_thin} Intervals of a  graph $G$  (or geodesic space) with $\delta$-thin geodesic triangles  are $\delta$-thin.
\end{lemma}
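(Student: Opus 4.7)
The plan is to pick two geodesics from $u$ to $v$, one through $x$ and one through $y$, then form a degenerate geodesic triangle whose third vertex is $y$ itself. The degeneracy (the fact that $y$ lies on a $(u,v)$-geodesic) will force both $x$ and $y$ to map to the same point of the associated tripod, after which $\delta$-thinness will immediately close the argument.

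First, since $x, y \in I(u,v)$, I would fix geodesics $[u,v]_1 \ni x$ and $[u,v]_2 \ni y$, and then consider the geodesic triangle $\Delta(u,v,y)$ whose three sides are: the full geodesic $[u,v]_1$ (the side opposite to $y$), the initial segment of $[u,v]_2$ from $u$ to $y$, and the final segment of $[u,v]_2$ from $y$ to $v$; the latter two are geodesics as prefixes/suffixes of a geodesic, and they fit together legitimately because $y\in I(u,v)$ gives $d(u,y)+d(y,v)=d(u,v)$.

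Next, I would compute the Gromov products at the three vertices of this triangle. At $y$ we get $\alpha_y = (u|v)_y = \tfrac{1}{2}(d(y,u)+d(y,v)-d(u,v)) = 0$, so the $y$-leg of the tripod degenerates to a point and the image $y'$ of $y$ coincides with the tripod center $m$. The remaining leg lengths are $\alpha_u = d(u,y)$ and $\alpha_v = d(v,y)$. Since $x$ lies on $[u,v]_1$ at distance $d(u,x) = d(u,y) = \alpha_u$ from $u$, its image $\varphi(x)$ is the point of the $u$-leg at distance $\alpha_u$ from $u'$, which is exactly $m$. Thus $\varphi(x) = \varphi(y) = m$, and $\delta$-thinness of $\Delta(u,v,y)$ yields $d(x,y) \le \delta$.

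I do not anticipate any real obstacle; the argument is a direct exploitation of the fact that placing the third triangle vertex on a $(u,v)$-geodesic produces a tripod whose center absorbs both $x$ and $y$. The only small point to check is the graph version of $\delta$-thinness: in general the tripod center need not be the image of a vertex, but here both $x$ and $y$ are vertices which, by the hypothesis $d(u,x)=d(u,y)$, are simultaneously forced onto the same tripod point, so the standard conclusion still applies. Note that the symmetric hypothesis $d(v,x)=d(v,y)$ is redundant given $x,y\in I(u,v)$ and $d(u,x)=d(u,y)$, and plays no active role in the argument.
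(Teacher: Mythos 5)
Your proof is correct and follows essentially the same approach as the paper: both exploit a degenerate geodesic triangle $\Delta(u,v,z)$ whose apex $z$ lies on a $(u,v)$-geodesic, so that $(u|v)_z=0$ forces the apex and the corresponding point on the opposite side $[u,v]$ onto the tripod center, after which $\delta$-thinness finishes. The paper takes $x$ as the apex with the opposite side running through $y$ (the statement ``passing via $x$'' there appears to be a typo for ``passing via $y$''), whereas you take $y$ as the apex with the opposite side through $x$; this is the same argument with the symmetric choice.
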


\begin{proof} Let $u,v$ be two arbitrary vertices of $G$ and let $x,y\in I(u,v)$ such that $d(u,x)=d(u,y)$. Let $[u,v]$ be any $(u,v)$-geodesic passing via $x$ and $[u,x],[x,v]$ be two arbitrary
$(u,x)$- and $(x,v)$-geodesics. Consider the geodesic triangle $\Delta(x,u,v):=[u,x]\cup [x,v]\cup [v,u]$ and define the points $v'\in [u,x], u'\in [x,v],$ and $x'\in [u,v]$ such that
$d(u,v')=d(u,x')=(x|v)_u, d(x,v')=d(x,u')=(u|v)_x$, and $d(v,u')=d(v,x')=(u|x)_v$. Since $d(u,x)+d(x,v)=d(u,v)=d(u,x')+d(x',v)$, necessarily  $v'=x=u'$. Consequently,
$d(u,x)=d(u,x')$ and $d(v,x)=d(v,x')$, i.e., $x'=y$. Since $\Delta(x,u,v)$ is $\delta$-thin, $d(v',x')\le \delta$, yielding $d(x,y)\le \delta$.
\end{proof}

By this lemma, any result about cores intercepting families of geodesics can be transformed into a result about cores intercepting all pairs of vertices corresponding to ends of those
geodesics.


\subsection{Quasiconvexity}
A subset $C$ of a geodesic metric space or graph is called {\it convex} if for all $x,y\in C$ each geodesic joining $x$ and $y$ is contained in $C$. The following ``quasification'' of this notion
due to Gromov \cite{Gr} plays an important role in the study of hyperbolic and cubical groups \cite{BrHa,HrWi,NiRe,Sa_co}. For $\epsilon\ge 0$, a subset $C$ of a geodesic metric space $(X,d)$ or graph $G=(V,E)$ is called $\epsilon$-{\it quasiconvex} if for all $x,y\in C$ each geodesic joining $x$ and $y$ is contained in the $\epsilon$-neighborhood $N_{\epsilon}(C)$ of $C$. $C$ is said to be {\it quasiconvex} if there exists a constant $\epsilon\ge 0$ such that $C$ is $\epsilon$-quasiconvex. It turns out that in $\delta$-hyperbolic spaces the collection of quasiconvex sets is abundant and it contains, in particular, geodesics, intervals, and balls:

\begin{lemma} \label{quasiconvex} Let  $G$ be a graph (or geodesic space)  with $\delta$-thin geodesic triangles. Then the geodesics, the intervals, and the balls of $G$ are $\delta$-quasiconvex, and the neighborhoods of  $\epsilon$-quasiconvex sets are $(\epsilon+2\delta)$-quasiconvex.
\end{lemma}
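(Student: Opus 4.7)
The plan is to verify each of the four assertions by the same template: given two points $x, y$ of the set $S$ in question and an arbitrary geodesic $[x,y]$, apply the $\delta$-thin triangle hypothesis to one well-chosen geodesic triangle (or, for the last statement, to a pair of such triangles), then use the tripod isometry $\varphi$ to identify, for every $z \in [x,y]$, a partner point on another side of the triangle that already lies in $S$ or close to $S$. Thinness will immediately give $d(z,S) \le \delta$ (or $\le \epsilon + 2\delta$ in the last case).

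For \emph{geodesics} $[u,v]$ containing $x, y$, the subsegment of $[u,v]$ from $x$ to $y$ is itself a geodesic; applying $\delta$-thinness to the degenerate triangle $\Delta(x,y,y)$, whose two $xy$-sides are this subsegment and an arbitrary geodesic from $x$ to $y$, shows every geodesic between $x$ and $y$ lies in $N_\delta([u,v])$. For \emph{intervals} $I(u,v)$ with $x, y \in I(u,v)$, I would pick $u$--$v$ geodesics $\gamma_1, \gamma_2$ passing through $x$ and $y$ respectively and apply $\delta$-thinness to $\Delta(x,y,v)$; the sides $[x,v] \subseteq \gamma_1$ and $[y,v] \subseteq \gamma_2$ are subsegments of geodesics and therefore contained in $I(u,v)$, so the partner of any $z \in [x,y]$ lies in $I(u,v)$. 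For \emph{balls} $B(w,r)$ with $x, y \in B(w,r)$, applying $\delta$-thinness to $\Delta(x,y,w)$ produces for each $z \in [x,y]$ a partner $z'$ on $[x,w]$ or $[y,w]$; since $z'$ lies on a geodesic of length $d(x,w) \le r$ or $d(y,w) \le r$ ending at $w$, one has $d(z',w) \le r$ and so $z' \in B(w,r)$.

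For the neighborhood $N_\rho(C)$ of an $\epsilon$-quasiconvex set $C$, given $x, y \in N_\rho(C)$ I would select $x', y' \in C$ with $d(x,x'), d(y,y') \le \rho$ and apply $\delta$-thinness successively to $\Delta(x,y,y')$ and, when necessary, to $\Delta(x,x',y')$. A point $z \in [x,y]$ is within $\delta$ of some $z_1 \in [x,y'] \cup [y,y']$, and if $z_1 \in [x,y']$ it is within a further $\delta$ of some $z_2 \in [x,x'] \cup [x',y']$. In each terminal case the final partner lies in $N_\rho(C)$: on the short segments $[y,y']$ or $[x,x']$ (length $\le \rho$) it is within $\rho$ of $y'$ or $x' \in C$, while on $[x',y'] \subseteq N_\epsilon(C)$ it is within $\epsilon$ of $C$. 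Combining these gives $d(z, N_\rho(C)) \le 2\delta + \epsilon$ in the worst case. The main obstacle will be the routine but careful tripod bookkeeping in the interval and ball cases, where one must check via the coordinates $\alpha_x, \alpha_y, \alpha_v$ (or $\alpha_w$) that the partner of $z$ really falls on the intended subsegment; the neighborhood case is the only one that genuinely needs two triangles and hence the enlarged constant $\epsilon + 2\delta$.
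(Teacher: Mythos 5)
Your proposal is correct and follows essentially the same route as the paper's proof: one $\delta$-thin triangle for geodesics, intervals, and balls, and a two-triangle argument (through $\Delta(x,y,y')$ and $\Delta(x,x',y')$) for neighborhoods, with the same case analysis yielding the bound $\epsilon+2\delta$. The only cosmetic deviations are that you invoke a degenerate triangle $\Delta(x,y,y)$ for the geodesic case where the paper cites $\delta$-thinness of intervals (Lemma~\ref{interval_thin}), and you place the triangle apex at $v$ rather than $u$ in the interval case; neither changes the substance.
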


\begin{proof} That geodesics are $\delta$-quasiconvex immediately follows from the fact that the intervals are $\delta$-thin (Lemma \ref{interval_thin}). To prove that any interval $I(u,v)$ is $\delta$-quasiconvex, pick any two points $x,y\in I(u,v)$ and any geodesic $[x,y]$ between $x$ and $y$. Let $[u,x]$ and $[u,y]$ be two arbitrary geodesics between $u$ and $x$ and $u$ and $y$, respectively. Since the resulting geodesic triangle $\Delta(u,x,y)$ is $\delta$-thin, any point of $[x,y]$ is at distance at most $\delta$ from a point of $[u,x]$ or $[u,y]$. Since $[u,x]\cup [u,y]\subset I(u,v)$, $[x,y]$ is contained in the $\delta$-neighborhood of $I(u,v)$ and we are done. The proof that balls are $\delta$-quasiconvex is analogous.

Finally suppose that $C$ is an $\epsilon$-quasiconvex set of $(X,d)$ and for $r\ge 0$ let $N_r(C)$ be the $r$-neighborhood of $C$. Let $x,y\in N_r(C)$ and $x',y'\in C$ such that $d(x,x'),d(y,y')\le r$. Pick any geodesics $[x,y],[x,y'],[x,x'],[x',y'],$ and $[y',y]$. Notice that $[x,x']\cup [y,y']\subset N_r(C)$. Let $\Delta(x,y,y')$ be the geodesic triangle with sides $[x,y],[x,y'],[y',y]$ and $\Delta(x,x',y')$ be the geodesic triangle with sides $[x,x'],[x',y'],[x,y']$. Let $z$ be any point of $[x,y]$. Since  $\Delta(x,y,y')$ is $\delta$-thin, $z$ is at distance at most $\delta$ from some point $z'\in [x,y']\cup [y,y']$. If $z'\in [y,y']\subset N_r(C)$, then $d(z,z')\le \delta$ and we are done. So, suppose that $z'\in [x,y']$. Since $\Delta(x,x',y')$ is $\delta$-thin, $z'$ is at distance at most $\delta$ from a point $z''\in [x,x']\cup [x',y']$. Again, if $z''\in [x,x']\subset N_r(C)$, then $d(z,z'')\le d(z,z')+d(z',z')\le 2\delta$ and we are done. Finally, if $z''\in [x',y']$, since $C$ is $\epsilon$-quasiconvex, there exists a point $p\in C$ such that $d(z'',p)\le \epsilon$. Consequently, $d(z,p)\le d(z,z'')+d(z'',p)\le \epsilon+2\delta$.
\end{proof}

\subsection{Injective hulls of Gromov hyperbolic spaces}
A metric space  $(X,d)$ is said to be {\it injective} if, whenever $X$ is isometric to a subspace $Z$ of a metric space $(Y,d')$, then the subspace $Z$ is a retract of $Y$, i.e., there exists a map
$f:Y\rightarrow Z$ such that $f(z)=z$ for any $z\in Z$ and $d'(f(x),f(y))\le d'(x,y)$ for any $x,y\in X$. As shown in \cite{ArPa}, injectivity of a metric space $(X,d)$ is equivalent to its hyperconvexity. A metric space  $(X,d)$ is said to be {\it hyperconvex} if it is a geodesic metric space and its closed balls satisfy the {\it Helly property},  i.e.,    if $\mathcal B$  is any family of closed balls  of $X$ such that each pair of balls in $\mathcal B$ meet, then there exists a point $x$ common to all the balls in $\mathcal B$. By a construction of Isbell \cite{Is}, for any metric space $(X,d)$ there exists an essentially unique {\it injective hull} $(e,E(X))$, that is $E(X)$ is an injective metric space, $e: X\rightarrow E(X)$ is an isometric embedding, and every isometric embedding of $X$ into some injective metric space $Z$ implies an isometric embedding  of $E(X)$ into $Z$ (thus  $E(X)$ is the smallest injective space containing an image of an isometric embedding of $X$). This construction was rediscovered later by Dress \cite{Dr}. It was noticed without any proof in \cite{DrMouTe} that the injective hull of a $\delta$-hyperbolic space is $\delta$-hyperbolic.
This result was rediscovered recently by Lang \cite{La}, who also proved that if $(X,d)$ is a geodesic space and a graph, then any point of $E(X)$ is located at distance at most $2\delta$ (respectively, $2\delta+\frac{1}{2}$) from a point of $X$. Since we use this Lang's result, we briefly recall the basic definitions about injective hulls (in which we closely follow \cite{La}).

Let  $(X,d)$ be a metric space. Denote by ${\mathbb R}^X$ the vector space of all real valued functions on $X$, and define
$$\Delta(X):=\{ f\in {\mathbb R}^X: f(x)+f(y)\ge d(x,y) \mbox{ for all } x,y\in X\}.$$
Notice that if $f\in \Delta(X)$, then ${\mathcal B}(f)=\{ B(x,f(x)): x\in X\}$ is a family of pairwise intersecting balls.
For a point $z\in X$ define the distance function $d_z\in {\mathbb R}^X$ by setting $d_z(x)=d(x,z)$ for any $x\in X$. By the triangle inequality, each $d_z$ belongs to $\Delta(X)$. A function $f\in \Delta(X)$ is called {\it extremal} if it is a minimal element of the partially ordered set $(\Delta(X),\le)$, where $g\le f$ means $g(x)\le f(x)$ for all $x\in X$. Let
$$E(X)=\{ f\in \Delta(X): \mbox{ if } g\in \Delta(X) \mbox{ and } g\le f, \mbox{ then } g=f\}$$
denote the set of all extremal functions on $X$. Then the {\it injective hull} of $X$ is the set $E(X)$ equipped with the $l_{\infty}$-metric $||f-g||_{\infty}=\sup_{x\in X}|f(x)-g(x)|$. It can be easily seen that the map $e:X\rightarrow E(X)$ defined by $e(z)=d_z$ for any $z\in X$ is a  canonical isometric embedding of $X$ into $E(X)$. Moreover, it was shown in \cite{Dr,Is,La} that $E(X)$ is an injective (and thus hyperconvex) space and it is minimal in this sense.

A similar construction can be  done if $X$ is a graph $G$; there exists a smallest Helly
graph (i.e., a graph satisfying the Helly property for balls) comprising $G$ as an isometric subgraph \cite{JaMiPo,Pe}. In this case, instead of taking the set $E(X)$ of all extremal functions, one can take the subset $E_0(X)$ of $E(X)$ consisting  only of integer-valued extremal functions; endow it with the $l_{\infty}$-metric, and consider the graph $H(G)$ having $E_0(X)$ as the vertex-set and all pairs of vertices having  $l_{\infty}$-distance 1 as edges. This graph $H(G)$ is called the {\it Hellyfication} of $G$.

Returning to $\delta$-hyperbolic spaces and graphs, in what follows, we will use the following result of Lang \cite{La}:

\begin{proposition} \cite[Proposition 1.3]{La} \label{lang} If $(X,d)$ is a $\delta$-hyperbolic metric space, then its injective hull is
$\delta$-hyperbolic. If, in addition, $X$ is a geodesic space or a graph, then any point $x^*$ of $E(X)$ (respectively, $E_0(X)$) is within
distance $2\delta$ 
from some point (respectively, some vertex) $x$ of $X$.
\end{proposition}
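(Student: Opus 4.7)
The plan is to derive both assertions from an extremal description of $E(X)$ together with the approximate Helly property for balls in hyperbolic spaces recalled earlier in the paper.

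I would start from the self-dual characterization of extremality: for $f\in\Delta(X)$, minimality amounts to
\[
f(x)\;=\;\sup_{w\in X}\bigl(d(x,w)-f(w)\bigr)\qquad\text{for every }x\in X,
\]
since otherwise one could strictly lower $f(x)$ while preserving every defining inequality of $\Delta(X)$. Two consequences are used repeatedly: each $f\in E(X)$ is $1$-Lipschitz on $X$ (apply the triangle inequality inside the supremum), and one has the distance formula
\[
\|f-g\|_\infty\;=\;\sup_{x,y\in X}\bigl(d(x,y)-f(x)-g(y)\bigr)\qquad\text{for all }f,g\in E(X),
\]
obtained by rewriting $\sup_x(f(x)-g(x))=\sup_{x,y}(d(x,y)-f(y)-g(x))$ via the identity for $f$ and noting that the resulting double sup is symmetric in $(f,g)$ because $d$ is. As a special case, $\|f-d_z\|_\infty=f(z)$ for all $z\in X$.

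For the second assertion, assume $X$ is geodesic (the graph case runs identically using $E_0(X)$ and integer-valued data). The inequality $f(x)+f(y)\ge d(x,y)$ says precisely that the balls $\{B(x,f(x)):x\in X\}$ are pairwise intersecting, so the approximate Helly theorem for balls in $\delta$-hyperbolic geodesic spaces (Chepoi--Estellon, recalled in the introduction) produces a point $y^{*}\in X$ with $d(x,y^{*})\le f(x)+2\delta$ for all $x\in X$. Taking the supremum over $x$ and using the extremal formula from Step~1 gives $f(y^{*})=\sup_{x\in X}(d(x,y^{*})-f(x))\le 2\delta$, hence $\|f-d_{y^{*}}\|_\infty=f(y^{*})\le 2\delta$.

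For the first assertion, I would verify the four-point $\delta$-condition in $E(X)$ by transporting it from $X$ via the distance formula. Fix $f_1,f_2,f_3,f_4\in E(X)$, write $S_{ij}=\|f_i-f_j\|_\infty$, and set $T_1=S_{12}+S_{34}$, $T_2=S_{13}+S_{24}$, $T_3=S_{14}+S_{23}$. Given $\epsilon>0$, choose $x,y,u,v\in X$ that $\epsilon$-realize $T_3$, namely $S_{14}\le d(x,y)-f_1(x)-f_4(y)+\epsilon$ and $S_{23}\le d(u,v)-f_2(u)-f_3(v)+\epsilon$. Feeding the same four points into the distance formula with the other two pairings gives $T_1\ge d(x,u)+d(y,v)-\gamma$ and $T_2\ge d(x,v)+d(y,u)-\gamma$, where $\gamma=f_1(x)+f_2(u)+f_3(v)+f_4(y)$ is the common total. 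The four-point $\delta$-condition on $X$ applied to $x,y,u,v$ then says that the two largest of $d(x,y)+d(u,v)$, $d(x,u)+d(y,v)$, $d(x,v)+d(y,u)$ differ by at most $2\delta$, so combining the three inequalities gives $T_3\le\max(T_1,T_2)+2\delta+2\epsilon$. Letting $\epsilon\to 0$ and repeating with the roles of $T_1,T_2,T_3$ permuted yields the four-point $\delta$-hyperbolicity of $E(X)$.

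The main obstacle is the alignment in this last step: each $S_{ij}$ is a supremum that need not be attained, and one must control all three $T_k$'s simultaneously on a single quadruple of $X$. The crux is the observation that $\epsilon$-realizing just one pairing automatically yields matching lower bounds for the other two via the same distance formula, with the $f_i$-values collected into a common total $\gamma$ that cancels when the four-point condition of $X$ is applied. The cruder route of first approximating each $f_i$ by some $d_{z_i}$ via the second assertion would introduce extra constants and would moreover require $X$ to be geodesic, whereas the first assertion must hold for arbitrary $\delta$-hyperbolic metric spaces.
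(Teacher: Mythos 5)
Your argument is correct, but note that the paper itself does not prove Proposition~\ref{lang}: it cites it from Lang~\cite{La} and only proves the sharper Proposition~\ref{lang1} (the $\delta$ bound under the stronger $\delta$-thin-triangles hypothesis). So the comparison is with Proposition~\ref{lang1}'s proof, together with the paper's remark that the second assertion of Proposition~\ref{lang} ``is also a consequence of the Helly property for balls.''

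For the second assertion, your approach is essentially the one the paper uses for Proposition~\ref{lang1}: interpret the constraint $f(x)+f(y)\ge d(x,y)$ as saying the balls $B(x,f(x))$ pairwise meet, apply the approximate Helly property for balls, and then use extremality to bound $\|f-d_{y^*}\|_\infty$. The paper does this with the $\delta$-thin version (Proposition~\ref{helly-balls}: inflation by $\delta$) and shows $f(x)\le d_z(x)+\delta$ by a contradiction argument; you instead invoke the self-dual extremality formula $f(x)=\sup_w(d(x,w)-f(w))$ and the identity $\|f-d_z\|_\infty=f(z)$, which is cleaner and equivalent. One small gap in your write-up: to get the $2\delta$ radius from the $\delta$-hyperbolic hypothesis you need the Helly property for balls in $\delta$-hyperbolic (four-point) spaces with inflation $2\delta$, whereas the only version recalled in this paper (Proposition~\ref{helly-balls}) is the $\delta$-thin one with inflation $\delta$; routing through Proposition~\ref{hyp_charact} ($4\delta$-thin triangles) would only give $4\delta$. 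The $2\delta$ version is indeed in Chepoi--Estellon, but you should cite it explicitly rather than ``recalled in the introduction.''

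For the first assertion the paper gives no proof at all, so your argument is additional content. It is correct: the distance formula $\|f_i-f_j\|_\infty=\sup_{a,b}(d(a,b)-f_i(a)-f_j(b))$ lets you $\epsilon$-realize one pairing on a quadruple $x,y,u,v\in X$ and simultaneously lower-bound the other two pairings on the same quadruple with a common offset $\gamma$, so the four-point condition in $X$ transports verbatim to $E(X)$. Your observation that this argument works for arbitrary metric spaces (whereas the Helly route needs geodesics) is exactly the reason Lang's first assertion is stated without the geodesic hypothesis, and it is the right structural remark to make.
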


For geodesic spaces or graphs with $\delta$-thin triangles, the second assertion of Proposition~\ref{lang} with ($\delta$ instead of $2\delta$)
also follows from the following {\it Helly property for balls}:

\begin{proposition} \cite[Corollary 2]{ChEs} \label{helly-balls} Let $X$ be a geodesic space or graph with $\delta$-thin triangles  and  let
$B(x_i,r_i)$ be a collection of pairwise intersecting balls of $G$. Then the balls $B(x_i,r_i+\delta)$ have a
nonempty intersection.
\end{proposition}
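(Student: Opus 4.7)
My plan is to combine an explicit three-ball argument with the hyperconvexity of the injective hull $E(X)$. The cases $|I|\le 2$ are immediate, and a finite-intersection-property argument in $E(X)$ reduces an arbitrary collection to a finite one.

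For the three-ball case with pairwise intersecting $B(x_1,r_1), B(x_2,r_2), B(x_3,r_3)$, I work with the Gromov products $\alpha_i := (x_j|x_k)_{x_i}$ of the geodesic triangle $\Delta(x_1,x_2,x_3)$. Summing the pairwise intersection bounds $d(x_i,x_j)\le r_i+r_j$ yields $\alpha_1+\alpha_2+\alpha_3 \le r_1+r_2+r_3$, so the inequalities $\alpha_i>r_i$ cannot hold for all three indices simultaneously. If $\alpha_i\le r_i$ for every $i$, the preimage $m_3\in[x_1,x_2]$ of the tripod center (at distance $\alpha_1$ from $x_1$ and $\alpha_2$ from $x_2$) lies in $B(x_1,r_1)\cap B(x_2,r_2)$, and $\delta$-thinness together with the companion preimage $m_1\in[x_2,x_3]$ at distance $\alpha_3\le r_3$ from $x_3$ gives $d(m_3,x_3) \le \delta+\alpha_3 \le r_3+\delta$. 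Otherwise, select $k$ with $\alpha_k>r_k$, fix any $l\ne k$, and place $p$ on $[x_k,x_l]$ at distance $r_k$ from $x_k$; this is a valid interior point since $d(x_k,x_l)=\alpha_k+\alpha_l>r_k$. Then $p\in B(x_k,r_k)$ and $d(p,x_l)=d(x_k,x_l)-r_k\le r_l$ by pairwise intersection. Because $r_k<\alpha_k$, the tripod image of $p$ lies strictly inside the $x_k$-branch, so $\delta$-thinness provides a second preimage $p'\in[x_k,x_m]$ at distance $r_k$ from $x_k$ with $d(p,p')\le\delta$; hence $d(p,x_m)\le \delta+(d(x_k,x_m)-r_k)\le r_m+\delta$.

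For the extension to general $|I|$, I embed the problem in $E(X)$: since $e$ is isometric, the balls $\bar B(e(x_i),r_i)$ pairwise intersect in the hyperconvex space $E(X)$ and hence share a common point $f^*\in E(X)$. To descend to $X$ with loss at most $\delta$, I use the sharpening of Proposition~\ref{lang} to $\delta$ (instead of $2\delta$) available in the $\delta$-thin-triangle setting. This sharpening is obtained by applying the three-ball case (combined with finite-$n$ induction and compactness) to the pairwise intersecting family $\{B(x,f^*(x)) : x\in X\}$ in $X$---whose pairwise intersection follows from $f^*(x)+f^*(y)\ge d(x,y)$---and invoking extremality of $f^*$, which forces $|d(q,x)-f^*(x)|\le\delta$ for the resulting common $\delta$-inflation point $q$. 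Consequently $d(q,x_i) \le r_i+\delta$ for every $i\in I$.

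The principal obstacle is the three-ball case with $\alpha_k>r_k$: a naive appeal to $\delta$-quasiconvexity of balls on the triangle $\Delta(y_{12},y_{13},y_{23})$ formed from pairwise intersection points yields only a $2\delta$-inflation (via three distinct tripod preimages, one on each side), while the sharper $\delta$-inflation requires the single-point construction via two preimages $p,p'$ of the same tripod point on the $x_k$-branch, exploiting $r_k<\alpha_k$ to keep that image inside the branch.
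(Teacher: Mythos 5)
Your three-ball argument is correct: the case split on whether every $\alpha_i\le r_i$ (tripod-center preimage on $[x_1,x_2]$) or some $\alpha_k>r_k$ (the point $p$ at distance $r_k$ from $x_k$ together with the companion preimage $p'$ on $[x_k,x_m]$, still on the $x_k$-branch because $r_k<\alpha_k$) gives the claimed $\delta$-inflation for three pairwise intersecting balls. However, there is a genuine gap precisely where you write ``finite-$n$ induction.'' Three-ball Helly with $\delta$-inflation does not yield $n$-ball Helly with $\delta$-inflation by induction: the intersection of two balls in a $\delta$-hyperbolic graph is not a ball, so the tree-style argument of intersecting sets one at a time does not transfer, and invoking your three-ball lemma triple by triple produces incommensurate common points $p_{ijk}$ with no mechanism for merging them into a single point hitting every $\delta$-inflated ball simultaneously. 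The whole content of a Helly theorem is the passage from a fixed-size subfamily condition to the global statement, and that passage is exactly what is missing here. The detour through $E(X)$ compounds this: it is circular, because the descent from $f^*\in E(X)$ to a point $q$ of $X$ with $\|d_q-f^*\|_\infty\le\delta$ (the $\delta$-sharpening you invoke, which is Proposition~\ref{lang1} of the paper) is itself proved by applying the Helly property for balls in $X$ to the family $\{B(x,f^*(x)):x\in X\}$ --- the very statement under proof; and it is redundant, because if the ``finite-$n$ induction and compactness'' used in that descent actually worked you could apply them to the original family $\{B(x_i,r_i)\}$ directly and never touch $E(X)$.

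The correct route --- essentially the one in \cite{ChEs}, which the paper cites instead of reproving --- is an anchored single-shot construction that needs no induction. Fix a basepoint $z$; if $z$ lies in every ball you are done. Otherwise choose the ball $B(x_1,r_1)$ maximizing $d(z,x_1)-r_1$, and let $c$ be the point of a geodesic $[z,x_1]$ at distance $r_1$ from $x_1$. For any other index $i$, work in the $\delta$-thin triangle $\Delta(z,x_1,x_i)$. If $r_1\le(z|x_i)_{x_1}$, the companion point $c'\in[x_1,x_i]$ at distance $r_1$ from $x_1$ satisfies $d(c,c')\le\delta$ and $d(c',x_i)=d(x_1,x_i)-r_1\le r_i$. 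If $r_1>(z|x_i)_{x_1}$, then $d(z,c)<(x_1|x_i)_z$, and the companion $c''\in[z,x_i]$ at distance $d(z,c)$ from $z$ satisfies $d(c,c'')\le\delta$ and $d(c'',x_i)=d(z,x_i)-d(z,x_1)+r_1\le r_i$ by the extremal choice of $B(x_1,r_1)$. Either way $d(c,x_i)\le r_i+\delta$, and since $c$ is chosen once and works for every $i$, this gives the full statement for an arbitrary finite family; you can then append your compactness remark for infinite families in proper spaces if needed.
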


\begin{proposition} \label{lang1}  If  $X$ is a geodesic space or a graph with $\delta$-thin triangles, then any point $x^*$ of $E(X)$ is within
distance
$\delta$ from some point $x$ of $X$.
\end{proposition}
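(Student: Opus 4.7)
The plan is to associate to each $x^* \in E(X)$, represented by an extremal function $f \in \Delta(X)$, a point $p \in X$ satisfying $\|f - d_p\|_\infty \leq \delta$. Since the canonical embedding $e \colon X \to E(X)$ sends $p$ to $d_p$ isometrically, this will give the required bound $d_{E(X)}(x^*, e(p)) \leq \delta$.

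The first step is to apply the Helly property for balls. The defining inequality $f(x) + f(y) \geq d(x, y)$ of $\Delta(X)$ says exactly that the balls in the family $\{B(x, f(x)) : x \in X\}$ pairwise intersect, so Proposition~\ref{helly-balls} produces a common point $p \in X$ of the $\delta$-inflated collection $\{B(x, f(x) + \delta) : x \in X\}$. Equivalently, $d(p, y) \leq f(y) + \delta$ for every $y \in X$.

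The second step uses the minimality of $f$ in $\Delta(X)$ to force $f(p) \leq \delta$. If $f(p) > \delta$, define $g \colon X \to \mathbb{R}$ by $g(p) = \delta$ and $g(y) = f(y)$ for $y \neq p$; all the $\Delta(X)$-inequalities not involving $g(p)$ reduce to those for $f$, and the remaining ones $g(p) + g(y) = \delta + f(y) \geq d(p, y)$ are exactly the bounds from the first step. Hence $g \in \Delta(X)$, $g \leq f$, and $g(p) < f(p)$, contradicting the extremality of $f$. Combined with the $1$-Lipschitz property of extremal functions (an immediate consequence of minimality, since $\min(f(y), f(p) + d(p, y))$ stays in $\Delta(X)$ and cannot be strictly smaller than $f(y)$), this yields $f(y) - d(p, y) \leq f(p) \leq \delta$, which together with $d(p, y) - f(y) \leq \delta$ from the first step gives $|f(y) - d(p, y)| \leq \delta$ for every $y \in X$, completing the proof.

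The main delicate point is applying Proposition~\ref{helly-balls} to a potentially infinite collection of balls; this can be circumvented by fixing any base point $x_0$ and observing that the sought $p$ must lie in $B(x_0, f(x_0) + \delta)$, reducing matters to a smaller (finite, in the locally finite graph case) subfamily via a standard compactness/local-finiteness argument. The graph version involving $E_0(X)$ and a vertex $p$ is handled identically with $f$ integer-valued.
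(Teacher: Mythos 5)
Your proof is correct and follows essentially the same route as the paper's: both arguments apply the Helly property (Proposition~\ref{helly-balls}) to the pairwise-intersecting family $\{B(x,f(x)):x\in X\}$ to produce a point $p$ with $d(p,y)\le f(y)+\delta$ for all $y$, and then use extremality of $f$ to obtain the reverse inequality $f(y)\le d(p,y)+\delta$. The only stylistic difference is in the second half: you isolate the intermediate facts $f(p)\le\delta$ and the $1$-Lipschitz property of extremal functions, whereas the paper shows $f(y)\le d(p,y)+\delta$ directly by a contradiction argument (if $f(x)>d_p(x)+\delta$ then $f(x)+f(y)>d(x,y)$ for all $y\ne x$, so $f$ could be decreased at $x$); these are two routes to the same bound, and your version is if anything slightly more explicit about the dominated function $g\le f$ witnessing non-extremality.
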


\begin{proof} Pick any point $f\in E(X)$. Since $f(x)+f(y)\ge d(x,y)$,  $\{ B(x,f(x)): x\in X\}$ is a collection of pairwise intersecting balls of $X$. By
Proposition~\ref{helly-balls}, there exists a point $z$ belonging to all balls $B(x,f(x)+\delta), x\in X$. Consider the extremal map  $d_z$, i.e., the point
of $E(X)$ corresponding to $z$. Recall, that $d_z(x)=d(x,z)$ for any $x\in X$. By definition of $z$, for any $x\in X$ we have $d_z(x)=d(x,z)\le f(x)+\delta$.
On the other hand, if there exists $x\in X$ such that $f(x)>d_z(x)+\delta$, then  we assert that $f$ is not an extremal map.  Indeed, for any $y\ne x$ of $X$, we will obtain
that $f(x)+f(y)>d_z(x)+\delta+f(y)\ge d_z(x)+d_z(y)\ge d(x,y)$, showing that $f$ is not extremal. Consequently,  $f(x)\le d_z(x)+\delta$ for any $x\in X$,
whence $||f-d_z||_{\infty}=\sup_{x\in X}|f(x)-d_z(x)|\le \delta$.
\end{proof}

\section{Existence of cores}

The goal of this section is to prove the following result:

\begin{theorem} [Existence of cores]\label{core} Let $G$ be a $\delta$-hyperbolic graph  (respectively, a graph with $\delta$-thin triangles). Then any finite subset $X$ of vertices of $G$
has a $(\frac{1}{2},4\delta)$-core (respectively, a $(\frac{1}{2},3\delta+\frac{1}{2})$-core). 
\end{theorem}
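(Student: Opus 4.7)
The plan is to pass to the injective hull of $G$, where a tree-like median argument applies, and then pull the chosen median back to a vertex of $G$ with controlled loss. Concretely, I would isometrically embed $G$ into its injective hull $E(G)$. By Proposition \ref{lang} (or Proposition \ref{lang1} in the $\delta$-thin-triangle setting), $E(G)$ is $\delta$-hyperbolic and hyperconvex, and every point of $E(G)$ lies within distance $2\delta$ (resp.\ $\delta$) of a point of $G$. Crucially, every $G$-geodesic between two vertices of $X$ remains a geodesic in $E(G)$, so an intercept statement proved inside $E(G)$ can be transferred back.

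Inside $E(G)$ I would pick a Steiner median $m^* \in E(G)$ of $X$, i.e., a minimizer of $\Psi_X(m^*) = \sum_{x \in X} d(m^*, x)$, whose existence follows because $\Psi_X$ is $1$-Lipschitz and proper. The core of the argument is to prove that $m^*$ enjoys a \emph{majority intercept property}: strictly more than half of the unordered pairs $\{x, y\} \subseteq X$ satisfy $m^* \in I_{E(G)}(x, y)$. This is the classical tree argument adapted to the hyperconvex setting: hyperconvexity (Helly for balls in $E(G)$) supplies, at $m^*$, a branching of the space into ``gates'' towards subsets of $X$, and the minimality of $\Psi_X$ at $m^*$ forbids any gate from carrying more than $|X|/2$ members of $X$ (otherwise a small displacement of $m^*$ along the overloaded gate would strictly decrease $\Psi_X$). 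A standard double-counting of ordered pairs whose two members lie in distinct gates then yields the majority.

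For each ``good'' pair $\{x,y\}$, the $G$-geodesic $[x,y]_G$ and an $E(G)$-geodesic $\alpha \ni m^*$ from $x$ to $y$ form a degenerate geodesic triangle in $E(G)$. By the slim/thin-triangle property of Proposition \ref{hyp_charact} (respectively by Lemma \ref{interval_thin} in the $\delta$-thin-triangle regime, since then both $\alpha$ and $[x,y]_G$ lie in the $\delta$-thin interval $I_{E(G)}(x,y)$), there is a point of $[x,y]_G$ within distance $3\delta$ (resp.\ $\delta$) of $m^*$; rounding to the nearest vertex of the combinatorial geodesic $[x,y]_G$ costs at most $\tfrac{1}{2}$ in the thin case. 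Finally, by Proposition \ref{lang} (resp.\ Proposition \ref{lang1}), I select a vertex $m \in V(G)$ with $d_{E(G)}(m, m^*) \le 2\delta$ (resp.\ $\le \delta$); the triangle inequality in $E(G)$, together with the isometry $V(G) \hookrightarrow E(G)$, yields the claimed bounds $d_G(m, [x,y]_G) \le 4\delta$ and $d_G(m, [x,y]_G) \le 3\delta + \tfrac{1}{2}$ for every good pair.

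The main obstacle is the second step — establishing the majority intercept property of the Steiner median in a hyperconvex $\delta$-hyperbolic space. The tree case is classical, but in the injective setting one has to realise the ``gates'' concretely using the Helly property for balls in $E(G)$ and use a first-variation argument for $\Psi_X$ at $m^*$ (the directional derivatives of $\Psi_X$, summing contributions $-1$ from directions pointing towards each $x \in X$, must be non-negative in every direction) to preclude an overloaded gate. A secondary technical point is making the constants in the third and fourth steps add up exactly to $4\delta$ and $3\delta + \tfrac{1}{2}$; this likely requires using the degeneracy of the co-endpoint triangle in $E(G)$ more carefully than the generic $3\delta$-slim estimate would suggest, possibly via the Hellyfication $E_0(G)$ in place of $E(G)$ so that the pullback $m^* \rightsquigarrow m$ already lands on a vertex.
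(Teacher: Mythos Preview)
Your overall architecture matches the paper's: embed in the injective hull (more precisely, the Hellification $H(G)$), take a median $m^*$ of $X$ there, pull back to a nearby vertex $m\in G$, and control distances from $m$ to geodesics $[x,y]$. The genuine gap is in your second step, the ``majority intercept property.''

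You claim that for strictly more than half the pairs $\{x,y\}$ one has $m^*\in I_{E(G)}(x,y)$ exactly, and you propose to prove it via a gate decomposition at $m^*$: the median condition caps each gate at $|X|/2$, and a pair with members in distinct gates should have $m^*$ on its interval. The first half of this is fine and is essentially what the paper also uses. The second half fails outside $\mathbb{R}$-trees: in a hyperconvex $\delta$-hyperbolic space (or Helly graph) with $\delta>0$, two points $x,y$ whose geodesics to $m^*$ leave $m^*$ along different neighbors can still satisfy $(x|y)_{m^*}>0$, i.e., $m^*\notin I(x,y)$. There is no canonical partition of $X$ into ``gates'' at $m^*$ with the property that different gates force $m^*\in I(x,y)$; hyperconvexity alone does not manufacture tree-like branching. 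So your double-counting step has no foundation, and you flag this yourself as ``the main obstacle'' without resolving it.

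The paper circumvents this by relaxing the target. Instead of demanding $(x|y)_{m^*}=0$, it defines the fiber $F_{m^*}(x)=\{y\in X:(x|y)_{m^*}\ge \delta+1\}$ and proves $|F_{m^*}(x)|\le n/2$ (Claim~1). The threshold $\delta+1$ is exactly what makes the argument close: for $u,v\in F_{m^*}(x)$, $\delta$-hyperbolicity of $H(G)$ gives $(u|v)_{m^*}\ge \min\{(x|u)_{m^*},(x|v)_{m^*}\}-\delta\ge 1$, so the balls $B(m^*,1)$ and $B(v,d(m^*,v)-1)$, $v\in F_{m^*}(x)$, pairwise intersect; the Helly property of $H(G)$ then yields a neighbor $m'$ of $m^*$ strictly improving $\Psi_X$, contradicting minimality. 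From there the paper never needs $m^*\in I(x,y)$: it transfers the bound $(x|y)_{m^*}\le\delta$ to $(x|y)_m\le 2\delta+\tfrac12$ at the $G$-vertex $m$ (Claim~2), and then bounds $d(m,[x,y])$ directly from this Gromov-product bound using thin triangles in $G$ or the four-point condition (Claims~3 and~4). Your pipeline, by contrast, tries to estimate $d(m^*,[x,y]_G)$ inside $E(G)$ via slim/thin triangles of $E(G)$, which a priori are only $4\delta$-thin; the paper's constants come out sharper precisely because the final geometric estimate is carried out in $G$ from a Gromov-product bound, not from an interval-thinness bound in $E(G)$.
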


\begin{proof} Let $n:=|X|$. First suppose that $G$ is a graph with $\delta$-thin triangles. By Lemma~\ref{hyp_thin}, $G$ is $\delta$-hyperbolic.
Let $E(G)$ be the injective hull of $G$ and let $H(G)$ be the Hellification of $G$ (induced
by all points of $E(G)$ with integer coordinates). By the first part of Proposition~\ref{lang}, $E(G)$ is $\delta$-hyperbolic, thus $H(G)$ is also $\delta$-hyperbolic.
Let $m^*$ be a median vertex of the profile $X$ in the Helly graph $H(G)$. By Proposition~\ref{lang1}, $m^*$ is at distance at most $\delta$
from a vertex $m$ of $G$.

For a vertex $x\in X$, let
$$F_{m^*}(x)=\{ y\in X: (x|y)_{m^*}\ge \delta+1\}$$
and call $F_{m^*}(x)$ the {\it fiber} of $x$ with respect to $m^*$.

\begin{figure}
\begin{center}
\input{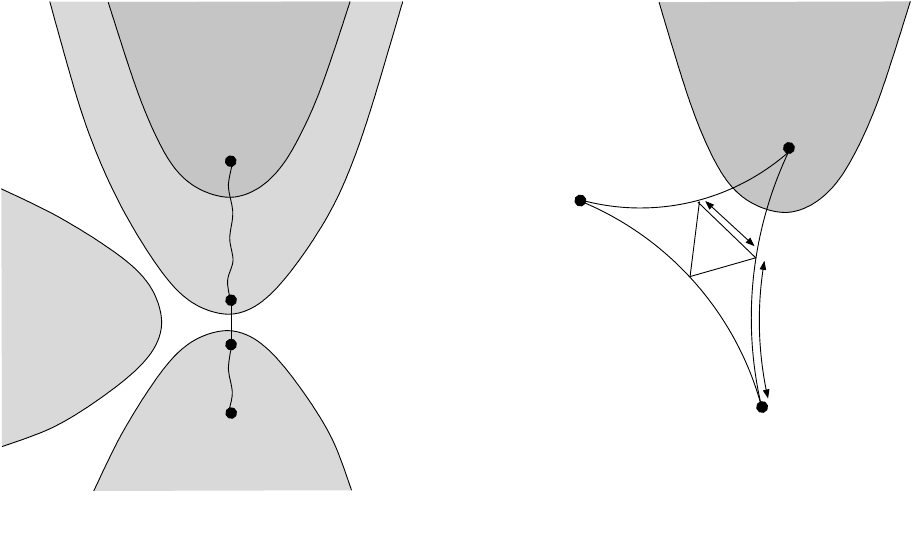_t}
\end{center}
\caption{To the proof of Claim 1 (a) and Claim 3 (b).} \label{fig2}
\end{figure}

\medskip\noindent
{\bf Claim 1.} For any vertex $x\in X$, the fiber  $F_{m^*}(x)$ contains at most $n/2$ vertices.

\medskip\noindent
{\bf Proof of Claim 1.} Suppose, by way of contradiction, that $|F_{m^*}(x)|>n/2$. For each vertex $v\in F_{m^*}(x)$, set $r_v:=d(m^*,v)-1$.
Consider the following collection of balls:
$${\mathcal B}=\{ B(m^*,1)\}\cup \{ B(v,r_v): v\in F_{m^*}(x)\}.$$
We assert that  the balls from $\mathcal B$ pairwise intersect. From the definition of $r_v$, this is obviously true for $B(m^*,1)$ and $B(v,r_v)$ for any $v\in F_{m^*}(x)$.
Now, pick two arbitrary vertices $u,v\in F_{m^*}(x).$ We assert that $d(u,v)\le r_u+r_v$. Since $H(G)$ is $\delta$-hyperbolic and $u,v\in F_{m^*}(x),$
$$(u|v)_{m^*}\ge \min \{ (x|u)_{m^*},(x|v)_{m^*}\}-\delta\ge \delta+1-\delta=1.$$
Hence, $d(u,m^*)+d(v,m^*)-d(u,v)=2(u|v)_{m^*}\ge 2$. Consequently, $d(u,v)\le d(u,m^*)+d(v,m^*)-2=r_u+r_v$, showing that the balls $B(u,r_u)$ and $B(v,r_v)$ intersect.

Applying the Helly property to the collection ${\mathcal B}$, we can find a vertex $m'$ of $H(G)$ belonging to all balls of ${\mathcal B}$. Since $r_v=d(m^*,v)-1$ for any $v\in F_{m^*}(x)$, $m'$ is different from $m^*$. Consequently, $m'$ is a neighbor of $m^*$ belonging to all intervals $I(m^*,v), v\in F_{m^*}(x)$; see Figure~\ref{fig2}(a). Consider now the values of the median function $\Psi_X$ on the vertices $m^*$ and $m'$ of $H(G)$. Consider a partition of the set $X$ into three sets $X_{m^*},X_{m'},$ and $X_{=}$, where
$X_{m^*}:=\{ v\in X: d(m^*,v)<d(m',v)\},$ $X_{m'}:=\{ v\in X: d(m',v)<d(m^*,v)\}$ and  $X_{=}:=\{ v\in X: d(m^*,v)=d(m',v)\}.$
Since $m'\in I(m^*,v)$ for all $v\in F_{m^*}(x)$, necessarily $F_{m^*}(x)\subseteq X_{m'}$. Since $|F_{m^*}(x)|>n/2$, this implies that  $|X_{m'}|-|X_{m^*}|>0$.
Since $d(m^*,v)=d(m',v)$ for any $v\in X_{=}$ and since $m^*$ and $m'$ are adjacent, one can easily deduce that   $\Psi_X(m^*)-\Psi_X(m')=|X_{m'}|-|X_{m^*}|>0$, contrary to the assumption
that $m^*$ is a median of $X$ in $H(G)$.  This concludes the proof of Claim 1.

\medskip\noindent
{\bf Claim 2.} For any vertex $x\in X$ and any vertex $y\notin F_{m^*}(x)$, we have $(x|y)_m< 2\delta+1$, i.e., $(x|y)_m\le 2\delta+\frac{1}{2}$.

\medskip\noindent
{\bf Proof of Claim 2.} Recall that $m$ is a vertex of $G$ at distance at most $\delta$ from $m^*$. By definition of $F_{m^*}(x)$ and since $y\notin F_{m^*}(x)$, we obtain
$(x|y)_{m^*}<\delta+1$.  Suppose, by way of contradiction, that  $(x|y)_m\ge 2\delta+1$. This implies that $d(x,m)+d(y,m)-d(x,y)\ge 4\delta+2$, i.e., $(d(x,m)-\delta)+(d(y,m)-\delta)-d(x,y)\ge 2\delta+2$.
Since $d(m,m^*)\le \delta$, by the triangle inequality, we obtain that $d(x,m^*)\ge d(x,m)-\delta$ and $d(y,m^*)\ge d(y,m)-\delta$, yielding $(x|y)_{m^*}=(d(x,m^*)+d(y,m^*)-d(x,y))/2\ge \delta+1$,
contrary to the assumption that $y\notin F_{m^*}(x)$. Hence  $(x|y)_m< 2\delta+1$. Since the Gromov product  $(x|y)_m$ in graphs is an integer or a half-integer,
we obtain $(x|y)_m\le 2\delta+\frac{1}{2}$. This concludes the proof of Claim 2.

\medskip\noindent
{\bf Claim 3.} For any vertex $x\in X$ and any vertex $y\notin F_{m^*}(x)$, any geodesic $[x,y]$ of $G$ intersects the ball $B(m,3\delta+\frac{1}{2})$ of $G$ of
radius $3\delta+\frac{1}{2}$ and center $m$.

\medskip\noindent
{\bf Proof of Claim 3.} Consider a geodesic triangle $\Delta(x,y,m)$ and let $[m,x],[m,y],$ and $[x,y]$ be the sides of this triangle; see Figure~\ref{fig2}(b).  Let $x',y'$ be the points of $[m,x]$ and $[m,y]$, respectively,  located at distance $(x|y)_m$ from $m$. Since the triangles of $G$ are $\delta$-thin, $d(x',y')\le \delta$, moreover $d(x',z')\le \delta$ and $d(y',z')\le \delta$, where $z'$ is the point of $[x,y]$ at distance $(y|m)_x$ from $x$ and at distance $(x|m)_y$ from $y$. Since, by Claim 2, $(x|y)_m\le 2\delta+\frac{1}{2}$, we conclude that $d(m,z')\le d(m,x')+d(x',z')\le 2\delta+\frac{1}{2}+\delta=3\delta+\frac{1}{2}$. This establishes Claim 3.

\medskip
Now, we can conclude the proof of the theorem for graphs with $\delta$-thin triangles. Indeed,  by Claim 1, for any vertex $x$ of $X$, the fiber $F_{m^*}(x)$ contains at most $n/2$ vertices. By Claim~3, the ball $B(m,3\delta+\frac{1}{2})$ intersects any geodesic $[x,y]$ between a vertex $x\in X$ and any vertex $y\notin F_{m^*}(x)$, i.e., $B(m,3\delta+\frac{1}{2})$ intercepts any geodesic between
any $x\in X$ and at least $n/2$ vertices of $X$. This implies that $B(m,3\delta+\frac{1}{2})$ intercepts at least $n^2/4$ of the pairs of vertices of $X$.

Now, suppose that $G$ is a $\delta$-hyperbolic graph. Then the proof is exactly the same except the proof of Claim~3, in which we used $\delta$-thin triangles. We replace Claim 3 by the following assertion:

\medskip\noindent
{\bf Claim~4.} For any vertex $x\in X$ and any vertex $y\notin F_{m^*}(x)$, any geodesic $[x,y]$ of $G$ intersects the ball $B(m,4\delta)$ of $G$ of
radius $4\delta$ and center $m$.

\medskip\noindent
{\bf Proof of Claim~4.} Consider a geodesic triangle $\Delta(x,y,m)$ and let $[m,x],[m,y],$ and $[x,y]$ be the sides of this triangle. Let $z$ be the point of $[x,y]$
at distance $(y|m)_x$ from $x$ and at distance $(x|m)_y$ from $y$. Consider the three distance sums $d(x,y)+d(m,z),d(m,x)+d(z,y),$ and $d(m,y)+d(x,z)$. Notice that $d(x,y)+d(m,z)=(y|m)_x+(x|m)_y+d(m,z),$
$d(m,x)+d(z,y)=(x|y)_m+(y|m)_x+(x|m)_y,$ and $d(m,y)+d(x,z)=(x|y)_m+(x|m)_y+(y|m)_x$. Therefore the distance sums $d(m,x)+d(z,y)$ and $d(m,y)+d(x,z)$ coincide. If these two
sums are the largest distance sums, then from inequality $(y|m)_x+(x|m)_y+d(m,z)\le (x|y)_m+(y|m)_x+(x|m)_y$ and Claim~2 we obtain that $d(m,z)\le (x|y)_m\le 2\delta+\frac{1}{2}$. On the other hand,
if $d(x,y)+d(m,z)$ is the largest distance sum, then, since $G$ is $\delta$-hyperbolic, $d(x,y)+d(m,z)\le d(m,x)+d(z,y)+2\delta$. Consequently,  $d(m,z)\le (x|y)_m+2\delta\le 4\delta+\frac{1}{2}$.
As $d(m,[x,y])$ is an integer, $d(m,[x,y])\leq 4\delta$ holds. This establishes Claim~4.
\end{proof}


In case of $\delta$-hyperbolic Helly graphs, the radius of the intercepting core-ball can be decreased, because in this case $m^*=m$:

\begin{corollary}\label{helly-core} If $G$ is a $\delta$-hyperbolic Helly graph  (respectively, a Helly graph with $\delta$-thin triangles), then any finite subset $X$ of vertices of $G$
has a $(\frac{1}{2},3\delta+\frac{1}{2})$-core (respectively, a $(\frac{1}{2},2\delta)$-core).
\end{corollary}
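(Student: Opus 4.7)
The plan is to replay the proof of Theorem~\ref{core}, but to take the median vertex $m$ directly in $G$ instead of first computing a median $m^*$ in the Hellyfication $H(G)$ and then rounding to a nearby vertex of $G$ via Proposition~\ref{lang1}. Because $G$ itself is a Helly graph, the collection $\mathcal{B}$ of pairwise-intersecting balls that was constructed in Claim~1 of Theorem~\ref{core} already has a common vertex inside $G$, so the entire argument can be carried out without ever leaving $G$, thereby sparing the $\delta$ slack coming from the inequality $d(m,m^*)\le\delta$.

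Concretely, let $m$ be a median vertex of $X$ in $G$ and, for each $x\in X$, set $F_m(x)=\{y\in X:(x|y)_m\ge \delta+1\}$. The proof of $|F_m(x)|\le n/2$ is then verbatim that of Claim~1: for each $v\in F_m(x)$ put $r_v=d(m,v)-1$, form $\mathcal{B}=\{B(m,1)\}\cup\{B(v,r_v):v\in F_m(x)\}$, check pairwise intersection via $(u|v)_m\ge \min\{(x|u)_m,(x|v)_m\}-\delta\ge 1$, and invoke the Helly property of $G$ itself (the finiteness of $X$ makes $\mathcal{B}$ finite, so no subtlety arises) to produce a common vertex $m'\ne m$. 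Since $m'$ is then a neighbor of $m$ lying on a geodesic from $m$ to every vertex of $F_m(x)$, the assumption $|F_m(x)|>n/2$ yields $\Psi_X(m')<\Psi_X(m)$, contradicting the choice of $m$.

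For $y\notin F_m(x)$ we have $(x|y)_m<\delta+1$ and, since Gromov products in a graph are integers or half-integers, $(x|y)_m\le \delta+\tfrac{1}{2}$; this is the improvement on Claim~2, which, because it went via $m^*$, produced only $(x|y)_m\le 2\delta+\tfrac{1}{2}$. Feeding this sharper bound into the two geometric estimates of Claims~3 and~4 finishes the proof. In the $\delta$-thin-triangle case, the argument of Claim~3 gives $d(m,[x,y])\le (x|y)_m+\delta\le 2\delta+\tfrac{1}{2}$, which is an integer and hence at most $2\delta$. In the $\delta$-hyperbolic case, comparing distance sums as in Claim~4 yields $d(m,[x,y])\le (x|y)_m+2\delta\le 3\delta+\tfrac{1}{2}$. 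Thus the ball $B(m,2\delta)$, respectively $B(m,3\delta+\tfrac{1}{2})$, intercepts every pair $\{x,y\}$ with $y\notin F_m(x)$, and exactly as in Theorem~\ref{core} this accounts for at least $n^2/4$ pairs of $X$.

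The only delicate point — and the step I would double-check most carefully — is the legitimacy of applying the Helly property directly in $G$ inside Claim~1; this boils down to confirming that $\mathcal{B}$ is a \emph{finite} family of pairwise intersecting balls in a Helly graph, which is immediate from the finiteness of the profile $X$. Everything else is a mechanical re-run of the proof of Theorem~\ref{core} with $m^*$ replaced by $m$, and with the Hellification and Proposition~\ref{lang1} deleted from the argument.
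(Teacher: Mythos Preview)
Your approach is correct and is exactly what the paper intends: the single sentence preceding the corollary, ``because in this case $m^*=m$'', is the whole proof, and you have unpacked it accurately. Since $G$ is itself Helly, the median $m^*$ can be taken in $G$, the rounding step via Proposition~\ref{lang1} is unnecessary, Claim~2 improves to $(x|y)_m\le\delta+\tfrac{1}{2}$, and feeding this into Claims~3 and~4 yields the stated radii.
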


\begin{remark} The analogue   of Theorem~\ref{core} holds for all geodesic $\delta$-hyperbolic spaces.
\end{remark}

\begin{remark} Contrary to what was asserted in \cite{JoLoBoBa}, for any constant $\alpha,$ the center of mass of a $\delta$-hyperbolic network $G$ can be arbitrarily far from the center of any $(\alpha,O(\delta))$-core of $G$.
\end{remark}
\begin{proof}
Consider the family of trees ($0$-hyperbolic graphs) $T_n,$ $n\in \mathbb{N},$ consisting of a path $P$ on $3\sqrt{n}$ vertices and a star $S$ with $n-3\sqrt{n}$ leaves centered at the end-vertex $x$ of the path $P$. It can be shown by simple computations that  the distance between the center of mass  of $T_n$ and $x$ grows with $n.$ However, any $(\alpha,O(1))$-core must contain the vertex $x.$ Therefore, the distance between the center of the core and the center of the star $S$ is $O(1).$ This implies that the distance between the center of the core and the center of mass of $T_n$ can be made arbitrarily large by taking $n$ large enough.
\end{proof}

\begin{remark} For a finite $n$-vertex $m$-edge ($\delta$-hyperbolic) graph $G=(V,E)$, a $(\frac{1}{2},\rho)$-core with minimum $\rho$ can be found in at most $\rho n O(nm)=O(\rho n^2 m)$ time by iterating over each vertex $v\in V$ and computing the smallest radius $\rho$ such that  $d_G(x,y)<d_{G'}(x,y)$ holds for at least $n^2/4$ pairs $x,y\in V$, where $G'$ is a graph obtained from $G$ by removing the vertices of the ball $B(v,\rho)$. Here, $O(nm)$ stands for the time needed to compute the distance matrices of $G$ and $G'$.
\end{remark}

\begin{remark} The analogue   of Theorem~\ref{core} (with a larger radius of the intercepting ball) holds if the traffic is performed not only along geodesics but also along quasi-geodesics.
A $(\lambda,\epsilon)$-{\it quasi-geodesic} in a metric space $(X,d)$  is a $(\lambda,\epsilon)$-quasi-isometric embedding $c:I\rightarrow X$ (where $I$ is an interval of the real line),
i.e., $\frac{1}{\lambda}|t-t'|-\epsilon\le d(c(t),c(t'))\le \lambda|t-t'|+\epsilon$ holds  for all $t,t'\in I$. By the well-known Morse Lemma \cite[Theorem 1.7, Part III]{BrHa},
there exists $R:=R(\delta,\lambda,\epsilon)$ such that for any quasi-geodesic $c$ and any geodesic segment $[p,q]$ joining the endpoints of $c$, any point of $c$ is at distance at most $R$ from a point of $[p,q]$. Therefore, the ball of radius $R+4\delta$ centered at $m$ will intercept at least one half of the total traffic sent along quasi-geodesics.
\end{remark}

\section{Helly theorem for quasiconvex sets}

The Helly property for balls established in Proposition \ref{helly-balls} of \cite{ChEs} was an important tool in the proof of the existence of cores from the previous section.
In this section, we extend this result to quasiconvex sets of $\delta$-hyperbolic graphs. As a consequence of this result, for any finite collection $\mathcal Q$ of $\epsilon$-quasiconvex sets we obtain a relationship between the packing and transversal numbers of $\mathcal Q$ by showing that  $\tau_{2\epsilon+5\delta}({\mathcal Q})\le \pi({\mathcal Q})$.

\begin{figure}[h]
\begin{center}
\input{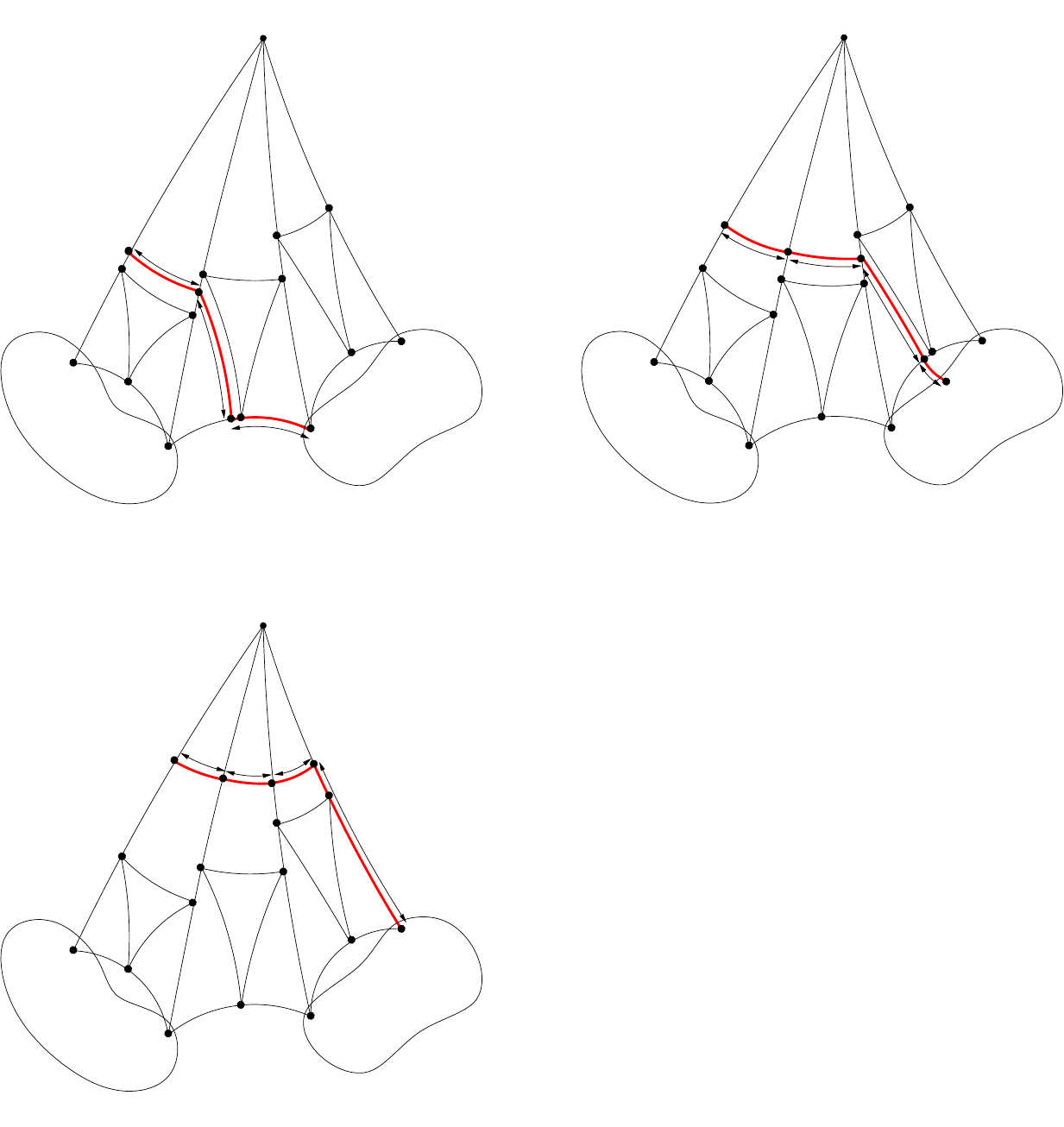_t}
\end{center}
\caption{To the proof of Lemma~\ref{lemma:p&c-2g-close}.} \label{fig4}
\end{figure}

We start with a fundamental lemma, which can be viewed as an extension of \cite[Lemma 1]{ChEs} from balls to all quasiconvex sets. For nonnegative integers $r,\epsilon,\delta$, let $r^*:=r^*(r,\epsilon,\delta):=\max\{ 2\epsilon+5\delta, r+\epsilon+3\delta\}$.

\begin{lemma}  \label{lemma:p&c-2g-close}
Let $G$ be a graph with $\delta$-thin triangles, $r$ be a nonnegative integer, $z$ be a  vertex of $G$, and $Q'$, $Q''$ be two $2r$-close  $\epsilon$-quasiconvex sets  of $G$ such that
$d(z,Q'')\geq d(z,Q')$. 
If $x$ is a vertex of $Q''$ closest to $z$ and $c$ is the vertex at distance $r$ from $x$ on a $(x,z)$-geodesic $[x,z]$, then $d(c,Q')\leq r^*$. If $Q'$ and $Q''$ are two $2r$-close geodesics of $G$, then $d(c,Q')\leq \max\{r+3\delta, 5\delta\}$.
\end{lemma}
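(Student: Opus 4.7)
The approach is to transport $c$ from the geodesic $[x,z]$ to a point close to $Q'$ via a short chain of partner-point transitions across successive $\delta$-thin geodesic triangles, paying $\delta$ per hop and at most two $\epsilon$'s for quasiconvexity slack. The two summands in $r^* = \max\{2\epsilon + 5\delta,\, r + \epsilon + 3\delta\}$ correspond to the two ways the chain can terminate. I pick $v \in Q''$ and $w \in Q'$ realizing $d(Q'', Q') \leq 2r$ (so $d(v,w)\leq 2r$), and set $\alpha := (v|z)_x = \tfrac12(d(x,v)+d(x,z)-d(v,z))$; since $x$ is the closest vertex of $Q''$ to $z$, the inequality $d(z,x)\leq d(z,v)$ forces $\alpha \leq d(x,v)/2$. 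The case split is driven by $r$ versus $\alpha$.

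\emph{Case A} ($r \leq \alpha$): The $\delta$-thin triangle $\Delta(x,v,z)$ gives a partner $c_v \in [x,v]$ with $d(x,c_v) = r$ and $d(c,c_v)\leq\delta$. Passing $c_v$ through $\Delta(x,v,w)$, the main sub-case is that its partner $c_w$ lands on $[v,w]$ with $d(v,c_w) = d(x,v)-r$; since $d(v,w)\leq 2r$ and $d(x,v)\geq 2r$ (from $r \leq \alpha \leq d(x,v)/2$), one computes $d(c_w,w) = d(v,w) - d(x,v) + r \leq r$, so $d(c,Q') \leq r + 2\delta \leq r^*$. In the remaining sub-case $c_v$ drops instead onto $[x,w]$, and a further hop (using either $\epsilon$-quasiconvexity of $Q'$ at a suitably chosen point $y \in Q'$ or an extra triangle $\Delta(x,w,z)$) contributes the additional $\epsilon$ and $\delta$ needed to stay within $2\epsilon + 5\delta$.

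\emph{Case B} ($r > \alpha$): Now the partner of $c$ under $\Delta(x,v,z)$ is $c' \in [v,z]$ with $d(c,c') \leq \delta$ and $d(z,c') = d(z,x) - r$. I then apply the $\delta$-thin triangle property to $\Delta(v,z,w)$: the hypothesis $d(z,Q') \leq d(z,x) \leq d(z,v)$ constrains the tripod parameter $(v|w)_z$, so $c'$ has a partner on $[z,w]$ within $\delta$, and one more $\epsilon$-quasiconvexity step into $Q'$ closes the estimate, accumulating to the $2\epsilon + 5\delta$ summand when $r$ is small or the $r + \epsilon + 3\delta$ summand when $r$ is large. For the geodesic refinement, when $Q''$ and $Q'$ are themselves geodesics each quasiconvexity step is free (the target of each transport already lies exactly on the next geodesic rather than only in its neighbourhood), and a slightly tighter count of $\delta$-hops yields $\max\{r+3\delta,\, 5\delta\}$.

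The main obstacle I anticipate is the bookkeeping: at each step one must carefully identify in which branch of the next tripod the current partner lands, and the hypothesis $d(z,Q') \leq d(z,Q'')$ has to be deployed at the precise point where it controls $(v|w)_z$ in Case~B. Once the correct branch is pinned down in each sub-case, the remaining work is a matter of combining the inequalities $d(v,w) \leq 2r$, $\alpha \leq d(x,v)/2$, and the $\epsilon$-quasiconvexity of $Q'$ to produce whichever of the two terms in $r^*$ is dominant.
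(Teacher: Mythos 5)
Your proposal has genuine gaps in both cases, and the missing ingredient is the same in each: you never properly introduce a nearest vertex $y\in Q'$ to $z$, which is precisely what the paper uses to (i) apply $\epsilon$-quasiconvexity of $Q'$ via the geodesic $[y,w]$ between two points \emph{of} $Q'$, and (ii) cash in the hypothesis $d(z,Q'')\ge d(z,Q')$ through the identity $d(z,y)=d(z,Q')$.

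In Case~A, your ``main sub-case'' ($r\ge(v|w)_x$) is correct and gives $d(c,Q')\le r+2\delta$. But the ``remaining sub-case'' ($r<(v|w)_x$), where the partner $c_w'$ of $c_v$ lands on $[x,w]$, is just hand-waved: $x\notin Q'$, so there is no quasiconvexity bound available from $[x,w]$, and invoking ``a suitably chosen $y\in Q'$'' or ``an extra triangle $\Delta(x,w,z)$'' without saying how they yield the estimate does not close it. In Case~B, two things fail. First, the claim that $d(z,Q')\le d(z,x)\le d(z,v)$ ``constrains $(v|w)_z$'' so that $c'$'s partner lands on $[z,w]$ is not justified: $(v|w)_z=\tfrac12(d(z,v)+d(z,w)-d(v,w))$, and nothing in those inequalities forces $d(z,x)-r\le(v|w)_z$; the partner of $c'$ under $\Delta(v,z,w)$ may instead fall on $[v,w]$, a sub-case you do not treat. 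Second, even when the partner does land on $[z,w]$, ``one more $\epsilon$-quasiconvexity step into $Q'$'' is not available, since $[z,w]$ is not a geodesic between two points of $Q'$ (again because $z\notin Q'$). Quasiconvexity of $Q'$ can only be invoked along $[y,w]$ with $y,w\in Q'$.

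The paper's proof avoids all of this by fixing $y$ as a closest vertex of $Q'$ to $z$, working in the two triangles $\Delta(x,v,z)$ and $\Delta(z,w,y)$ (not $\Delta(x,v,w)$), and taking a common cross-section at distance $\beta:=d(x,z)-\max(r,\alpha_x)$ from $z$ on the four geodesics $[z,x],[z,v],[z,w],[z,y]$. The case analysis then hinges on where $\beta$ sits relative to $(v|w)_z$ and $(w|y)_z$, and the hypothesis $d(z,Q'')\ge d(z,Q')$ enters only in the last case via $d(z,y)\le d(z,x)$ giving $d(c_y,y)\le\alpha$. Your outline captures the flavour of transporting across $\delta$-thin triangles, but without $y$ and the cross-section at distance $\beta$ you cannot control the two problematic sub-cases, so the argument as stated does not establish the lemma.
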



\begin{proof}
Let $w\in Q'$ and $v\in Q''$ such that $d(v,w)\leq 2r$. Let $y$ be a vertex of $Q'$ closest to $z$.  Consider two geodesic triangles  $\Delta(z,w,y):=[z,w]\cup [w,y]\cup [y,z]$ and $\Delta(x,v,z):=[x,v]\cup [v,z]\cup [z,x]$, where $[y,z], [z,w], [w,y], [x,v], [z,x]$,  and $[v,z]$ are arbitrary geodesics connecting the corresponding vertices
(if $Q'$  and $Q''$ are geodesics, then we suppose that $[v,x]\subseteq Q''$ and $[w,y]\subseteq Q'$).

Define the points $w'\in [y,z]$ and $z'\in [w,y]$ both located at distance $\alpha_y:=(w|z)_y$ from $y$ and the point $y'\in[w,z]$ located at distance $(w|y)_z$ from $z$ (and hence at distance
$(z|y)_w=d(w,z)-(w|y)_z$ from $w$).
Define the points $v''\in [x,z]$ and $z''\in [v,x]$ both located at distance $\alpha_x:=(v|z)_x$ from $x$ and the point $x''\in[v,z]$ located at distance $(v|x)_z$ from $z$ (and hence at distance
$(z|x)_v=d(v,z)-(v|x)_z$ from $v$).
From the definition of $y',z',w'$, any point $t\in [y',w]\cup [w',y]$ is at distance $\le \delta$ from a point $t'\in [y,z']\cup [z',w]=[y,w]$. Since $y,w\in Q'$ and $Q'$ is $\epsilon$-quasiconvex, $d(t',Q')\le \epsilon$, i.e., there exists a point $q'\in Q'$ such that $d(t',q')\le \epsilon$. Consequently, $d(t,Q')\le d(t,t')+d(t',Q')\le \delta+\epsilon$. Analogously, for any point $t\in [v'',x]\cup [x'',v]$ we have
$d(t,Q'')\le \delta+\epsilon$. In particular, $d(w',Q')\le \delta+\epsilon$ and $d(v'',Q'')\le \delta+\epsilon$. Since $y$ is a vertex of $Q'$ closest to $z$ and $w'\in [z,y]$, $y$ is also a vertex of $Q'$ closest to $w'$, thus $d(w',y)=d(w',Q')$. Consequently, $\alpha_y=d(w',Q')\le \delta+\epsilon$. Analogously, since $x$ is a vertex of $Q''$ closest to $z$ and $v''\in [x,z]$, we deduce that $\alpha_x=d(v'',x)=d(v'',Q'')\le \delta+\epsilon$. If $Q'$ and $Q''$ are geodesics, then one can easily see that $\alpha_x\le \delta$ and $\alpha_y\le \delta$.


Let $\alpha:=\max(r,\alpha_x),$ $\beta:=d(x,z)-\alpha$ and define the points $c_x\in [z,x],$ $c_v\in [z,v],$ $c_w\in [z,w]$ and $c_y\in [z,y]$ located at distance $\beta$ from $z$. Notice that
 $\beta\le (x|v)_z$.  Recall also that
$c$ is the point of $[x,z]$ at distance $r$ from $x$.  Therefore, if $r\ge \alpha_x$, then  $c$  coincides with $c_x\in [v'',z]$ and if $r<\alpha_x$, then $c_x=v''$ and $c$ is located between $x$ and $v''.$ Since $d(x,v'')\le \delta+\epsilon,$ in all cases we deduce that $d(c,c_x)\le \delta+\epsilon.$ Therefore, to bound $d(c,Q')$ it suffices to get a bound on $d(c_x,Q')$. We distinguish between three cases:

\medskip\noindent
{\bf Case 1.} $\beta> (v|w)_z$ (Fig. \ref{fig4}(i)).

\medskip
Since $x$ is a vertex of $Q''$ closest to $z$ and $d(z,c_x)=d(z,c_v)=\beta$, we have $d(v,c_v)\ge d(x,c_x) = \alpha.$ Let $t$ be the point on $[v,w]$ at distance $d(v,c_v)\ge \alpha$ from $v.$ Since $d(v,w)\le 2r$, $d(v,t)\ge \alpha$, and $w\in Q'$,  we deduce that $d(t,Q')\le d(t,w) \le 2r-\alpha.$ Again, since the geodesic triangles in $G$  are $\delta$-thin, we obtain that
$d(c_x,Q')\le d(c_x,c_v)+d(c_v,t)+d(t,Q') \le 2\delta+2r-\alpha.$
If $\alpha=r$, then $c_x$ coincide with $c$ and we get $d(c,Q')\le 2\delta+r.$ Otherwise, $r<\alpha_x$ and consequently $d(c,Q') \le d(c,c_x)+d(c_x,Q') \le \epsilon+\delta + 2\delta+2r-\alpha_x \le \epsilon+3\delta + r.$

\medskip\noindent
{\bf Case 2.} $\beta\le (v|w)_z$ and  $\beta\ge (w|y)_z$ (Fig. \ref{fig4}(ii)).

\medskip
 In this case,  $c_v\in [x'',z]$ and $c_w\in [y',w]$. By what has been shown above, in this case we have $d(c_w,Q')\le \delta+\epsilon$.  Since the geodesic triangles of $G$ are $\delta$-thin, $d(c_x,c_v)\le \delta$ and $d(c_v,c_w)\le \delta$. Consequently,  $d(c_x,Q')\le d(c_x,c_v)+d(c_v,c_w)+d(c_w,Q')\le \epsilon+3\delta$, i.e., in this case $d(c,Q')\le 2\epsilon+4\delta$.

\medskip\noindent
{\bf Case 3.} $\beta\le (v|w)_z$ and  $\beta< (w|y)_z$ (Fig. \ref{fig4}(iii)).

\medskip
In this case, $c_v\in [x'',z]$, while $c_w\in [y',z]$ and $c_y\in [w',z]$.  Since $d(z,x)=d(z,Q'')\ge d(z,Q')=d(z,y)$, $d(z,x)=d(z,c_x)+d(c_x,x)=\beta+\alpha$, and $d(z,y)=d(z,c_y)+d(c_y,y)=\beta+d(c_y,y)$, we conclude that $d(c_y,y)\le \alpha.$ Since the geodesic triangles  in $G$ are $\delta$-thin and $y\in Q'$, we derive that $d(c_x,Q')\le d(c_x,c_v)+d(c_v,c_w)+d(c_w,c_y)+d(c_y,y)\le 3\delta + \alpha.$ If $\alpha=r$, then $c_x$ coincide with $c$ and we get $d(c,Q')\le r+3\delta.$ Otherwise,  $\alpha=\alpha_x\le \epsilon+\delta$ and consequently, $d(c,Q') \le d(c,c_x)+d(c_x,Q')\le \epsilon+\delta + 3\delta + \epsilon+\delta\le 2\epsilon+5\delta.$

\medskip
This proves that in all cases we have $d(c,Q')\le r^*$, where  $r^*:=\max\{ 2\epsilon+5\delta, r+\epsilon+3\delta\}$. If $Q'$ and $Q''$ are two $2r$-close geodesics, then $\alpha_x\le \delta$ and $\alpha_y\le \delta$ and one can see that $d(c,Q')\leq \max\{r+3\delta, 5\delta\}$.
\end{proof}

\begin{theorem} [Helly property for quasiconvex sets] \label{Helly_quasiconvex}  Let $G$ be a graph with  $\delta$-thin triangles  and ${\mathcal Q}$ be a finite collection of
$\epsilon$-quasiconvex subsets of $G$. If the sets of ${\mathcal Q}$ are pairwise $2r$-close, then there exists a  ball $B(c,r^*)$ of radius $r^*$ 
intersecting all sets of $\mathcal Q$. In particular, if ${\mathcal Q}$ is a collection of pairwise intersecting
$\epsilon$-quasiconvex subsets of $G$, then there exists a  ball of radius $2\epsilon+5\delta$ intersecting all sets of $\mathcal Q$ ($r^*=\max\{r+3\delta, 5\delta\}$ if the sets of $\mathcal Q$ are geodesics). 
\end{theorem}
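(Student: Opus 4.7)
The plan is to derive the Helly property as a one-shot consequence of Lemma~\ref{lemma:p&c-2g-close}, with no induction needed. The lemma already says that, given any vertex $z$ and any farthest set $Q''$, moving $r$ steps along a geodesic from the foot of $z$ on $Q''$ towards $z$ produces a witness point that is within $r^*$ of every ``closer'' quasiconvex set. So the natural strategy is to pick $z$ arbitrarily, let $Q'' \in \mathcal{Q}$ be a set maximizing $d(z, Q'')$ (well-defined since $\mathcal{Q}$ is finite), and show that the lemma's witness is the center of the desired ball.

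Concretely, I would fix $z \in V(G)$, set $d^* := \max_{Q \in \mathcal{Q}} d(z, Q)$ and choose $Q'' \in \mathcal{Q}$ realizing this maximum. If $d^* \leq r$, then $d(z, Q) \leq r \leq r^*$ for every $Q \in \mathcal{Q}$, so $B(z, r^*)$ itself meets each member of $\mathcal{Q}$ and we are done. Otherwise $d^* > r$: pick a vertex $x \in Q''$ closest to $z$ (so $d(x, z) = d^*$) and a vertex $c$ on some $(x, z)$-geodesic at distance exactly $r$ from $x$ (this point exists because $d(x, z) > r$).

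For any other $Q' \in \mathcal{Q}$ we have $d(z, Q') \leq d(z, Q'') = d^*$, and $Q', Q''$ are $2r$-close by hypothesis, so Lemma~\ref{lemma:p&c-2g-close} applies verbatim to the triple $(z, Q', Q'')$ and yields $d(c, Q') \leq r^*$. Moreover $d(c, Q'') \leq d(c, x) = r \leq r^*$ since $r^* \geq r + \epsilon + 3\delta \geq r$. Hence $B(c, r^*)$ intersects every set in $\mathcal{Q}$, proving the general statement. Specializing to $r = 0$ (pairwise intersecting sets) collapses $r^*$ to $2\epsilon + 5\delta$, and for collections of geodesics one substitutes the improved bound $d(c, Q') \leq \max\{r + 3\delta, 5\delta\}$ from the last sentence of the lemma; the rest of the argument is unchanged.

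The main difficulty, such as it is, lies entirely in the lemma rather than in this reduction: one must take care that the hypothesis $d(z, Q'') \geq d(z, Q')$ of the lemma is automatically met by choosing $Q''$ to be a farthest set from $z$, and that the case split on $d^* \lessgtr r$ correctly handles the degenerate situation where the geodesic $[x, z]$ is too short to support the point $c$. Once those two points are checked, the bound on the ball's radius is dictated exactly by $r^*$, and no auxiliary Helly-type input (e.g. for balls) is required.
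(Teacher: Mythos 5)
Your proof is correct and takes essentially the same route as the paper: fix an arbitrary $z$, take $Q''$ realizing the maximum of $d(z,\cdot)$ over $\mathcal Q$, pick $x \in Q''$ nearest to $z$ and $c \in [z,x]$ at distance $r$ from $x$, and invoke Lemma~\ref{lemma:p&c-2g-close} for each remaining set. Your extra case split on $d^* \le r$ (where you use $z$ itself as the center) is a small but legitimate tightening: the paper tacitly assumes $d(z,Q_1) \ge r$ so that the point $c$ exists, whereas you handle the degenerate short-geodesic case explicitly.
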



\begin{proof} Let $z$ be an arbitrary vertex of $G$. Suppose that the sets of $\mathcal Q$ are ordered  $Q_1,Q_2,\ldots,Q_n$ in such a way that $d(z,Q_1)\ge d(z,Q_2)\ge\ldots\ge d(z,Q_n)$. Let $x$ be a  vertex of $Q_1$ closest to $z$, i.e., $d(z,x)=d(z,Q_1)$. Let $[z,x]$ be any geodesic between $z$ and $x$  and let $c$ be the point of $[z,x]$ located at distance $r$ from $x$. Since $d(z,Q_1)\ge d(z,Q_i)$ for any $i=2,\ldots,n$, applying for each such $i$  Lemma  \ref{lemma:p&c-2g-close} with $Q'':=Q_1$ and $Q':=Q_i$, we obtain that $d(c,Q_i) \le r^*:=\max\{ 2\epsilon+5\delta, r+\epsilon+3\delta\}$ and $d(x,Q_i)\le r^*:=\max\{r+3\delta, 5\delta\}$ if all sets of $\mathcal Q$ are geodesics. Consequently, $B(c,r^*)$ intersects all sets of $\mathcal Q$.
\end{proof}

\begin{remark} In case when $\mathcal Q$ is a set of geodesics, with a different proof one can slightly improve the bounds in Theorem \ref{Helly_quasiconvex}: $2\delta$ instead of $5\delta$ if the geodesics of $\mathcal Q$ pairwise intersect and $\max\{r+3\delta, 4\delta\}$ instead of $\max\{r+3\delta, 5\delta\}$ if the geodesics of $\mathcal Q$ are $2r$-close.
\end{remark}

\begin{proposition} \label{transversal_packing} Let $\mathcal Q$ be a finite collection of $\epsilon$-quasiconvex sets of a graph $G$ with $\delta$-thin triangles. Then the packing number $\pi(\mathcal Q)$ and the transversal number  $\tau({\mathcal Q}_{2\epsilon+5\delta})$ satisfy the inequality $\tau({\mathcal Q}_{2\epsilon+5\delta})\le \pi({\mathcal Q})$. Moreover,  a hitting set $T$ of ${\mathcal Q}_{2\epsilon+5\delta}$ and a packing ${\mathcal P}$ of ${\mathcal Q}$ such that $|T|=|{\mathcal P}|$ can be constructed in polynomial time. More generally, for any integer $r\ge 0$, $\tau({\mathcal Q}_{r^*})\le \pi({\mathcal Q}_r)$ ($r^*:=\max\{r+3\delta, 5\delta\}$ for geodesics and $r^*:=\max\{r+4\delta, 7\delta\}$ for intervals);  a hitting set $T_{r^*}$ of ${\mathcal Q}_{r^*}$ and a packing ${\mathcal P}_r$ of ${\mathcal Q}_r$ such that $|T_{r^*}|=|{\mathcal P}_r|$ can be constructed in polynomial time.
\end{proposition}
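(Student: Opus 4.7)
The plan is to run a primal-dual greedy algorithm driven by Lemma~\ref{lemma:p&c-2g-close} that simultaneously produces the hitting set $T$ and the packing $\mathcal{P}$ with $|T|=|\mathcal{P}|$. Fix any basepoint $z\in V(G)$. Starting from $\mathcal{Q}':=\mathcal{Q}$, $T:=\emptyset$, $\mathcal{P}:=\emptyset$, I iterate the following step while $\mathcal{Q}'\neq \emptyset$: pick $Q^{*}\in \mathcal{Q}'$ maximizing $d(z,Q^{*})$, let $x$ be a vertex of $Q^{*}$ closest to $z$, and set $c$ to be the vertex on a fixed $(x,z)$-geodesic at distance $\min\{r,d(x,z)\}$ from $x$; then add $Q^{*}$ to $\mathcal{P}$, add $c$ to $T$, and delete from $\mathcal{Q}'$ every $Q$ with $d(c,Q)\le r^{*}$. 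Note that $Q^{*}$ itself is removed because $d(c,Q^{*})\le r\le r^{*}$, so the loop terminates in at most $|\mathcal{Q}|$ iterations.

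Correctness of the hitting set is immediate from the termination condition. For the packing inequality, consider any two distinct $Q_{1}^{*},Q_{2}^{*}\in \mathcal{P}$ with $Q_{1}^{*}$ selected first and write $c_{1}$ for the corresponding center added to $T$. At the moment $Q_{1}^{*}$ was chosen, $Q_{2}^{*}$ still belonged to $\mathcal{Q}'$, so the greedy rule gives $d(z,Q_{1}^{*})\ge d(z,Q_{2}^{*})$. Were $d(Q_{1}^{*},Q_{2}^{*})\le 2r$, Lemma~\ref{lemma:p&c-2g-close} applied with $Q'':=Q_{1}^{*}$ and $Q':=Q_{2}^{*}$ would force $d(c_{1},Q_{2}^{*})\le r^{*}$, so $Q_{2}^{*}$ would have been deleted in that iteration, a contradiction. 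Hence $d(Q_{1}^{*},Q_{2}^{*})>2r$, which shows $\mathcal{P}$ is a valid $r$-packing of $\mathcal{Q}$ and therefore $|T|=|\mathcal{P}|\le \pi(\mathcal{Q}_{r})$. The base case $r=0$ recovers $\tau(\mathcal{Q}_{2\epsilon+5\delta})\le \pi(\mathcal{Q})$.

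The specializations to geodesics and intervals follow by invoking the sharper constants of Lemma~\ref{lemma:p&c-2g-close} (and the remark after Theorem~\ref{Helly_quasiconvex}), since in those cases $\alpha_x,\alpha_y\le \delta$; intervals are $\delta$-quasiconvex by Lemma~\ref{quasiconvex}, so setting $\epsilon=\delta$ yields $r^{*}=\max\{r+4\delta,7\delta\}$. Polynomial running time is clear: each iteration only needs the distances $d(z,\cdot)$ and $d(c,\cdot)$ from two vertices to every surviving set in $\mathcal{Q}'$, all computable by BFS in $O(|V||E|)$ time, and the outer loop runs at most $|\mathcal{Q}|$ times. The one delicate point, and the main obstacle to be careful about, is selecting $Q^{*}$ as the farthest remaining set from a single fixed basepoint $z$: this is exactly what furnishes the hypothesis $d(z,Q'')\ge d(z,Q')$ of Lemma~\ref{lemma:p&c-2g-close} for every future pair in $\mathcal{P}$, and any more local or adaptive rule for choosing $Q^{*}$ would break the packing verification.
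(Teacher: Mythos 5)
Your proof is correct and follows essentially the same approach as the paper: a primal-dual greedy that repeatedly selects the set in $\mathcal{Q}'$ farthest from a fixed basepoint $z$, adds a point $c$ near it to $T$ and the set itself to $\mathcal{P}$, and discards sets hit by $B(c,r^*)$, with Lemma~\ref{lemma:p&c-2g-close} supplying the key bound that makes the packing verification go through. The only cosmetic difference is the deletion rule: the paper removes from $\mathcal{Q}'$ exactly the sets that are $2r$-close to $Q^*$ (which are a subset of those within $r^*$ of $c$), whereas you remove all sets within $r^*$ of $c$; both choices leave the correctness argument intact, and your justification that the resulting $\mathcal{P}$ is a genuine $r$-packing correctly mirrors the paper's use of the farthest-first order to guarantee the hypothesis $d(z,Q'')\ge d(z,Q')$ of the lemma.
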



\begin{proof} We start with the first assertion. The proof of this result is algorithmic: we construct
the packing $\mathcal P$ and the hitting set $T$ step by
step ensuring that the following properties hold: (i) each time
when a new point is inserted in $T,$ then a new set of $\mathcal Q$  is also
inserted in ${\mathcal P},$ and (ii) at the end, the sets of $\mathcal P$ are pairwise disjoint
and  $T$ is  a hitting set of ${\mathcal Q}_{2\epsilon+5\delta}$.

The algorithm starts with ${\mathcal Q}^*:={\mathcal Q}$, $T:=\varnothing$, and ${\mathcal P}:=\varnothing$.
Let $z$ be an arbitrary fixed vertex of $G$.
While the set ${\mathcal Q}^*$ is nonempty, the algorithm computes the distances
from $z$ to the sets of ${\mathcal Q}^*$. Suppose that ${\mathcal Q}^*=\{ Q_1,\ldots, Q_n\}$, where
$d(z,Q_1)\ge d(z,Q_2)\ge \ldots \ge d(z,Q_n)$. Set $Q'':=Q_1$. Denote by ${\mathcal Q}'$  the subfamily
of ${\mathcal Q}^*$ consisting of $Q''$ and all sets of ${\mathcal Q}^*$ intersecting the set $Q''$. Let $x$ be a  vertex of $Q''$ closest to
$z$. Applying Lemma \ref{lemma:p&c-2g-close} with $r=0$ we deduce that $d(x,Q')\le 2\epsilon+5\delta$ for
any set $Q'\in {\mathcal Q}'$ (since $r=0$, $x$ plays the role of $c$), i.e, $x\in N_{2\epsilon+5\delta}(Q')$. Then we include
the vertex $x$ in the transversal $T$ and the set $Q'$ in the packing $\mathcal P$, and we update ${\mathcal Q}^*$ by setting
${\mathcal Q}^*\leftarrow {\mathcal Q}^*\setminus {\mathcal Q}'$. By construction, $x$ belongs to all sets of ${\mathcal Q}'_{2\epsilon+5\delta}$
and $Q''$ is disjoint from all sets previously included in $\mathcal P$. This implies that at the end, each set of ${\mathcal Q}_{2\epsilon+5\delta}$
contains a point of $T$ and that the sets of ${\mathcal P}$ are pairwise disjoint. Therefore, $T$ is a transversal of
${\mathcal Q}_{2\epsilon+5\delta}$, ${\mathcal P}$ is a packing of ${\mathcal Q}$, and $|T|=|{\mathcal P}|$.

The second (general) assertion can be established in a similar way subject to the following changes. Initially, we set ${\mathcal Q}^*:={\mathcal Q}$,
$T_{r^*}:=\varnothing$, and ${\mathcal P}_r:=\varnothing$.
At each step, $Q''$ is a furthest from $z$ set  of ${\mathcal Q}^*$, $x$ is a  vertex of $Q''$ closest to $z$ and $c$ is a point on $[z,x]$ at distance
$r$ from $x$.  The set-family ${\mathcal Q}'$ consists of $Q''$ and all sets of ${\mathcal Q}^*$ which are $2r$-close to $Q''$. By Lemma \ref{lemma:p&c-2g-close},
we deduce that $d(x,Q')\le r^*$ for any set $Q'\in {\mathcal Q}'$. Then we include the vertex $c$ in $T_{r^*}$ and the $r$-neighborhood  $N_r(Q'')$ of
$Q''$ in  ${\mathcal P}_r$. Finally, ${\mathcal Q}^*$ is updated by setting ${\mathcal Q}^*:={\mathcal Q}^*\setminus {\mathcal Q}'$. From the construction
it follows that $T_{r^*}$ is a hitting set of ${\mathcal Q}_{r^*}$, ${\mathcal P}_r$ is   a packing of ${\mathcal Q}_r$, and  $|T_{r^*}|=|{\mathcal P}_r|$.
\end{proof}

%


\begin{remark} All results of this section also hold for quasiconvex sets of geodesic $\delta$-hyperbolic spaces.
\end{remark}



\section{Total cores}

In this section, we establish the results about total multi-cores and total beam cores in $\delta$-hyperbolic graphs.
We start with the problem of computation of total $r$-multi-cores of
minimal size. Recall that for a commodity graph $R=(X,F)$ and an integer  $r\ge 0$, $\sigma_r(R)$ denotes the smallest size
of an $r$-multi-core for $R$. Let ${\mathcal I}(R):=\{ I(x,y): xy\in F\}$ denote the set-family in which the sets are the intervals
defined by the edges of $R$. Denote by ${\mathcal I}_r(R)$ the $r$-inflation of ${\mathcal I}(R)$. The following result
shows that for any $r\ge 8\delta$ it is possible to construct in polynomial time a total $r$-multi-core not of optimal size $\sigma_r(R)$ but of size
$\sigma_{r-5\delta}(R).$

\begin{proposition} [Total multi-cores] \label{multicore} Let $G=(V,E)$ be a graph with $\delta$-thin triangles. For any commodity graph $R=(X,F)$
with $X\subseteq V$ and any integer $r\ge 8\delta$,   the following inequalities hold:
$$\pi({\mathcal I}_r(R))\le \tau({\mathcal I}_r(R))\le \sigma_r(R)\le \tau({\mathcal I}_{r-\delta}(R))\le \pi({\mathcal I}_{r-5\delta}(R))\le \sigma_{r-5\delta}(R).$$
A total  $r$-multi-core of $R$ of size  $\sigma_{r-5\delta}(R)$ can be constructed in polynomial time.
\end{proposition}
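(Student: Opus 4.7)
The plan is to prove the chain of inequalities by passing through hitting and packing numbers of the interval family $\mathcal{I}(R)$, using two crucial geometric facts: that intervals in graphs with $\delta$-thin triangles are $\delta$-thin (Lemma \ref{interval_thin}) and $\delta$-quasiconvex (Lemma \ref{quasiconvex}). The two outer inequalities $\pi({\mathcal I}_r(R))\le \tau({\mathcal I}_r(R)) \le \sigma_r(R)$ are essentially formal: the first is the universal inequality between packing and transversal numbers of any set-family, while the second follows by observing that if $\{B(c_1,r),\ldots,B(c_k,r)\}$ is a total $r$-multi-core of $R$, then for every edge $xy\in F$ some ball $B(c_i,r)$ must meet every $(x,y)$-geodesic, so in particular $B(c_i,r)\cap I(x,y)\ne\varnothing$, which is exactly the statement that $c_i\in N_r(I(x,y))$. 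The rightmost pair $\pi({\mathcal I}_{r-5\delta}(R))\le \sigma_{r-5\delta}(R)$ is just the same formal chain applied with $r$ replaced by $r-5\delta$.

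The substance of the proposition lies in the two middle inequalities. For $\sigma_r(R)\le \tau({\mathcal I}_{r-\delta}(R))$, I would take a minimum hitting set $T$ of $ {\mathcal I}_{r-\delta}(R)$ and argue that $\{B(t,r):t\in T\}$ is a total $r$-multi-core of $R$. Indeed, for each $xy\in F$ there is a $t\in T$ with $d(t,p)\le r-\delta$ for some $p\in I(x,y)$. By Lemma \ref{interval_thin}, any other $(x,y)$-geodesic contains a point $p'$ at the same distances from $x$ and $y$ as $p$, hence $d(p,p')\le \delta$, and so $d(t,p')\le r$, showing that $B(t,r)$ meets every $(x,y)$-geodesic. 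For $\tau({\mathcal I}_{r-\delta}(R))\le \pi({\mathcal I}_{r-5\delta}(R))$, I would invoke the sharper version of Proposition \ref{transversal_packing} specialised to intervals (where $\epsilon=\delta$), which asserts $\tau({\mathcal I}_{s^*}(R))\le \pi({\mathcal I}_s(R))$ with $s^*=\max\{s+4\delta, 7\delta\}$. Taking $s:=r-5\delta$ gives $s^*=\max\{r-\delta, 7\delta\}$, and the hypothesis $r\ge 8\delta$ is used precisely to guarantee $r-\delta\ge 7\delta$, so $s^*=r-\delta$ as required.

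For the algorithmic conclusion, the proof of Proposition \ref{transversal_packing} already provides a polynomial-time procedure producing simultaneously a hitting set $T$ of $ {\mathcal I}_{r-\delta}(R)$ and a packing $\mathcal{P}$ of $ {\mathcal I}_{r-5\delta}(R)$ with $|T|=|\mathcal{P}|\le \pi({\mathcal I}_{r-5\delta}(R))\le \sigma_{r-5\delta}(R)$. Combining this with the conversion from the previous step, the family $\{B(t,r):t\in T\}$ constitutes a total $r$-multi-core of $R$ of size at most $\sigma_{r-5\delta}(R)$, and this entire construction runs in polynomial time.

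The main obstacle is the inequality $\sigma_r(R)\le \tau({\mathcal I}_{r-\delta}(R))$: a hitting set of the $(r-\delta)$-neighborhoods of intervals merely witnesses that some $(x,y)$-geodesic is close to a chosen center, whereas a total multi-core must intercept \emph{every} $(x,y)$-geodesic. The $\delta$-thinness of intervals is exactly the tool that bridges this gap, and it is what forces the $+\delta$ of extra radius. The hypothesis $r\ge 8\delta$ then enters only to make the packing-covering inequality of Proposition \ref{transversal_packing} applicable at the correct radii, so the chain is in fact quite tight.
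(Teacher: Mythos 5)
Your proposal is correct and follows essentially the same route as the paper: it establishes the outer inequalities formally, bridges $\sigma_r(R)\le\tau({\mathcal I}_{r-\delta}(R))$ via $\delta$-thinness of intervals (Lemma \ref{interval_thin}), and obtains $\tau({\mathcal I}_{r-\delta}(R))\le\pi({\mathcal I}_{r-5\delta}(R))$ from the interval specialization of Proposition \ref{transversal_packing} applied at radius $r-5\delta$, with $r\ge 8\delta$ ensuring $\max\{r-\delta,7\delta\}=r-\delta$. The algorithmic conclusion is also obtained exactly as in the paper, from the constructive hitting-set/packing pair of Proposition \ref{transversal_packing}.
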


\begin{proof} Let $C$ be any total $r$-multi-core of the commodity graph $R$ and let $xy$ be any edge of $R$. For any geodesic $[x,y]$ between $x$ and $y$ there exists a vertex $c\in C$  such that
the ball $B(c,r)$ intercepts $[x,y]$, i.e., $[x,y]\cap B(c,r)\ne\varnothing$. Consequently,  $d(c,I(x,y))\le r$ and therefore $C$ is a hitting set of ${\mathcal I}_r(R)$. This shows that $\tau({\mathcal I}_r(R))\le \sigma_r(R)$. The inequality $\pi({\mathcal I}_r(R))\le \tau({\mathcal I}_r(R))$ is trivial.

To prove the inequality $\sigma_r(R)\le \tau({\mathcal I}_{r-\delta}(R))$, let $T$ be a hitting set of ${\mathcal I}_{r-\delta}(R)$. This implies that for any interval $I(x,y)$ with $xy\in F$, the $(r-\delta)$-neighborhood $N_{r-\delta}(I(x,y))$ of $I(x,y)$ intersects $T$. Let $t\in T\cap N_{r-\delta}(I(x,y))$. Let $z$ be a closest to $t$ vertex of $I(x,y)$ and suppose that $d(z,x)=k'$ and $d(z,y)=k''$, where $k'+k''=d(x,y)$.  From the choice of $t$ we conclude that $d(t,z)\le r-\delta$. Since $G$ is a graph with $\delta$-thin triangles, by Lemma \ref{interval_thin} the intervals of $G$ are $\delta$-thin. This implies that if $z'$ is any other vertex of $I(x,y)$ with $d(z',x)=k'$ and $d(z',y)=k''$, then $d(z,z')\le \delta$. Consequently, if $L$ is any geodesic between $x$ and $y$ and $z'$ is a vertex of $L$ at distance $k'$ from $x$, then $d(z,z')\le \delta$, yielding  $d(t,z')\le d(t,z)+d(z,z')\le r$. This implies that the ball $B(t,r)$ intercepts all geodesics $L$ between $x$  and $y$. Consequently, $T$ is an $r$-multi-core for $R$, establishing the inequality $\sigma_r(R)\le \tau({\mathcal I}_{r-\delta}(R))$.


The inequality $\tau({\mathcal I}_{r-\delta}(R))\le \pi({\mathcal I}_{r-5\delta}(R))$ is
obtained by applying Proposition \ref{transversal_packing} for intervals  with $r\ge 8\delta$ (since this proposition is used with $(r-5\delta)$ instead of $r$, we require $r\ge 8\delta$ to ensure $(r-5\delta)+4\delta\ge 7\delta$). Finally, the inequality $\pi({\mathcal I}_{r-5\delta}(R))\le \sigma_{r-5\delta}(R)$ is obtained by
applying the inequality $\pi({\mathcal I}_r(R))\le \sigma_r(R)$ with $r-5\delta$ instead of $r$. By Proposition \ref{transversal_packing}, a hitting set $T_{r-\delta}$ of ${\mathcal I}_{r-\delta}(R)$ and a packing ${\mathcal P}_{r-5\delta}$ of ${\mathcal I}_{r-5\delta}(R)$ such that $|T_{r-\delta}|=|{\mathcal P}_{r-5\delta}|\le \sigma_{r-5\delta}(R)$ can be constructed in polynomial time. Since  $T_{r-\delta}$ is also a total $r$-multi-core for $R$, we are done.
\end{proof}

Now, we will prove the existence of total beam cores in graphs $G$ with $\delta$-thin triangles. We proceed in two stages. First, we show that any two beams of $G$ are $2\delta$-close (a result which may be of independent interest) and we use the Helly theorem for geodesics  to show that $G$ admits a  total beam $5\delta$-core.  Then we present a direct proof that $G$ admits a total  beam $2\delta$-core which can be found  in linear time.

\begin{lemma}  \label{lemma:two-rays}
If  $G$ is a graph with $\delta$-thin triangles, then any two beams of $G$ are $2\delta$-close.
\end{lemma}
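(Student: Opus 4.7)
My plan is to use the geodesic $[x_1, x_2]$ as a bridge between the two beams $[x_1, y_1]$ and $[x_2, y_2]$, exploiting the two geodesic triangles $\Delta(x_1, x_2, y_1)$ and $\Delta(x_1, x_2, y_2)$ that share the side $[x_1, x_2]$. Let $\alpha := (x_2 \mid y_1)_{x_1}$ and $\eta := (x_2 \mid y_2)_{x_1}$ be the Gromov products at $x_1$ of these two triangles; these locate the tripod-center points of the two triangles on the shared side $[x_1, x_2]$, measured from $x_1$. The idea is to exhibit a single point $z$ on $[x_1, x_2]$ that is simultaneously $\delta$-close to both beams; the triangle inequality will then produce a pair $A \in [x_1, y_1]$, $C \in [x_2, y_2]$ with $d(A, C) \le 2\delta$.

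The key step is the comparison $\eta \le \alpha$. Since $y_1 \in P(x_1)$ and $y_2 \in P(x_2)$, we have $d(x_1, y_1) \ge d(x_1, y_2)$ and $d(x_2, y_2) \ge d(x_2, y_1)$, and expanding the definitions yields
$$\alpha - \eta = \tfrac{1}{2}\bigl((d(x_1,y_1) - d(x_1,y_2)) + (d(x_2,y_2) - d(x_2,y_1))\bigr) \ge 0.$$
This is the only place the peripheral property of $y_1$ and $y_2$ enters the argument.

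Now take $z$ to be the point of $[x_1, x_2]$ at distance $\eta$ from $x_1$. Since $z$ is precisely the tripod-center point of $\Delta(x_1, x_2, y_2)$ on the shared side, $\delta$-thinness of that triangle provides a point $C \in [x_2, y_2]$ (at distance $(x_1 \mid y_2)_{x_2}$ from $x_2$) with $d(z, C) \le \delta$. Since $\eta \le \alpha$, the point $z$ still lies on the $x_1$-branch of the tripod of $\Delta(x_1, x_2, y_1)$, so $\delta$-thinness of that second triangle gives a point $A \in [x_1, y_1]$ at distance $\eta$ from $x_1$ with $d(z, A) \le \delta$. Triangle inequality then yields $d(A, C) \le 2\delta$, which establishes the lemma. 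The only routine checks are that $\eta$ and $(x_1 \mid y_2)_{x_2}$ lie in the admissible ranges on their geodesics, each of which follows from a triangle inequality; I do not foresee any substantial obstacle beyond applying the two tripod maps correctly to the same point on the common side.
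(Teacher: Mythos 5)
Your proof is correct and takes essentially the same approach as the paper's: form two geodesic triangles, each having one of the beams as a side and both sharing a common bridging geodesic, then use the peripherality conditions to show that the $\delta$-thin tripod regions on the shared side overlap, which produces a point within $\delta$ of both beams. The only difference is the choice of bridge --- you take the geodesic $[x_1,x_2]$ between the two \emph{base} vertices and compare the two Gromov products $\eta\le\alpha$ directly, whereas the paper bridges across the geodesic between the two \emph{far} endpoints (its $[v,y]$) and compares each tripod-center position to the midpoint; the two variants are symmetric and equally efficient.
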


\begin{proof} Consider two arbitrary beams $[x,y]$ and $[u,v]$ with $y\in P(x), v\in P(u)$ and two geodesic triangles  $\Delta(u,v,y):=[u,v]\cup [v,y]\cup [y,u]$ and
$\Delta(x,v,y):=[x,y]\cup [v,y]\cup [v,x]$, where $[v,y], [y,u]$ and $[v,x]$ are arbitrary geodesics connecting $v$ with $y$, $y$ with $u$ and
$v$ with $x$.

Let $\alpha:= (y|u)_v= (d(y,v)+d(v,u)-d(y,u))/2$, $\beta:= (v|u)_y=(d(v,y)+d(y,u)-d(v,u))/2=d(v,y)-\alpha$. Since $d(u,v)\geq d(u,y)$, we have
$\alpha-\beta= (d(y,v)+d(v,u)-d(y,u)- d(v,y)-d(y,u)+d(v,u))/2= d(v,u)-d(y,u)\geq 0$, i.e., $\alpha\geq d(v,y)/2\geq \beta$. As geodesic triangles in $G$ are $\delta$-thin, all points of $[y,v]$ at distance at most $\alpha$ from $v$ are at distance at most $\delta$ from the beam $[u,v]$.

Similarly, if $\alpha':= (y|x)_v= (d(y,v)+d(v,x)-d(y,x))/2$ and $\beta':= (v|x)_y=(d(v,y)+d(y,x)-d(v,x))/2=d(v,y)-\alpha'$, then $\alpha'\leq d(v,y)/2\leq \beta'$, since $d(x,v)\leq d(x,y)$. Hence, all points of $[y,v]$ at distance at most $\beta'$ from $y$ are at distance at most $\delta$ from the beam $[x,y]$.

As $\alpha\geq d(v,y)/2$ and $\beta'\geq d(v,y)/2$, there must exist a point in  $[y,v]$ which is at distance at most $\delta$ from both beams $[u,v]$ and $[x,y]$. Hence, the beams  $[u,v]$ and $[x,y]$ are $2\delta$-close, concluding the proof.
\end{proof}

By Lemma \ref{lemma:two-rays}, any two beams of $G$ are $2\delta$-close. Since  any beam is a geodesic, by Theorem \ref{Helly_quasiconvex}  for geodesics, we deduce that all beams of $G$ are
intercepted by a single ball of radius $5\delta$. The following  result improves this:

\begin{proposition} [Existence of total beam cores] \label{thm:beam-core} Let $G$ be a finite graph with $\delta$-thin triangles, $u,v$ be a pair of mutually distant vertices of $G$ and $m$ be a middle vertex of any $(u,v)$-geodesic $[u,v]$. Then the ball $B(m,2\delta)$ is a total beam core of $G$. Moreover, $B(m,2\delta)$ can be computed in linear time if $\delta$ is a constant.
\end{proposition}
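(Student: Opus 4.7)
The plan is to bound the Gromov product $(x|y)_m$ from above by $\delta+\tfrac{1}{2}$ using two $\delta$-thin triangles, then to convert this bound into $d(m,[x,y])\le 2\delta$ via a third $\delta$-thin triangle and the integrality of graph distances. The two geometric inputs are (a)~$u,v$ mutually distant, giving $d(u,w),d(v,w)\le D:=d(u,v)$ for every vertex $w$; and (b)~$y\in P(x)$, giving $d(u,x),d(v,x)\le d(x,y)$. Let $s_u:=d(u,m)$ and $s_v:=d(v,m)$, so $s_u+s_v=D$ and $s_u,s_v\in\{\lfloor D/2\rfloor,\lceil D/2\rceil\}$.

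First, I apply $\delta$-thinness to each of $\Delta(u,v,y)$ and $\Delta(u,v,x)$: the vertex $m\in[u,v]$ sits in the $u$-portion or the $v$-portion of each triangle, and by thinness it is $\delta$-close to the matching point on the corresponding side. This yields a bound of the form $d(m,y)\le \delta+d(u,y)-s_u$ or $d(m,y)\le \delta+d(v,y)-s_v$ depending on which portion contains $m$, together with symmetric bounds for $d(m,x)$. Summing the two bounds and subtracting $d(x,y)$ to form $2(x|y)_m$ leads to a case split on which portion $m$ occupies in each triangle. In the two \emph{mixed} cases, the residual is $d(u,x)+d(v,y)-D-d(x,y)$ or its $u\leftrightarrow v$ symmetric counterpart; both are $\le 0$ by (a) and (b), so $(x|y)_m\le \delta$. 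In the two \emph{aligned} cases, the residual reduces to $2((x|y)_v-s_v)$ (or the symmetric form); combining $(x|y)_v\le d(v,y)/2\le D/2$ with $s_v\ge \lfloor D/2\rfloor$ gives $(x|y)_m\le \delta+\tfrac{1}{2}$.

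I then apply $\delta$-thinness to $\Delta(m,x,y)$: the tripod image of the center on the side $[x,y]$ is within $\delta$ of the point on $[m,x]$ at distance $(x|y)_m$ from $m$, so $d(m,[x,y])\le (x|y)_m+\delta\le 2\delta+\tfrac{1}{2}$. Since $d(m,[x,y])$ is an integer in a graph, this forces $d(m,[x,y])\le 2\delta$. The main technical obstacle is the aligned parity case when $D$ is odd, where $(x|y)_m$ can equal $\delta+\tfrac{1}{2}$: then the tripod image on $[x,y]$ is an edge midpoint, and its nearer graph-vertex endpoint is $\tfrac{1}{2}$ closer to $m$ than the midpoint itself, exactly absorbing the half-unit slack.

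For the linear-time claim when $\delta$ is a constant, I produce a mutually distant pair by the iterated BFS (multi-sweep) heuristic: from an arbitrary vertex, BFS and jump to a furthest vertex, then repeat. In a graph with $\delta$-thin triangles this process stabilizes on a mutually distant pair after a number of rounds depending only on $\delta$, so the total cost is $O(n+m)$; the middle vertex of the resulting $(u,v)$-geodesic is then read off directly from the final BFS distance labels.
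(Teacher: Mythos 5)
Your proof is correct, but it takes a genuinely different route from the paper's. The paper uses only two thin triangles: from $\Delta(u,v,y)$ it produces a relay point $m'$ on a $(v,y)$-geodesic with $d(m,m')\le\delta$ and, crucially, $(x|y)_v\le d(v,m')\le(y|u)_v$ (the left inequality comes from mutual distance of $u,v$ together with $d(x,v)\le d(x,y)$, the right from a WLOG choice of which half of $[u,v]$ contains $m$); then thinness of $\Delta(x,v,y)$ places $m'$ on the $y$-side of that tripod's center, so $d(m',[x,y])\le\delta$ and $d(m,[x,y])\le 2\delta$ at the true midpoint, with integer-radius rounding deferred to the very last line. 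You instead bound the Gromov product $(x|y)_m\le\delta+\tfrac12$ via a four-way case split over which portion of $\Delta(u,v,y)$ and of $\Delta(u,v,x)$ contains $m$, and then feed that into a third triangle $\Delta(m,x,y)$ to convert the product bound into $d(m,[x,y])\le 2\delta+\tfrac12$, absorbed to $2\delta$ by the midpoint/integrality bookkeeping. Both work; the paper's is leaner (two triangles, no case split, the $\tfrac12$ handled once). One small caution in your write-up: the sentence "Since $d(m,[x,y])$ is an integer, this forces $\le 2\delta$" is premature when the tripod-center image on $[x,y]$ is an edge midpoint; your subsequent remark (the nearer endpoint is exactly $\tfrac12$ closer, because in a metric graph $d(m,\text{midpoint})=\min(d(m,a),d(m,b))+\tfrac12$) is the actual argument, so the two sentences should be reordered or merged. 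Your "convert a Gromov-product bound at $m$ into $d(m,[x,y])\le(x|y)_m+\delta$" step is the same device the paper uses in Claim 3 of Theorem~\ref{core}, so you have in effect unified the total-beam-core statement with the $(\alpha,r)$-core argument, which is a pleasant byproduct. The linear-time claim is handled identically to the paper via iterated BFS sweeps.
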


\begin{proof}  Let $[x,y]$ be an arbitrary beam of $G$ with $y\in P(x)$ and let  $\Delta(u,v,y):=[u,v]\cup [v,y]\cup [y,u]$ be a geodesic triangle, where $[v,y], [y,u]$ are arbitrary geodesics connecting $y$ with $v$ and $u$.
Let $m_y$ be a point on $[u,v]$ which is at distance $(y|u)_v= (d(y,v)+d(v,u)-d(y,u))/2$ from $v$ and hence at distance $(y|v)_u= (d(y,u)+d(v,u)-d(y,v))/2$ from $u$. Since vertices $u$ and $v$ are mutually distant, we can assume, without loss of generality, that $m$  is located on $[u,v]$ between $v$ and $m_y$, i.e., $d(v,m)\leq d(v,m_y)=(y|u)_v$.

We claim that  $d(v,m)\geq (x|y)_v$. Indeed, if   $d(v,m)<(x|y)_v$, then $d(v,u)/2=d(v,m)<(x|y)_v=(d(x,v)+d(y,v)-d(x,y))/2\leq d(y,v)/2$ since $d(x,v)\leq d(x,y)$ (recall that $y$ is a farthest from $x$ vertex of $G$, i.e., $y\in P(x)$). But then $d(v,u)<d(y,v)$ contradicts with $u\in P(v)$.

Since $\Delta(u,v,y)$ is $\delta$-thin, there is a point $m'$ on $[v,y]$ with $(x|y)_v\leq d(v,m')\leq (y|u)_v$ and $d(m,m')\leq \delta$.
Consider now a geodesic triangle $\Delta(x,v,y):=[x,y]\cup [v,y]\cup [v,x]$, where $[v,x]$ is an arbitrary geodesic connecting $v$ with $x$. As $\Delta(x,v,y)$ is $\delta$-thin and  $(x|y)_v\leq d(v,m')$, we get  $d(m', [x,y])\leq \delta$, i.e., $d(m, [x,y])\leq 2\delta$.

Finally, if $m$ is not a vertex (i.e., it is the middle point of an edge connecting two middle vertices of $[u,v]$), then for any middle vertex $\bar{m}$ of $[u,v]$,  ball $B(\bar{m},2\delta+1/2)=B(\bar{m},2\delta)$ (as the radius of any ball in graphs can be taken as an integer)  of $G$ intercepts all beams of $G$. This establishes the existence of total beam cores.

To compute a total beam core in linear time, recall  that in $\delta$-hyperbolic graphs (and hence in graphs with $\delta$-thin triangles), if $y$ is a most distant vertex from an arbitrary vertex $z$ and $x$ is a most distant vertex from $y$, then $d(x,y)\geq \diam(G)-2\delta$ \cite[Proposition 3]{ChDrEsHaVa}. Hence, using at most $O(\delta)$ {\em breadth-first-searches}, one can generate a sequence of vertices $y:=v_1,x:=v_2, v_3, \dots v_k$ with $k\leq 2\delta+2$  such that each $v_i$ is most distant from $v_{i-1}$ and $v_k$ and $v_{k-1}$ are mutually distant vertices (the initial value $d(x,y)\geq \diam(G)-2\delta$ can be improved at most $2\delta$ times).
\end{proof}

The following proposition shows that, in a graph $G$ with $\delta$-thin triangles, a middle vertex of any geodesic between two mutually distant vertices is not far from the center $C(G)$ of $G$. This shows that, while $(\alpha,r)$-cores are close to median vertices, the total beam $2\delta$-cores are close to center vertices of $G$.  We will need the following lemma of an independent interest.

\begin{lemma} \label{lem:diam-rad}
For every graph with $\delta$-thin triangles, $\diam(G)\geq 2\rad(G)-2\delta-1$ holds.
\end{lemma}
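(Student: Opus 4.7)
The plan is to pick a diametral pair, take a middle vertex on a geodesic connecting them, and bound its eccentricity using the $\delta$-thinness of geodesic triangles. Let $D:=\diam(G)$ and let $x,y$ be vertices of $G$ with $d(x,y)=D$. Fix an $(x,y)$-geodesic $[x,y]$ and pick a middle vertex $m$ on it, so that $d(x,m)=\lfloor D/2\rfloor$ and $d(y,m)=\lceil D/2\rceil$. The goal is to show $\ecc(m)\le \lceil D/2\rceil+\delta$, since then $\rad(G)\le \ecc(m)\le D/2+1/2+\delta$, which rearranges to the claimed inequality $D\ge 2\rad(G)-2\delta-1$.

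To bound $\ecc(m)$, let $z$ be any vertex of $G$ and consider a geodesic triangle $\Delta(x,y,z)$ whose $[x,y]$-side is our chosen geodesic. Let $p$ be the branch point on $[x,y]$, so that $d(x,p)=(y|z)_x$ and $d(y,p)=(x|z)_y$, and recall that the image of $p$ under the canonical isometry to the tripod coincides with the images of the corresponding branch points on $[x,z]$ and $[y,z]$. The argument now splits according to which side of $p$ the vertex $m$ sits on.

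If $d(x,m)\le (y|z)_x$, then $\delta$-thinness produces a vertex $m'\in [x,z]$ with $d(x,m')=d(x,m)$ and $d(m,m')\le \delta$, giving
\[
d(m,z)\le d(m,m')+d(m',z)\le \delta+d(x,z)-d(x,m)\le \delta+D-\lfloor D/2\rfloor=\delta+\lceil D/2\rceil.
\]
Otherwise $d(x,m)>(y|z)_x$, so equivalently $d(y,m)<(x|z)_y$, and a symmetric application of $\delta$-thinness to the side $[y,z]$ yields a vertex $m'\in [y,z]$ with $d(y,m')=d(y,m)=\lceil D/2\rceil$ and $d(m,m')\le \delta$, hence
\[
d(m,z)\le \delta+d(y,z)-d(y,m)\le \delta+D-\lceil D/2\rceil=\delta+\lfloor D/2\rfloor\le \delta+\lceil D/2\rceil.
\]
In both cases $d(m,z)\le \lceil D/2\rceil+\delta$, so $\ecc(m)\le \lceil D/2\rceil+\delta$ and the desired inequality follows.

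There is no serious obstacle: the only mildly delicate point is handling the parity of $D$ and the fact that an odd-$D$ geodesic has its midpoint on an edge rather than at a vertex, which is precisely why we pass to a middle \emph{vertex} and pay an extra $+1/2$ that produces the $-1$ in the statement. Everything else is a direct application of Lemma \ref{hyp_thin}/the $\delta$-thin triangle hypothesis together with the Gromov-product identity $(y|z)_x+(x|z)_y=d(x,y)$ which guarantees that $m$ lies on one side or the other of the branch point $p$.
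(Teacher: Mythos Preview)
Your proof is correct, but it follows a different route from the paper's. The paper argues by contradiction: assuming $\diam(G)\le 2\rad(G)-2\delta-2$, the balls $B(v,\rad(G)-\delta-1)$ pairwise intersect, and the Helly property for balls in graphs with $\delta$-thin triangles (Proposition~\ref{helly-balls}) then produces a vertex of eccentricity at most $\rad(G)-1$, a contradiction. Your argument is instead constructive and self-contained: you exhibit an explicit near-center (a middle vertex of a diametral geodesic) and bound its eccentricity directly from the $\delta$-thin triangle condition, without invoking the Helly result. This buys you a more elementary proof that does not depend on \cite{ChEs}, and it even yields the slightly sharper $\diam(G)\ge 2\rad(G)-2\delta$ when $\diam(G)$ is even. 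One cosmetic point: your closing reference to Lemma~\ref{hyp_thin} is misplaced, since that lemma goes from thin triangles to hyperbolicity; you are simply using the $\delta$-thin hypothesis itself.
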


\begin{proof} Assume  $\diam(G)\leq 2\rad(G)-2\delta-2$ and consider a family of balls $\{B(v,\rad(G)-\delta-1): v\in V\}$. Each two balls of that family intersect. By Proposition~\ref{helly-balls}, there must exist a vertex $x$ which is at distance at most $\rad(G)-\delta-1+\delta=\rad(G)-1$ from every vertex $v$ of $G$. Since $\ecc(x)\leq \rad(G)-1$, a contradiction arises.
\end{proof}

\begin{proposition} \label{thm:close-to-center}
Let $G$ be a graph with $\delta$-thin triangles, $u,v$ be a pair of mutually distant vertices of $G$ and $m$ be a middle vertex of any $(u,v)$-geodesic. Then $C(G)\subseteq B(m,4\delta+1)$.
\end{proposition}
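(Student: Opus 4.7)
The plan is to place $c$ and $m$ inside a single $\delta$-thin geodesic triangle on the mutually distant pair $(u,v)$, and bound $d(c,m)$ via the tripod. Let $R=\rad(G)$ and $D=d(u,v)$.

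First I would pin down $D$ near $2R$. For any $c\in C(G)$, the equalities $d(u,c),d(v,c)\le R$ give $D\le 2R$. For the lower bound, the mutual-distance property recalled in the proof of Proposition~\ref{thm:beam-core} (which, since $\delta$-thin implies $\delta$-hyperbolic by Lemma~\ref{hyp_thin}, applies here) yields $D\ge \diam(G)-2\delta$, and Lemma~\ref{lem:diam-rad} yields $\diam(G)\ge 2R-2\delta-1$. Combining these gives $D\ge 2R-4\delta-1$, hence $\lfloor D/2\rfloor \ge R-2\delta-1$.

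Next, fix $c\in C(G)$ and, by swapping $u$ and $v$ if needed, assume $A:=d(u,c)\ge B:=d(v,c)$. Form the geodesic triangle $\Delta(u,v,c)$ from the given $(u,v)$-geodesic (which contains $m$) and arbitrary $(u,c)$- and $(v,c)$-geodesics. The Gromov product $\alpha_u=(D+A-B)/2$ then satisfies $\alpha_u\ge D/2$, so picking the middle vertex $m$ with $d(u,m)=\lfloor D/2\rfloor\le \alpha_u$ places the image of $m$ on the $u$-branch of the tripod. The $\delta$-thinness of $\Delta(u,v,c)$ therefore produces a vertex $p\in[u,c]$ with $d(u,p)=d(u,m)$ and $d(m,p)\le\delta$.

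The last step is a direct computation:
\[
d(c,p)\;=\;A-d(u,m)\;\le\;R-\lfloor D/2\rfloor\;\le\;R-(R-2\delta-1)\;=\;2\delta+1,
\]
and the triangle inequality gives $d(c,m)\le d(c,p)+d(p,m)\le 3\delta+1\le 4\delta+1$. Since $c\in C(G)$ was arbitrary, $C(G)\subseteq B(m,4\delta+1)$. The only subtle point is the parity case when $D$ is odd and $A=B$, so that the ``other'' middle vertex sits just past $\alpha_u$ in the tripod; this is handled symmetrically by redoing the tripod argument in the $v$-branch, where the analogous bound $d(c,q)\le 2\delta+1$ for some $q\in[v,c]$ with $d(m,q)\le\delta$ produces the same final inequality. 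I expect this bookkeeping (parity of $D$, the two possible middle vertices, and the interplay $\lfloor D/2\rfloor$ vs.\ $\lceil D/2\rceil$) to be the one place where care is needed; the geometric content is entirely in the two inequalities $D\ge 2R-4\delta-1$ and the $\delta$-thin tripod comparison.
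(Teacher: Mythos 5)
Your proof is correct, and it takes a genuinely different route from the paper's. The paper proves the same lower bound $d(u,v)\geq 2\rad(G)-4\delta-1$, but then switches to the four-point condition: it compares the three distance sums $d(c,m)+d(u,v)$, $d(u,m)+d(c,v)$, $d(v,m)+d(c,u)$, and uses the fact that (by Lemma~\ref{hyp_thin}) a graph with $\delta$-thin triangles is $\delta$-hyperbolic, so the two largest sums differ by at most $2\delta$. You instead project $m$ onto $[u,c]$ (or $[v,c]$) via the tripod of the $\delta$-thin triangle $\Delta(u,v,c)$, and bound $d(c,m)$ by $d(c,p)+d(p,m)$. The two approaches buy different things: the paper's four-point argument is case-free (no WLOG on which side of the tripod center $m$ falls) but, because it passes through the weaker four-point inequality, it only yields $d(c,m)\le 4\delta+1$; your direct use of thinness gives $d(c,m)\le 3\delta+1$, strictly better than the proposition's stated radius, at the cost of the parity bookkeeping you flag (which you handle correctly — when $m$ lies just past $\alpha_u$, you run the symmetric argument in the $v$-branch, and the bound $\lfloor D/2\rfloor \ge R-2\delta-1$ applies to whichever of $d(u,m),d(v,m)$ is the smaller one). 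Both use $d(u,v)\ge\diam(G)-2\delta$ and Lemma~\ref{lem:diam-rad} in the same way. Your argument is sound, and in fact tightens the stated constant from $4\delta+1$ to $3\delta+1$.
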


\begin{proof} As mentioned in the proof of Proposition \ref{thm:beam-core}, $d(u,v)\geq \diam(G)-2\delta$ holds. Therefore, by Lemma~\ref{lem:diam-rad}, $d(u,v)\geq 2\rad(G)-4\delta-1$, and hence, without loss of generality, we may assume that $d(u,m)\geq \rad(G)-2\delta$ and $d(v,m)\geq \rad(G)-2\delta-1$. Consider an arbitrary vertex $c\in C(G)$ and three distance sums: $S_1=d(c,m)+d(u,v)$,  $S_2=d(u,m)+d(c,v)$, $S_3=d(v,m)+d(c,u)$. If $S_2>S_1$ (analogously, if $S_3>S_1$), then $d(c,m)<d(u,m)+d(c,v)-d(u,v)=d(c,v)-d(v,m)\leq \rad(G)-\rad(G)+2\delta+1=2\delta+1$, i.e., $d(c,m)\leq 2\delta$.

Thus, we may assume that $S_1$ is the largest sum. Without loss of generality, we may assume also that $S_2\geq S_3$ (the case when $S_3\geq S_2$ is similar). As, by Lemma~\ref{hyp_thin}, $G$ is a $\delta$-hyperbolic graph, we have
$2\delta\geq S_1-S_2=d(c,m)+d(u,v)-d(u,m)-d(c,v)=d(c,m)+d(v,m)-d(c,v)\geq d(c,m)+ \rad(G)-2\delta-1-\rad(G)=d(c,m)-2\delta-1$, i.e.,
$d(c,m)\leq 4\delta+1$.
\end{proof}

\section{Hitting sets and packings for $(\kappa,\epsilon)$-quasiconvex sets}

Let $G=(V,E)$ be a graph with $\delta$-thin triangles and  $\kappa>0,\epsilon\ge 0,$ and $r\ge \epsilon+2\delta$ be three nonnegative integers. Set as before $r^*:=r+\epsilon+3\delta$ and let $r':=r^*+\epsilon+3\delta$.  A $(\kappa,\epsilon)$-{\it quasiconvex set} is a collection $\kq_i = \{Q_i^1,Q_i^2,\hdots, Q_i^\kappa\}$ of $\kappa$ (not necessarily disjoint) $\epsilon$-quasiconvex sets of $G.$ Let  $\kqm = \{\kq_1,\kq_2, \hdots \kq_m\}$ be a set-family whose members are $(\kappa,\epsilon)$-quasiconvex sets of $G$. In this section, we establish a relationship between the maximum number $\pi_r(\kqm)$ of pairwise $2r$-apart $(\kappa,\epsilon)$-quasiconvex sets in $\kqm$ and the minimum number $\tau_{r'}(\kqm)$ of balls of radius $r'$ hitting all subsets of $\kqm$.  More precisely, we prove the following result:

\begin{theorem}\label{kappa_quasiconvex} Let   $\kqm=\{ \kq_1, \ldots, \kq_m\}$
be a family of $(\kappa,\epsilon)$-quasiconvex sets of a graph $G=(V,E)$ with $\delta$-thin geodesic triangles. Then $\tau_{r'}(\kqm)\le 2\kappa^2\pi_r(\kqm)$.  Moreover an $r$-packing ${\mathcal P}$ and an $r'$-hitting set $T$ of $\kqm$ such that $|T|\le 2\kappa^2|{\mathcal P}|$ can be constructed in polynomial time.
\end{theorem}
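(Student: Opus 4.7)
My plan is to prove Theorem~\ref{kappa_quasiconvex} by an iterative primal--dual greedy algorithm that generalizes the construction in the proof of Proposition~\ref{transversal_packing} from single quasiconvex sets to unions of $\kappa$ of them, closely following the local-ratio templates of \cite{ChEs,BYHaNaShSh}. Fix an arbitrary base vertex $z\in V$ and initialize $\kqm^*:=\kqm$, $T:=\varnothing$, $\mathcal P:=\varnothing$. At each iteration, compute $d(z,Q_i^a)$ for every $\kq_i\in\kqm^*$ and every $a\in\{1,\ldots,\kappa\}$, set $d(z,\kq_i):=\min_a d(z,Q_i^a)$, and pick $\kq_{i^*}\in\kqm^*$ maximizing $d(z,\kq_{i^*})$ (the ``furthest'' element of $\kqm^*$). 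Add $\kq_{i^*}$ to $\mathcal P$, append at most $2\kappa^2$ specific vertices to $T$ (constructed in the key step below), and delete from $\kqm^*$ every $\kq_j$ that is $2r$-close to $\kq_{i^*}$. Iterate until $\kqm^*=\varnothing$.

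The central combinatorial step is to show that $2\kappa^2$ vertices suffice per iteration to $r'$-hit every deleted $\kq_j$. Enumerate the $\kappa^2$ index pairs $(a,b)\in\{1,\ldots,\kappa\}^2$ and, for each, try to cover every $\kq_j\in\kqm^*$ for which $d(Q_{i^*}^a,Q_j^b)\le 2r$ using only two vertices of $V$. Partition such $\kq_j$ into the \emph{near} subclass, $d(z,Q_j^b)\le d(z,Q_{i^*}^a)$, and the \emph{far} subclass, $d(z,Q_j^b)>d(z,Q_{i^*}^a)$. For the near subclass, Lemma~\ref{lemma:p&c-2g-close} applied with $Q'':=Q_{i^*}^a$ and $Q':=Q_j^b$ yields a single vertex $c$---the point at distance $r$ from the $z$-closest vertex of $Q_{i^*}^a$ along a $(\cdot,z)$-geodesic---that depends only on $Q_{i^*}^a$ and $z$ and satisfies $d(c,Q_j^b)\le r^*\le r'$ for every near $\kq_j$, handling the near subclass with one ball. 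For the far subclass, choose $\kq_{j_0}$ maximizing $d(z,Q_j^b)$ within the subclass and apply Lemma~\ref{lemma:p&c-2g-close} with $Q'':=Q_{j_0}^b$ and $Q':=Q_{i^*}^a$ to obtain a vertex $c'$ satisfying $d(c',Q_{j_0}^b)\le r$ and $d(c',Q_{i^*}^a)\le r^*$. One then needs to prove that $B(c',r')$ meets every $Q_j^b$ in the far subclass, exploiting the $(\epsilon+2\delta)$-quasiconvexity of the neighborhood $N_r(Q_{i^*}^a)$ (given by Lemma~\ref{quasiconvex}), the extremality of $\kq_{j_0}$, and the Helly-type Theorem~\ref{Helly_quasiconvex}; the extra buffer $\epsilon+3\delta$ built into $r'=r^*+\epsilon+3\delta$ is exactly what is consumed by this geometric argument.

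Correctness of the output is then straightforward. Because every $\kq_j$ that is $2r$-close to $\kq_{i^*}$ is removed from $\kqm^*$ in the same iteration $\kq_{i^*}$ is added to $\mathcal P$, the members of $\mathcal P$ are pairwise $2r$-apart, so $\mathcal P$ is an $r$-packing of $\kqm$; and every $\kq_j\in\kqm$ is either itself in $\mathcal P$ (and hit trivially by a ball centered at any vertex of $\kq_j$) or was deleted when some $\kq_{i^*}$ was chosen, in which case one of the (at most) $2\kappa^2$ vertices added in that iteration has its $r'$-ball meeting a component of $\kq_j$. This gives $|T|\le 2\kappa^2|\mathcal P|$, and combined with $|\mathcal P|\le\pi_r(\kqm)$ by definition of the packing number, yields $\tau_{r'}(\kqm)\le 2\kappa^2\pi_r(\kqm)$ as required. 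Each iteration requires only $O(\kappa|\kqm^*|)$ distance queries plus elementary bookkeeping, so given the distance matrix of $G$ the whole procedure is polynomial.

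The main obstacle I foresee is the far-subclass covering. Unlike the near subclass, the sets $\{Q_j^b\}_j$ in the far subclass are in general not pairwise $2r$-close (each is $2r$-close only to the common set $Q_{i^*}^a$, and they may be mutually far inside $N_r(Q_{i^*}^a)$), which prevents a direct invocation of Theorem~\ref{Helly_quasiconvex} on them. The cleanest resolution I anticipate is to pass to the $r$-neighborhoods $N_r(Q_j^b)$, all of which are $(\epsilon+2\delta)$-quasiconvex by Lemma~\ref{quasiconvex} and all of which meet $N_r(Q_{i^*}^a)$, and then use the extremality of $\kq_{j_0}$ together with the $\delta$-thinness of geodesic triangles to argue that every such $Q_j^b$-geodesic toward $z$ funnels through a bounded region near $c'$, so that the inflation from $r^*$ to $r'=r+2\epsilon+6\delta$ suffices to absorb the slack.
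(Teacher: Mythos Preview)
Your greedy scheme has a genuine gap in the far-subclass step, and the resolution you sketch cannot work. The obstruction is not merely technical: no number of balls depending only on $r,\epsilon,\delta$ can $r'$-hit all the sets $Q_j^b$ in the far subclass, already for trees. Take $\kappa=2$, $\delta=\epsilon=r=0$ (so $r^*=r'=0$), a path $v_0,v_1,\ldots,v_{N+M}$, and $z=v_0$. Let $\kq_0=\{Q_0^1,Q_0^2\}$ with $Q_0^1=\{v_N,\ldots,v_{N+M}\}$, $Q_0^2=\{v_N\}$, and for $j=1,\ldots,M$ let $\kq_j=\{Q_j^1,Q_j^2\}$ with $Q_j^1=\{v_0\}$, $Q_j^2=\{v_{N+j}\}$. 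Then $\kq_0$ is the furthest member from $z$, every $\kq_j$ intersects $\kq_0$ via $Q_j^2\subset Q_0^1$, and the only active pair is $(a,b)=(1,2)$, where all $j$ land in the far subclass. The sets $Q_j^2=\{v_{N+j}\}$ are $M$ distinct singletons, so no bounded family of radius-$0$ balls meets them all. Of course each $\kq_j$ is hit by $v_0\in Q_j^1$, but your procedure---being local to $\kq_{i^*}$---never produces $v_0$: the component $Q_j^1$ that makes $\kq_j$ easy to hit is far from every $Q_{i^*}^a$ and hence appears in no $(a,b)$ class. Passing to $r$-neighborhoods or invoking ``funneling'' does not help; with $r'=0$ there is no slack to absorb, and the $Q_j^2$ remain pairwise apart.

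The paper avoids this locality obstruction by going through the LP relaxation and decoupling the packing and hitting constructions. For the hitting set, an optimal fractional $r$-hitting solution is used to select for each $\kq_i$ one component $Q_i$ carrying fractional mass at least $1/\kappa$; applying Proposition~\ref{transversal_packing} to the resulting family $\{Q_1,\ldots,Q_m\}$ of single $\epsilon$-quasiconvex sets then gives an integral $r^*$-hitting set of size at most $\kappa$ times the fractional optimum. For the packing, an averaging argument over an optimal fractional $r^*$-packing (showing that some $\kq_i$ has total fractional weight at most $2\kappa$ on the set of $\kq_j$'s that are $2r$-close to it) is iterated to produce an integral $r$-packing losing only a factor $2\kappa$. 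LP duality links the two bounds and yields the factor $2\kappa^2$. The global information carried by the fractional solutions---in particular, the choice of which component of each $\kq_i$ to target---is exactly what a purely local greedy scheme anchored at $\kq_{i^*}$ cannot supply.
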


The proof closely follows the proof of Theorem 2 of \cite{ChEs}. Denote by ${\mathcal{Q}}$ the collection of all $\epsilon$-quasiconvex sets participating in the  $(\kappa,\epsilon)$-quasiconvex sets of $\kqm$ (obviously, $|{\mathcal Q}|=\kappa\cdot m$).
For a vertex $v\in V,$ let $\Gamma[v]:=\{i: d(v,\kq_i)\le r\}$ be the set of indices
of all $(\kappa,\epsilon)$-quasiconvex sets $\kq_i$ at distance at most $r$ from $v$.
For any $i=1,\ldots,m,$ let $\Gamma[i]$ be the set of indices of all $(\kappa,\epsilon)$-quasiconvex  sets which cannot be included in a packing containing $\kq_i,$ i.e., $\Gamma[i]=\bigcup \{\Gamma[v]: v \in V, d(v,\kq_i)\le r\}.$ Clearly, if $j\in \Gamma[i],$ then $i\in \Gamma[j].$ Notice also that $i\in \Gamma[i].$

Let $\pi'_r(\kqm)$ and $\tau'_{r}(\kqm)$ be respectively the optima of the following fractional packing and fractional hitting set problems (they can be solved in polynomial time as a pair of dual linear programs):
$$
\begin{array}{lll}
&\left\lbrace
\begin{array}{llll}
    \text{max} & \hspace*{0.1cm}\sum_{i=1}^m x_i & \\
    \text{s.t.}& \hspace*{0.1cm}\sum_{i \in \Gamma[v]} x_i & \le 1 & \hspace*{0.3cm}\forall\, v \in V\\
    &x_i & \ge 0 & \hspace*{0.3cm}\forall\, i=1,\ldots,m\\
\end{array}
\right. &\hspace*{2cm}\Pi_r(\kqm)
\vspace{0.3cm}\\
&\left\lbrace
\begin{array}{llll}
    \text{min} & \hspace*{0.1cm}\sum_{v \in V} y_v &\\
    \text{s.t.}& \hspace*{0.1cm}\sum_{v \in N_{r}(\kq_i)} y_v & \ge 1 & \hspace*{0.3cm}\forall\, i=1,\ldots,m\\
    &y_v & \ge 0 & \hspace*{0.1cm}\hspace*{0.3cm}\forall\, v \in V.\\
\end{array}
\right. &\hspace*{2cm}\Upsilon_{r}(\kqm)
\end{array}
$$

\begin{lemma}
\label{lemma2} If ${\bf x}=\{x_i: i=1,\ldots,m\}$ is an admissible
solution of  $\Pi_{r^*}(\kqm)$, then there exists a
$(\kappa,\epsilon)$-quasiconvex set $\kq_i$ such that $\sum_{j\in \Gamma[i]} x_j\le 2\kappa.$
\end{lemma}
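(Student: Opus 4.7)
The plan is to choose an ``extremal'' index $i_0$ and then exhibit $2\kappa$ vertices $v_1,\ldots,v_{2\kappa}$ of $G$ with the property that for every $j\in\Gamma[i_0]$ there is some $t$ with $d(v_t,\kq_j)\le r^*$, so that each such $j$ appears in the constraint of $\Pi_{r^*}$ at $v_t$. Summing the $2\kappa$ feasibility inequalities $\sum_{i:\,d(v_t,\kq_i)\le r^*}x_i\le 1$ then gives $\sum_{j\in\Gamma[i_0]}x_j\le 2\kappa$. I would fix a vertex $z$ of $G$ and choose $i_0$ maximizing $d(z,\kq_{i_0})$. For each $j\in\Gamma[i_0]$, the sets $\kq_{i_0}$ and $\kq_j$ are $2r$-close by definition of $\Gamma[i_0]$, so there exist components $Q_{i_0}^{\ell(j)}$ and $Q_j^{p(j)}$ with $d(Q_{i_0}^{\ell(j)},Q_j^{p(j)})\le 2r$; I would pick the witness pair $(\ell(j),p(j))$ so as to satisfy $d(z,Q_{i_0}^{\ell(j)})\ge d(z,Q_j^{p(j)})$ whenever such a valid pair is available.

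The first $\kappa$ vertices are built in the spirit of the proof of Theorem~\ref{Helly_quasiconvex}. For each $\ell \in \{1,\ldots,\kappa\}$, let $x_\ell$ be a closest vertex of $Q_{i_0}^\ell$ to $z$ and let $c_\ell^+$ lie on an $(x_\ell,z)$-geodesic at distance $r$ from $x_\ell$. Applying Lemma~\ref{lemma:p&c-2g-close} with parameter $r$ to $Q''=Q_{i_0}^\ell$ and $Q'=Q_j^{p(j)}$ yields $d(c_\ell^+,Q_j^{p(j)})\le r^*$, and hence $d(c_\ell^+,\kq_j)\le r^*$, for every index $j$ with $\ell(j)=\ell$ that falls in the ``Case A'' configuration $d(z,Q_{i_0}^\ell)\ge d(z,Q_j^{p(j)})$. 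Thus $c_1^+,\ldots,c_\kappa^+$ collectively absorb all Case A indices of $\Gamma[i_0]$, contributing at most $\kappa$ to the sum we want to bound.

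For the remaining ``Case B'' indices (those $j$ for which no valid close pair satisfies the above inequality), I would apply Lemma~\ref{lemma:p&c-2g-close} symmetrically, with $Q''=Q_j^{p(j)}$ and $Q'=Q_{i_0}^{\ell(j)}$, obtaining for each such $j$ a vertex within $r$ of $\kq_j$ and within $r^*$ of $Q_{i_0}^{\ell(j)}$. To use only $\kappa$ further vertices rather than one per Case B index, I would pick, for each $\ell$, a Case B index $j^*_\ell$ with $\ell(j^*_\ell)=\ell$ of maximal $d(z,Q_{j^*_\ell}^{p(j^*_\ell)})$ and let $c_\ell^-$ be the associated witness vertex on the $(y_{j^*_\ell},z)$-geodesic, where $y_{j^*_\ell}$ is a closest vertex of $Q_{j^*_\ell}^{p(j^*_\ell)}$ to $z$. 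A $\delta$-hyperbolicity argument applied to the $4$-tuple formed by $z$, $c_\ell^-$, and witness vertices in $Q_j^{p(j)}$ and $Q_{i_0}^\ell$, together with the extremality of both $z$-distances, should propagate the covering from the single pair $(Q_{j^*_\ell}^{p(j^*_\ell)}, Q_{i_0}^\ell)$ to every Case B pair with $\ell(j)=\ell$, giving $d(c_\ell^-,Q_j^{p(j)})\le r^*$ for all $j\in J_\ell^B$. Summing the $\Pi_{r^*}$-feasibility inequality at each of the $2\kappa$ vertices $c_1^\pm,\ldots,c_\kappa^\pm$ then delivers $\sum_{j\in\Gamma[i_0]}x_j\le 2\kappa$. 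The main obstacle is this last step: showing that a single vertex $c_\ell^-$ per component genuinely covers all Case B indices attached to $Q_{i_0}^\ell$ requires a delicate use of the $\delta$-thin triangle property controlling the Gromov products at $z$, and the factor $2$ in the bound $2\kappa$ reflects precisely this Case A/Case B dichotomy.
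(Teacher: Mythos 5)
Your approach is fundamentally different from the paper's, and it has a genuine gap. The paper proves Lemma~\ref{lemma2} by an averaging (``local-ratio'') argument in the spirit of Bar-Yehuda et al.: it sets $z(i,j):=x_ix_j$ for edges $ij$ of the conflict graph, observes that $\sum_{i}\sum_{j\in\Gamma[i]}z(i,j)$ counts each off-diagonal term twice, and bounds the sum by $2\sum_{i}\sum_{Q\in\kq_i}\sum_{j\in\Gamma^*[c_Q]}z(i,j)$, where $c_Q$ is the single vertex produced by Lemma~\ref{lemma:p&c-2g-close} for each component $Q$; feasibility for $\Pi_{r^*}$ caps each innermost sum by $1$, yielding $\sum_i x_i\sum_{j\in\Gamma[i]}x_j\le 2\kappa\sum_i x_i$, and pigeonhole then produces \emph{some} index $i$ with $\sum_{j\in\Gamma[i]}x_j\le 2\kappa$. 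Crucially, which $i$ works depends on $\mathbf x$; the argument never commits to a geometrically extremal index.

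You, by contrast, try to fix an $i_0$ in advance (the one with $d(z,\kq_{i_0})$ maximal) and exhibit $2\kappa$ centers covering $\Gamma[i_0]$ by balls of radius $r^*$, independently of $\mathbf x$. This would prove a strictly stronger statement, and it is where the proof breaks. Your ``Case A'' step is fine: $c^+_\ell$ depends only on $\ell$, and Lemma~\ref{lemma:p&c-2g-close} with $Q''=Q^\ell_{i_0}$ handles every Case~A index at once. But in ``Case B'' you need one vertex $c^-_\ell$ covering every Case~B index $j$ with $\ell(j)=\ell$, and the propagation step you invoke does not go through. To apply Lemma~\ref{lemma:p&c-2g-close} between $Q^{p(j^*_\ell)}_{j^*_\ell}$ and $Q^{p(j)}_j$, you would need these two sets to be $2r$-close to each other, whereas all you know is that each is $2r$-close to $Q^\ell_{i_0}$; if $Q^\ell_{i_0}$ has large diameter, the various $Q^{p(j)}_j$ can attach to far-apart portions of it, and the corresponding witness points $c_j$ from Lemma~\ref{lemma:p&c-2g-close} (each lying on a geodesic from the nearest point of $Q^{p(j)}_j$ to $z$, within $r^*$ of $Q^\ell_{i_0}$) can be spread along the entire $r^*$-neighborhood of $Q^\ell_{i_0}$, far beyond any single ball. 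Even in the degenerate case of subtrees of a tree ($\delta=\epsilon=0$) one can take $Q^\ell_{i_0}$ to be a long path with many pendant subtrees and see that no two fixed vertices hit all of them. The factor $2$ in the bound does not come from an A/B dichotomy but from the double counting $z(i,j)=z(j,i)$; you should abandon the attempt to pin down a specific $i_0$ and use the averaging argument instead.
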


\begin{proof} The proof of this result is inspired by the
averaging  argument used in the proof of Lemma 4.1 of
\cite{BYHaNaShSh}. Define a graph  $\mathbf{\Gamma}$ with $1,\ldots,m$ as
the set of vertices and in which $ij$ is an edge if and only if
$j\in \Gamma[i]$ (and consequently $i\in \Gamma[j]).$ For each edge $ij$ of
$\mathbf \Gamma$, set $z(i,j)=x_i\cdot x_j.$ Since $i\in \Gamma[i],$ define
$z(i,i)=x^2_i.$ In the sum $\sum_{i=1}^m\sum_{j \in \Gamma[i]}z(i,j)$
every $z(i,j)$ is counted twice. On the other hand, an upper bound
on this sum can be obtained in the following way.  Let $t$ be any vertex of $G.$
Pick any  $\epsilon$-quasiconvex set $Q$ in the family $\kqa$.   By Lemma \ref{lemma:p&c-2g-close},
there exists a vertex $c_Q$ at distance at most $r^*$ from every $\epsilon$-quasiconvex set $Q'$ in $\kqa$ that is $2r$-close to $Q$ and such that $d(t,Q')\le d(t,Q)$.
Let $\Gamma^*[c_Q]$ be the set of indices of all $(\kappa,\epsilon)$-quasiconvex sets which are at distance at most $r^*$ from $c_Q.$ Now, for each $(\kappa,\epsilon)$-quasiconvex set $\kq_i$ consider its collection of $\epsilon$-quasiconvex sets, and for each $\epsilon$-quasiconvex set $Q$ in the collection $\kq_i$, add up $z(i,j)$ for all $j\in \Gamma^*[c_Q],$ and then multiply the total sum by 2. This way we
computed the sum $2\sum_{i=1}^m\sum_{Q \in \kq_i}\sum_{j \in
\Gamma^*[c_Q]}z(i,j).$ We assert that this suffices. Indeed, pick
any  $z(i,j)$ for an edge  $ij$ of the graph $\mathbf{\Gamma}.$ Thus the
$(\kappa,\epsilon)$-quasiconvex sets $\kq_i$ and $\kq_j$ contain two $2r$-close $\epsilon$-quasiconvex sets $Q$ and $Q'$.  Suppose without loss of generality that $d(t,Q')\le d(t,Q).$ Then necessarily $j\in \Gamma^*[c_Q]$ because $d(c_Q,Q')\le r^*.$ Hence the term $z(i,j)$ will appear at least once in the triple sum, establishing the required inequality
$$\sum_{i=1}^m\sum_{j \in \Gamma[i]}z(i,j) \le 2\sum_{i=1}^m\sum_{Q \in \kq_i}\sum_{j \in \Gamma^*[c_Q]}z(i,j).$$
Taking into account that $z(i,j)=x_i\cdot x_j=z(j,i),$ this
inequality can be rewritten in the following way:
$$\sum_{i=1}^mx_i\sum_{j \in \Gamma[i]}x_j \le 2\sum_{i=1}^m x_i\sum_{Q \in \kq_i}\sum_{j \in \Gamma^*[c_Q]}x_j.$$
Now, since $c_Q$ is at distance at most $r^*$ from any $\epsilon$-quasiconvex sets in  $\Gamma^*[c_Q]$ and ${\bf x}$ is an admissible solution of
$\Pi_{r^*}(\kqm),$ we conclude that $\sum_{j \in
\Gamma^*[c_Q]}x_j\le 1.$ Thus $\sum_{Q \in \kq_i}\sum_{j \in
\Gamma^*[c_Q]}x_j\le \kappa$ and we deduce
that $\sum_{i=1}^mx_i\sum_{j \in \Gamma[i]}x_j \le 2\kappa\sum_{i=1}^m
x_i.$ Hence, there exists $\kq_i$ such that $x_i\sum_{j\in
\Gamma[i]}x_j\le 2\kappa x_i,$ yielding $\sum_{j\in \Gamma[i]}x_j\le 2\kappa.$
\end{proof}

\begin{lemma} \label{lemma3} It is possible to construct in polynomial time an
integer admissible solution ${\bf x}^*$ of the linear program
$\Pi_r(\kqm)$ of size at least $\pi'_{r^*}(\kqm)/(2\kappa)$.
\end{lemma}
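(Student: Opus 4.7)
The plan is to mimic the local-ratio / greedy scheme used in the proof of Theorem 2 of \cite{ChEs}, applying Lemma~\ref{lemma2} iteratively to a fractional optimum of $\Pi_{r^*}(\kqm)$. First I would solve $\Pi_{r^*}(\kqm)$ in polynomial time: this LP has $m$ variables and $|V|$ covering-type constraints, so it is solvable by any polynomial-time LP algorithm, yielding an admissible $\mathbf{x}$ with $\sum_{i=1}^{m} x_i = \pi'_{r^*}(\kqm)$.

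Next I would construct $\mathbf{x}^*$ greedily. Initialize $\mathbf{x}^* := \mathbf 0$ and $I := \{1,\dots,m\}$. While $\sum_{i\in I} x_i > 0$, apply Lemma~\ref{lemma2} to the restricted subfamily $\{\kq_i : i\in I\}$ with the restricted vector $(x_i)_{i\in I}$. This restricted vector is still admissible for the correspondingly restricted $\Pi_{r^*}$, since zeroing coordinates only relaxes the constraints $\sum_{i\in\Gamma[v]\cap I} x_i \le 1$. The averaging argument in the proof of Lemma~\ref{lemma2} in fact gives an index $i\in I$ with $x_i > 0$ and $\sum_{j\in\Gamma[i]\cap I} x_j \le 2\kappa$ (otherwise every $i$ with $x_i > 0$ contributes more than $2\kappa x_i$ to the weighted sum, contradicting the averaging inequality). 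Set $x_i^* := 1$, and update $I \leftarrow I\setminus\Gamma[i]$. Since $i\in\Gamma[i]$, at least $x_i > 0$ leaves the support at every step, so the loop terminates in at most $m$ iterations.

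To verify admissibility of $\mathbf{x}^*$ for $\Pi_r(\kqm)$, suppose $x^*_{i_1} = x^*_{i_2} = 1$ with $i_1\neq i_2$, say $i_1$ selected before $i_2$. Then $i_2$ was still in $I$ when $i_2$ was chosen, so $i_2\notin \Gamma[i_1]$, i.e. $d(\kq_{i_1},\kq_{i_2}) > 2r$. If some $v$ satisfied $i_1,i_2\in\Gamma[v]$, then $d(\kq_{i_1},\kq_{i_2}) \le d(v,\kq_{i_1}) + d(v,\kq_{i_2}) \le 2r$, a contradiction. Hence $\sum_{i\in\Gamma[v]} x^*_i \le 1$ for every $v\in V$, as required.

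For the size bound, note that at each iteration $\sum_{i} x_i^*$ increases by $1$ while $\sum_{i\in I} x_i$ decreases by exactly $\sum_{j\in\Gamma[i]\cap I} x_j \le 2\kappa$. Summing over all iterations until $\sum_{i\in I} x_i = 0$ gives $2\kappa\cdot\sum_i x_i^* \ge \pi'_{r^*}(\kqm)$, which is the desired inequality. The main subtlety — and the only point where care is needed — is the application of Lemma~\ref{lemma2} after some indices have been deleted: one must check that the restricted fractional vector remains admissible for the restricted $\Pi_{r^*}$ instance and that the averaging inequality produces an $i$ with strictly positive weight, so the algorithm makes progress; both facts follow immediately from the structure of the constraints and from Lemma~\ref{lemma2} itself.
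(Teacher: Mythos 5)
Your proof is correct and follows essentially the same scheme as the paper: solve $\Pi_{r^*}(\kqm)$, then iteratively apply Lemma~\ref{lemma2} to the residual instance, set the selected coordinate to $1$, and delete $\Gamma[i]$ — with the same admissibility argument (two selected sets cannot be $2r$-close because the second would have been deleted) and the same amortized cost bound of $2\kappa$ per selected index. The only deviation is a small refinement: you terminate as soon as the residual fractional mass vanishes and explicitly argue that Lemma~\ref{lemma2}'s averaging inequality produces an $i$ with $x_i>0$ when $\sum_{i\in I}x_i>0$ (the paper runs until $\kqm'$ is empty and leaves this implicit); both versions give the same bound.
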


\begin{proof} Let ${\bf x}=\{ x_1,\ldots,x_m\}$ be an optimal (fractional)
solution of the linear program $\Pi_{r^*}(\kqm)$ (it can be
found in polynomial time). We will iteratively use  Lemma
\ref{lemma2} to ${\bf x}$ to derive an integer solution ${\bf
x}^*=\{ x_1^*,\ldots,x_m^*\}$ for the linear program $\Pi_r(\kqm).$ The algorithm starts by setting $\kqm'\leftarrow\kqm$.
By Lemma \ref{lemma2} there exists a
$(\kappa,\epsilon)$-quasiconvex set $\kq_i\in \kqm'$ such that $\sum_{j \in \Gamma[i]}x_{j} \le
2\kappa.$ We set $x^*_i:=1$ and  $x^*_j:=0$ for all $j\in
\Gamma[i]\setminus \{ i\},$ then we remove all $(\kappa,\epsilon)$-quasiconvex sets $\kq_j$
with $j\in \Gamma[i]$ from $\kqm'.$ The algorithm continues with the
current set $\kqm'$ until it becomes empty. Notice
that in all iterations of the algorithm the restriction of $\bf x$
to the current collection $\kqm'$ remains an admissible solution of
the linear program  $\Pi_{r^*}(\kqm')$ defined by $\kqm'.$
This justifies the use of  Lemma \ref{lemma2} in all iterations of
the algorithm.

To show that ${\bf x}^*$ is an admissible solution of $\Pi_r(\kqm)$, suppose by way of contradiction that there exist two $2r$-close $(\kappa,\epsilon)$-quasiconvex sets $\kq_i$ and $\kq_j$ with
$x^*_i=1=x^*_j.$ Suppose that the algorithm selects $\kq_i$ before
$\kq_j.$  Consider the iteration when $x^*_i$ becomes 1. Since $j\in
\Gamma[i],$ at this iteration $x^*_j$ becomes 0 and $\kq_j$ is removed
from $\kqm'.$ Thus $x_j^*$ cannot become 1 at a later stage. This
shows that the $(\kappa,\epsilon)$-quasiconvex sets $\kq_i$ with $x^*_i=1$ indeed
constitute an $r$-packing for $\kqm.$

It remains to compare the costs of the solutions ${\bf x}$ and ${\bf
x}^*.$ For this, notice that according to the algorithm, for each
$(\kappa,\epsilon)$-quasiconvex set $\kq_i$ with $x^*_i=1$ we can define a  subset $\Gamma'[i]$
of $\Gamma[i]$ such that $i\in \Gamma'[i],$ $x^*_j=0$ for all $j\in
\Gamma'[i]\setminus\{ i\},$ and $\sum_{j\in \Gamma'[i]\cup \{ i\}}x_j\le
2\kappa.$ Hence, the $(\kappa,\epsilon)$-quasiconvex sets of $\kqm$ can be
partitioned into groups, such that each group contains a
$(\kappa,\epsilon)$-quasiconvex set selected in the integer solution and the total cost of
the fractional solutions of the sets from each group is at most
$2\kappa.$ This shows that $\sum_{i=1}^m x^*_i\ge
(\sum_{i=1}^mx_i)/(2\kappa).$
\end{proof}

\begin{lemma} \label{lemma4} It is possible to construct in polynomial time an
integer solution ${\bf y}^*$ of the linear program $\Upsilon_{r^*}(\kqm)$ of size at most  $\kappa \pi'_r(\kqm).$
\end{lemma}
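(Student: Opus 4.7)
My plan is to follow the iterative primal-dual / local-ratio strategy used in the proof of Lemma~\ref{lemma3} (and ultimately adapted from~\cite{ChEs,BYHaNaShSh}), but transposed from the packing side to the hitting side. First I would solve the LP $\Pi_r(\kqm)$ in polynomial time to obtain an optimal fractional packing $\mathbf x$ with $\sum_i x_i = \pi'_r(\kqm)$; by LP duality this also gives an optimal fractional hitting solution $\mathbf y$ of the same weight. The integer hitting set $T$ is then built iteratively: initialize $T \leftarrow \varnothing$ and $\kqm' \leftarrow \kqm$, and repeat until $\kqm'$ is empty. At each iteration, select a $(\kappa,\epsilon)$-quasiconvex set $\kq_i \in \kqm'$ together with a reference vertex $t_i \in V$, construct the $\kappa$ vertices $c_Q$ (one per $Q \in \kq_i$) as prescribed by Lemma~\ref{lemma:p&c-2g-close} with $z=t_i$, append all of them to $T$, and remove from $\kqm'$ every $\kq_j$ with $d(c_Q,\kq_j)\le r^*$ for some $Q \in \kq_i$.

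The accounting is done by a dual averaging argument: the selection rule for $\kq_i$ and $t_i$ must guarantee that the $\kappa$ just-added centers collectively cover a subfamily $\Lambda[i] \subseteq \kqm'$ of total $\mathbf x$-weight at least $1$. Assuming this, the ``remaining'' LP mass $\sum_{j \in \kqm'} x_j$ drops by at least $1$ per iteration, so the total number of iterations is at most $\sum_i x_i = \pi'_r(\kqm)$. Since each iteration contributes exactly $\kappa$ vertices to $T$, we obtain $|T| \le \kappa\,\pi'_r(\kqm)$, and every step runs in polynomial time because the LP is polynomial-size and the Lemma~\ref{lemma:p&c-2g-close}-construction of each $c_Q$ reduces to computing distances and a shortest path.

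The main technical obstacle is precisely the averaging step. Lemma~\ref{lemma2} gives the \emph{opposite} inequality --- a set $\kq_i$ whose $\Gamma$-neighborhood in $\mathbf x$ has \emph{small} mass, which is what Lemma~\ref{lemma3} exploits to extract a large integer packing. Here I need the reverse: an $\kq_i$ together with a reference vertex $t_i$ whose $\kappa$ Lemma~\ref{lemma:p&c-2g-close}-centers jointly hit an $\mathbf x$-neighborhood of \emph{large} mass. I would try to derive this from the complementary-slackness conditions at a tight vertex $v^*$ of the optimal primal-dual pair (where $\sum_{i \in \Gamma_r[v^*]} x_i = 1$), using $v^*$ to steer the choice of $\kq_i$ (containing an $\epsilon$-quasiconvex set $Q$ closest to $v^*$) and of $t_i$ (e.g.\ $t_i = v^*$), so that the geometric coverage guarantee of Lemma~\ref{lemma:p&c-2g-close} aligns with the LP-tight constraint and the removed subfamily $\Lambda[i]$ accounts for at least one unit of fractional packing mass. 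Verifying that this can be done simultaneously for every residual family $\kqm'$ (so that $\mathbf x$ restricted to $\kqm'$ remains feasible and optimal enough for the argument to recurse) is the delicate point, and will likely require a careful case analysis depending on whether the chosen $Q \in \kq_i$ is the closest or a farther element to $t_i$, mirroring the two orientations of the ``WLOG $d(t,Q')\le d(t,Q)$'' dichotomy that drives Lemma~\ref{lemma2}.
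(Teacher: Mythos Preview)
Your proposal has a genuine gap, and the paper's proof takes a completely different (and much shorter) route.

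\textbf{The gap in your approach.} Your mass-decrease argument needs, at \emph{every} iteration, a choice of $\kq_i$ and centers whose removed subfamily $\Lambda[i]$ carries $\mathbf x$-mass at least $1$. Complementary slackness does give you a tight vertex $v^*$ with $\sum_{j\in\Gamma[v^*]}x_j=1$, but only for the \emph{original} optimal $\mathbf x$. After you delete $\Lambda[i]$, the restriction of $\mathbf x$ to $\kqm'$ is still feasible for $\Pi_r(\kqm')$ but in general no longer optimal, so there is no reason any primal constraint remains tight, and the argument stalls at the second iteration. Re-solving the LP at each step does not fix this either: the new optimum $\pi'_r(\kqm')$ need not drop by $1$ when you remove a subfamily of $\mathbf x$-mass $1$, because the re-optimized solution can shift mass onto the surviving sets. (A three-set example where the sets are pairwise $2r$-close but share no common $r$-neighbor already exhibits this.) So the ``delicate point'' you flag is not a detail to be checked---it is the whole difficulty, and the tools you list (Lemma~\ref{lemma2}, Lemma~\ref{lemma:p&c-2g-close}, complementary slackness) do not supply it.

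\textbf{What the paper does instead.} The paper avoids iteration altogether by a one-shot reduction to the single-quasiconvex case. Solve $\Upsilon_r(\kqm)$ for an optimal fractional $\mathbf y$; since $\sum_{v\in N_r(\kq_i)}y_v\ge 1$ and $\kq_i$ is a union of $\kappa$ quasiconvex pieces, pigeonhole gives one piece $Q_i\in\kq_i$ with $\sum_{v\in N_r(Q_i)}y_v\ge 1/\kappa$. Thus $\kappa\mathbf y$ is a fractional $r$-hitting set for the family $\mathcal R=\{Q_1,\ldots,Q_m\}$ of \emph{single} $\epsilon$-quasiconvex sets, whence $\tau'_r(\mathcal R)\le\kappa\,\tau'_r(\kqm)$. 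Now apply Proposition~\ref{transversal_packing} directly to $\mathcal R$: it produces in polynomial time an integer $r^*$-hitting set $T$ of $\mathcal R$ with $|T|\le\pi_r(\mathcal R)\le\pi'_r(\mathcal R)=\tau'_r(\mathcal R)\le\kappa\,\tau'_r(\kqm)=\kappa\,\pi'_r(\kqm)$. Since each $Q_i\subseteq\kq_i$, the set $T$ is also an $r^*$-hitting set for $\kqm$. No averaging, no recursion---the $\kappa$ factor comes from the pigeonhole step, and all the geometric work is already encapsulated in Proposition~\ref{transversal_packing}.
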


\begin{proof} Let ${\bf y}=\{ y_v: v\in V\}$ be an optimal (fractional)
solution of the linear program $\Upsilon_{r}(\kqm).$ Since $\sum_{v
\in \Gamma_r(\kq_i)}y_v\ge 1$ for all $i=1,\ldots,m,$ each $(\kappa,\epsilon)$-quasiconvex set $\kq_i$ contains an $\epsilon$-quasiconvex set, which we will denote by $Q_i$, such that $\kappa\sum_{v \in  N_r(Q_i)}
y_v\ge 1.$  Set $\mathcal{R} := \{ Q_1,\ldots,Q_m\}.$ Notice that
${\bf y}'=\{ y'_v: v\in V\}$ defined by setting  $y'_v=\kappa\cdot
y_v$ if $v\in \bigcup_{i=1}^m Q_i$ and $y'_v=0$ otherwise, is a
fractional $r$-hitting set for the family $\{Q_1,\ldots, Q_m\}.$ Thus
the cost of ${\bf y}'$ is at least $\tau'_r(\mathcal{R})=\pi'_r(\mathcal{R}).$
Notice also that the cost of ${\bf y}'$ is at most $\kappa$ times
the cost of ${\bf y}.$  By Proposition \ref{transversal_packing}, we can
construct in polynomial time a set $T$ of size at most $\pi_r(\mathcal{R})$
which is an $r^*$-hitting set of $\mathcal{R}.$ Let ${\bf y}^*=\{ y_v:
v\in V\}$ be defined by setting $y_v^*:=1$ if $v\in T$ and $y^*_v:=0$
otherwise. Since $\pi_r(\mathcal{R})\le \pi'_r(\mathcal{R}),$ putting all things
together, we obtain:
$$\sum_{v\in V}y^*_v=|T|\le \pi_r(\mathcal{R})\le \pi'_r(\mathcal{R})=\tau'_r(\mathcal{R})\le \sum_{v\in V}
y'_v\le \kappa\sum_{v\in V}y_v=\kappa\tau_r'({\kqm}).$$
\end{proof}

Now, we are ready to complete the proof of Theorem \ref{kappa_quasiconvex}. According to Lemma \ref{lemma3} we can construct in polynomial time an integer  solution ${\bf x}^*$ for $\Pi_r(\kqm)$ of size at least $\pi'_{r^*}(\kqm)/(2\kappa).$ Let ${\mathcal P}=\{ \kq_i: x^*_i=1\}.$ On the other hand, applying Lemma \ref{lemma4} with the radius $r^*$ instead of $r$, we
can construct in polynomial time an integer solution  ${\bf y}^*$ of
the linear program $\Upsilon_{r'}(\kqm)$ of size at most
$\kappa\tau'_{r^*}(\kqm).$ Let $T=\{ v\in V: y^*_v=1\}.$
Since, by duality, $\tau'_{r^*}(\kqm) = \pi'_{r^*}(\kqm)$, we deduce that $|T|\le 2\kappa^2|{\mathcal P}|,$ as required.

\medskip\noindent

{\bf Acknowledgement.} This work has been carried out thanks to the support of ARCHIMEDE LabEx (ANR-11-LABX- 0033) and the A$^*$MIDEX project (ANR-11-IDEX-0001-02) funded by the ``Investissements d'Avenir'' French government program managed by the ANR. F.F.D. thanks the Laboratoire d'Informatique Fondamentale, Aix-Marseille Universit\'e for the hospitality during his visit.

\bibliographystyle{amsalpha}

\end{document}